\def\@ACM@checkaffil{
	\if@ACM@instpresent\else
	\ClassWarningNoLine{\@classname}{No institution present for an affiliation}%
	\fi
	\if@ACM@citypresent\else
	\ClassWarningNoLine{\@classname}{No city present for an affiliation}%
	\fi
	\if@ACM@countrypresent\else
	\ClassWarningNoLine{\@classname}{No country present for an affiliation}%
	\fi
}
\renewcommand\footnotetextcopyrightpermission[1]{} 
\newcommand{\multiline}[1]{%
	\begin{tabularx}{\dimexpr\linewidth-\ALG@thistlm}[t]{@{}X@{}}
		#1
	\end{tabularx}
}
\algrenewcommand\textproc{}
\newcommand{\main}{\textsf{Camel}}
\newcommand{\baseline}{{\main}-\textsf{vec}}
\newcommand{\revise}{\textcolor{black}}
\theoremstyle{plain}
\newtheorem{thm}{Theorem}
\newtheorem{lem}[thm]{Lemma}
\theoremstyle{definition}
\newtheorem{defn}{Definition}
\begin{document}
        \pagenumbering{gobble}
        \copyrightyear{2024}
        \acmYear{2024}
        \setcopyright{acmlicensed}\acmConference[CCS '24]{Proceedings of the 2024 ACM SIGSAC
        Conference on Computer and Communications Security}{October 14--18, 2024}{Salt Lake City, UT,
        USA}
        \acmBooktitle{Proceedings of the 2024 ACM SIGSAC Conference on Computer and Communications
        Security (CCS '24), October 14--18, 2024, Salt Lake City, UT, USA}
        \acmDOI{10.1145/3658644.3690200}
        \acmISBN{979-8-4007-0636-3/24/10} 
	\title{{\main}: Communication-Efficient and Maliciously Secure Federated Learning in the Shuffle Model of Differential Privacy}
	
        \titlenote{This is a full version of the paper originally published in ACM CCS 2024 \cite{camel-submission}.}

	\author{Shuangqing Xu}
	\affiliation{
		\institution{Harbin Institute of Technology, Shenzhen}
            \city{Shenzhen}
            \country{China}}

	\author{Yifeng Zheng}
	\authornote{Corresponding authors.}
	\affiliation{
		\institution{Harbin Institute of Technology, Shenzhen}
            \city{Shenzhen}
            \country{China}}

	\author{Zhongyun Hua}
        \authornotemark[2]
	\affiliation{
		\institution{Harbin Institute of Technology, Shenzhen}
            \city{Shenzhen}
            \country{China}}

	\renewcommand{\shortauthors}{Shuangqing Xu, Yifeng Zheng, \& Zhongyun Hua}
	
	\begin{abstract}
		
		Federated learning (FL) has rapidly become a compelling paradigm that enables multiple clients to jointly train a model by sharing only gradient updates for aggregation, without revealing their local private data. In order to protect the gradient updates which could also be privacy-sensitive, there has been a line of work studying local differential privacy (LDP) mechanisms to provide a formal privacy guarantee. With LDP mechanisms, clients locally perturb their gradient updates before sharing them out for aggregation. However, such approaches are known for greatly degrading the model utility, due to heavy noise addition. To enable a better privacy-utility trade-off, a recently emerging trend is to apply the \textit{shuffle model} of DP in FL, which relies on an intermediate shuffling operation on the perturbed gradient updates to achieve privacy amplification. Following this trend, in this paper, we present {\main}, a new communication-efficient and maliciously secure FL framework in the shuffle model of DP. {\main} first departs from existing works by ambitiously supporting integrity check for the shuffle computation, achieving security against malicious adversary. Specifically, {\main} builds on the trending cryptographic primitive of secret-shared shuffle, with custom techniques we develop for optimizing system-wide communication efficiency, and for lightweight integrity checks to harden the security of server-side computation. In addition, we also derive a significantly tighter bound on the privacy loss through analyzing the R\'enyi differential privacy (RDP) of the overall FL process. Extensive experiments demonstrate that {\main} achieves better privacy-utility trade-offs than the state-of-the-art work, with promising performance. 
	\end{abstract}
	
	\keywords{federated learning, differential privacy, secret sharing} 
	
	\settopmatter{printfolios=true}
	
	\maketitle

	\section{Introduction}
	Federated learning (FL) has recently emerged as an appealing paradigm \cite{fedavg} that allows clients to jointly train a model by sharing gradient updates instead of their local data. 
	However, recent works have shown that the shared gradient updates can still leak private information about clients' training datasets \cite{ZhuLH19}. 
	To mitigate this issue, differential privacy (DP) \cite{DworkR14} has been widely incorporated within FL to provide a formal privacy guarantee. 
	
	Initially, DP is studied in the centralized context where a \textit{trusted} server centrally collects raw training data from clients \cite{iyengar2019towards, DPSGD, wang2019differentially}. 
	In contrast, the notion of local differential privacy (LDP) \cite{KasiviswanathanLNRS11} is more appropriate for distributed learning \cite{Truex0CGW20,ChamikaraLCNGBK22}. 
	In the LDP setting, each client locally adds sufficient noise to its gradient update, and then sends the noisy gradient update to the \textit{untrusted} server. 
	Although LDP mechanisms offer more appealing privacy properties, they usually come with a significant sacrifice in model utility compared to centralized DP mechanisms \cite{kairouz21}. 
	
	Recently, there has emerged a new trend of applying the \emph{shuffle model} \cite{BalleBGN19,ErlingssonFMRTT19} of DP in distributed learning \cite{liu2021flame,Erlingsson20,AISTATS21,NeurIPS21} to enable a significantly better privacy-utility trade-off. 
	In the shuffle model of DP, each client locally perturbs its message, and then sends it to a shuffler, which sits between the clients and the server. 
	The shuffler, which is assumed not to collude with the server, randomly shuffles the noisy messages from clients and then forwards them to the server.
	%
	%
	In this way, the server in the shuffle model of DP cannot associate messages with clients.  
	Such anonymity amplifies privacy in that less local noise is required for achieving the same privacy guarantee as LDP, thereby enabling a better privacy-utility trade-off. 
	While previous works \cite{liu2021flame,Erlingsson20,AISTATS21,NeurIPS21} offer potential solutions for applying the shuffle model of DP in FL, they suffer from three key limitations as described below.

	Firstly, all these works assume a shuffler that honestly shuffles the perturbed gradients from clients and forwards them to the server for model update. 
	They do not support verifiability for the shuffle computation and thus would fail to provide integrity (as well as privacy) guarantees in the presence of a malicious shuffler that tampers with the shuffle computation.
	%
	%
	%
	Secondly, for accounting for the overall privacy loss of repeated interactions in the training process, most existing works only characterize the approximate DP of each training iteration and utilize the advanced composition theorem \cite{DworkR14} to quantify the privacy leakage of multiple iterations, which is known to be loose compared to the analytical results using R\'enyi differential privacy (RDP) \cite{RDP}. 
	Thirdly, most existing works only rely on the privacy amplification effect from shuffling, overlooking the integration with another strategy---\textit{subsampling}---that can also benefit privacy amplification \cite{BalleBG18}.

	In light of the above, we present {\main}, a new communication-efficient and maliciously secure FL framework in the shuffle model of DP. 
	To achieve strong privacy amplification, {\main} leverages the synergy of the strategies of both shuffling and subsampling.
	Regarding achieving privacy amplification by shuffling, {\main} is designed to securely and efficiently realize the shuffle of (noisy) gradients from the clients, with robustness against malicious adversary. 

	Our starting point is to leverage a state-of-the-art secret-shared shuffle protocol \cite{NDSS22} in the three-server honest-majority setting, which outperforms conventional mixnet-based verifiable shuffle approaches \cite{mixnet,HooffLZZ15}. 
	However, simply utilizing this protocol to shuffle (noisy) gradients in {\main} still does not promise high efficiency, due to the communication bottleneck resulting from directly secret-sharing and (securely) shuffling the large-sized gradient vectors in FL. 
	%
	To tackle this efficiency challenge, our key insight is to delicately build on the widely used LDP mechanism in \cite{duchi2018minimax} for local gradient perturbation and losslessly compress a noisy high-dimensional gradient vector into a random seed and a sign bit. 
	%
	%
	%
	With our proposed mechanism for producing compressed noisy gradients at clients, {\main} can have the secret-shared shuffle efficiently applied for small-sized random seeds and sign bits, rather than for high-dimensional noisy gradient vectors.
	This notably reduces both client-server and inter-server communication costs.

	We then consider how to achieve security against a malicious server in {\main}.
	It is noted that the protocol in \cite{NDSS22} involves a verification mechanism to detect the misbehavior of a malicious server during the shuffle computation. 
	However, we observe that it is vulnerable to the online selective failure attack as shown recently in \cite{NDSS24}.
	This prevents us from directly using the verification mechanism in \cite{NDSS22} to achieve malicious security for the shuffle computation.
	Additionally, it is worth noting that in {\main} the servers not only need to perform the (secure) shuffle, but also to perform several post-shuffle operations, including sampling, decompression, and aggregation of (noisy) gradients. 
	The integrity of these operations should also be ensured.
	Therefore, to achieve malicious security for the overall server-side computation in {\main}, we devise a series of lightweight integrity checks for the secret-shared shuffle computation as well as the post-shuffle operations.
	
	With the strategies of shuffling and subsampling adequately instantiated in {\main}, we compose privacy amplification by shuffling theorem \cite{BalleBGN19} and privacy amplification by subsampling theorem \cite{BalleBG18} to analyze the coupled privacy amplification and then obtain a tighter bound on the privacy loss on each FL iteration. 
	By tightly composing all FL iterations using RDP and analyzing the RDP of the overall FL process, we derive a tighter bound on the total privacy loss compared to existing approximate DP bounds \cite{AISTATS21,Erlingsson20}. 
	
	We implement the protocols of {\main} and empirically evaluate {\main}'s utility and efficiency on two widely-used real-world datasets (MNIST \cite{LeCunBBH98} and FMNIST \cite{FMNIST}). 
	The results demonstrate the significant performance advantage of our proposed noisy gradient compression mechanism over the baseline without compression.
	For example, for a single FL iteration on the FMNIST dataset, {\main} reduces the system-wide communication cost (including client-server and inter-server communication) by 20,029$\times$ and achieves an improvement of 1,607$\times$ in server-side overall runtime.
	Compared to the state-of-the-art \cite{AISTATS21}, {\main} achieves better privacy-utility trade-offs. 
	For example, as tested over the MNIST dataset, {\main} achieves an accuracy of 84.83\%, while the work \cite{AISTATS21} attains 78.69\% accuracy under the same budget of $\varepsilon=10$.
	
	We highlight our main contributions below: 
	\vspace{-2pt}
	\begin{itemize}
		
		\item We present {\main}, a new communication-efficient and maliciously secure FL framework in the shuffle model of DP, which delicately bridges the advancements in secure multi-party computation (MPC) and shuffle model of DP.
		
		\item We leverage the trending cryptographic primitive of secret-shared shuffle and introduce techniques for compressing gradients perturbed under LDP to optimize system-wide communication efficiency, and for lightweight integrity checks to harden the security of server-side computation.


		%
		
		\item We derive a significantly tighter bound on the privacy loss by analyzing the RDP of the overall FL process compared to existing approximate DP bounds. 
		
		\item We formally analyze the privacy, communication, convergence, and security of {\main}. 
		We implement and empirically evaluate {\main}'s utility and efficiency on two widely-used real-world datasets. 
		The results demonstrate that {\main} achieves better privacy-utility trade-offs than the state-of-the-art work, with promising performance. 
		%
		
	\end{itemize}

	The rest of this paper is organized as follows. 
	Section \ref{sec:related-work} discusses the related work. 
	Section \ref{sec:preliminaries} introduces some preliminaries. 
	Section \ref{sec:prob-statement} gives the problem statement. 
	Section \ref{sec:protocol} presents the detailed design of {\main}. 
	Section \ref{sec:theoretical-results} provides the privacy (analytical bound), communication, and convergence analysis, followed by the experimental evaluation in Section \ref{sec:exp}. 
	Section \ref{sec:discussion} discusses other new concepts and possible extensions. 
	Section \ref{sec:conclusion} concludes the whole paper.
	
	
	\section{Related Work}
	\label{sec:related-work}
	As a rigorous measure of information disclosure, differential privacy (DP) has been studied extensively for private learning in the centralized setting \cite{iyengar2019towards, DPSGD, wang2019differentially}. 
	For example, Abadi et al. propose DP-SGD \cite{DPSGD} to train models via differentially private stochastic gradient descent (SGD) with provably limited information leakage. 
	Abadi et al. \cite{DPSGD} also propose and use a stronger accounting method called \emph{moments accountant} to obtain much tighter estimates of the privacy loss. 
	%
	%
	However, this line of work assumes that the raw training data of clients is collected by a \textit{trusted} server, a condition that is challenging to satisfy in real-world applications given the growing awareness of data privacy and increasingly strict data regulations.
	
	On the other hand, there has been growing interests in applying LDP mechanisms in FL \cite{Truex0CGW20, ChamikaraLCNGBK22, miao2022compressed}. 
	In the LDP framework, the gradients are locally perturbed by the clients with sufficient noise before they are collected by an \textit{untrusted} server, but it is known for yielding low model utility \cite{kairouz21}.   
	For example, the work of Truex et al. \cite{Truex0CGW20} requires an overly large LDP-level privacy budget $\varepsilon_0=\alpha\cdot2c\cdot10^\rho$ with parameters  $\alpha=1,c=1,\rho=10$ to achieve satisfactory utility performance \cite{SunQC21}. 

	Different from these works that solely rely on LDP mechanisms to protect individual gradients, {\main} is designed to work under the recently emerging shuffle model of DP \cite{BalleBGN19,ErlingssonFMRTT19}.
	In contrast to the LDP setting, the shuffle model of DP introduces a shuffler sitting between clients and the aggregation server to shuffle the perturbed messages from clients, achieving the privacy amplification effect.  
	It has recently garnered substantial attention and been increasingly adopted in distributed learning \cite{liu2021flame,Erlingsson20,AISTATS21,NeurIPS21}, due to its significantly better privacy-utility trade-off over LDP.
	%
	In \cite{Erlingsson20}, Erlingsson et al. use the privacy amplification by shuffling theorem from \cite{BalleBGN19} to amplify privacy per training iteration. 
	They also apply the advanced composition theorem \cite{DworkR14} to analyze the approximate DP of the overall training process.
	However, the advanced composition theorem is known to be loose for composition \cite{RDP,DPSGD}. 
	Besides, although Erlingsson et al. \cite{Erlingsson20} propose to reduce communication cost by compressing the perturbed gradients at clients, they consider a setting where each client only has one data point, which is rarely seen in practice \cite{AISTATS21}. 
	Compared to \cite{Erlingsson20}, the works \cite{liu2021flame,AISTATS21} consider a more practical setting where each client holds multiple data points. 
	Among them, the work of Liu et al. \cite{liu2021flame} lets clients locally perturb the gradients and then directly forward the full-precision gradients to the shuffler, which is not communication-efficient. 
	Besides, the work of \cite{liu2021flame} also naively utilizes the advanced composition theorem to characterize the approximate DP of the training process. 
	
	\revise{
		The state-of-the-art work that is most related to ours is \cite{AISTATS21}, which proposes a communication-efficient noisy gradient compression mechanism by firstly perturbing a gradient using Duchi et al.'s LDP mechanism \cite{duchi2018minimax} and then compressing the perturbed gradient using the non-private compression mechanism from \cite{mayekar2020limits}. 
		However, the compression method used in \cite{AISTATS21} is not lossless and introduces errors after compression. 
		Besides, although the work of \cite{AISTATS21} composes privacy amplification by shuffling with privacy amplification by subsampling to amplify the privacy at each FL iteration, it simply characterizes the approximate DP of the proposed FL process. 
		The follow-up work in \cite{NeurIPS21} analyzes the RDP of the whole training process and derives a significantly tighter bound compared to approximate DP (using advanced composition theorem). 
		However, the work in \cite{NeurIPS21} is limited to and analyzed for a specialized scenario where each client only has one data point, which allows for straightforward uniform sampling of gradients and the direct use of existing subsampling amplification results. 
		In contrast, our work considers practically each client holding multiple data points, requiring a more sophisticated sampling process and a different privacy analysis, i.e., we cannot directly use existing amplification results as in the work \cite{NeurIPS21}. 
		We also notice that the recent work of \cite{CCS21} analyzes the RDP of the shuffle model. However, it only considers privacy amplification by shuffling in deriving the RDP bound, without incorporating privacy amplification by subsampling. Besides, similar to \cite{NeurIPS21}, the work of \cite{CCS21} considers each client only holding one data point. 
	}
	
	In addition, we note that all existing works on applying the shuffle model of DP in distributed learning assume a shuffler that honestly executes the shuffle. 
	So they would fail to provide integrity and privacy guarantees in case that the shuffling computation is not correctly conducted. 
	%
	{\main} largely departs from existing works \cite{liu2021flame,Erlingsson20,AISTATS21,NeurIPS21} in that it 
	(1) eliminates the reliance on an honest shuffler and provides malicious security leveraging advancements in MPC, where the integrity of server-side computation can be efficiently checked, 
	(2) is communication-efficient and supports the more practical FL setting that each client holds multiple data points, and 
	(3) composes privacy amplification by shuffling with privacy amplification by subsampling to achieve a stronger amplification effect at each FL iteration, and analyzes the RDP of the overall FL process to derive a tighter bound through RDP composition. 
	
	\revise{
		We note that there is an orthogonal line of work \cite{kairouz21,AgarwalKL21} that combines the secure aggregation technique \cite{BonawitzIKMMPRS17} and DP to train differentially private models in FL.
		%
		This line of work relies on the use of secure aggregation to support summation of individual perturbed gradient updates of clients, enabling the server to only learn the aggregated noisy gradient updates.
		In particular, this kind of approach allows small local noise to be added at a volume insufficient for a meaningful LDP guarantee. 
		However, when aggregated, the noise is sufficient to ensure a meaningful DP guarantee.
		As the secure aggregation technique does not allow extra operations on the masked noisy individual gradient updates (i.e., they can only be simply added up to produce an aggregated result), it hinders system-wide communication efficiency optimization for FL (e.g., through gradient compression as in {\main}, which requires decompression before aggregation).
		This poses a barrier to simultaneously balancing privacy, utility, and efficiency for FL. 
		Different from this orthogonal line of work, {\main} follows the emerging shuffle model of DP, and shows how communication efficiency can be substantially optimized via customized gradient compression techniques for FL in this new setting. 
		We demonstrate in Section \ref{sec:exp:comparison-secagg} the prominent advantage of {\main} in communication efficiency compared to the approach combining secure aggregation and DP.
		In addition, we note that the trending shuffle model of DP under which {\main} operates can also provide support for flexibly enforcing custom aggregation rules to cater for different needs, as compared to the approach combining secure aggregation and DP (which does not allow computation before aggregation).
		For example, Byzantine-robust aggregation rules \cite{LiuCLWW023} may require custom computation on the individual gradient updates so as to combat adversarial attacks on the training process. 
		And this is hard to be implemented with FL adopting the approach combining secure aggregation and DP.
		%
	}
	
	\section{Preliminaries}
	\label{sec:preliminaries}
	\subsection{Notations}
	We denote by $[n]$ the set $\{1,\cdots,n\}$ for $n\in\mathbb{N}$. 
	${s} \stackrel{\$}{\leftarrow} \mathbb{Z}_p$ denotes that $s$ is uniformly randomly sampled from a finite field $\mathbb{Z}_p$. 
	$||$ denotes string concatenation, and for a string $s$, we use $s[i,j]$ to represent the substring of $s$ spanning from the $i$-th bit to the $j$-th bit. 
	We use boldface letters such as $\boldsymbol{v}$ to represent vectors. 
	$\|\boldsymbol{v}\|_2$ represents the $\ell_2$-norm of $\boldsymbol{v}$. 
	For a vector $\boldsymbol{v} = (v_1,\cdots,v_N)$ and a permutation $\pi:\mathbb{Z}_N\rightarrow\mathbb{Z}_N$, we denote $\pi(\boldsymbol{v})$ as the permuted vector $(v_{\pi(1)},\cdots,v_{\pi(N)})$.

	\subsection{Differential Privacy}
	This section gives essential definitions and properties related to differential privacy (DP). 
	In central differential privacy (CDP), a trusted server collects users' raw data and applies a private mechanism. 
	Define two datasets $\mathcal{D}=\left\{d_1, \ldots, d_n\right\}$ and $\mathcal{D}'=\left\{d_1^{\prime}, \ldots, d_n^{\prime}\right\}$ (each comprises $n$ data points from $\mathcal{X}$) as neighboring datasets if they differ in one data point, i.e., there exists an $i\in[n]$ such that $d_i \neq d_i^{\prime}$ and for every $j\in[n],j\neq i$, we have $d_j = d_j^{\prime}$. Then, CDP is defined as follows. 
	
	\begin{defn}
		\textit{\textbf{(Central Differential Privacy - ($\varepsilon,\delta$)-DP \cite{DworkR14}).}}
		\emph{
			A randomized mechanism $\mathcal{M}$: $\mathcal{X}^n\rightarrow \mathcal{Y}$ satisfies $(\varepsilon,\delta)$-DP if for any two neighboring datasets $\mathcal{D},\mathcal{D}^{\prime}\in\mathcal{X}^n$ and for any subset of outputs $\mathcal{S}\subseteq\mathcal{Y}$ it holds that
			$$
			\operatorname{Pr}[\mathcal{M}(\mathcal{D}) \in \mathcal{S}] \leq e^{\varepsilon} \operatorname{Pr}\left[\mathcal{M}\left(\mathcal{D}^{\prime}\right) \in \mathcal{S}\right]+\delta.
			$$
		}
	\end{defn}
	
	In comparison, local differential privacy (LDP) does not rely on a trusted server, as raw data is locally perturbed before collection. 
	The formal definition of LDP with privacy level $\varepsilon_0$ follows. 
	
	\begin{defn}
		\textit{\textbf{(Local Differential Privacy - $\varepsilon_0$-LDP \cite{KasiviswanathanLNRS11}).}}
		\emph{
			A mechanism $\mathcal{R}: \mathcal{X} \rightarrow \mathcal{Y}$ satisfies $\varepsilon_0$-LDP if for any two inputs $d, d^{\prime} \in \mathcal{X}$ and any subset of outputs $\mathcal{S}\subseteq\mathcal{Y}$, we have
			$$
			\operatorname{Pr}[\mathcal{R}(d) \in \mathcal{S}] \leq e^{\varepsilon_0} \operatorname{Pr}\left[\mathcal{R}\left(d^{\prime}\right) \in \mathcal{S}\right].
			$$
		}
	\end{defn}

	To tightly track the privacy loss when composing multiple private mechanisms, we also introduce the notion of R\'enyi differential privacy (RDP) \cite{RDP} as a generalization of differential privacy. 
	
	\begin{defn}
		\textit{\textbf{(R\'enyi Differential Privacy - ($\lambda,\varepsilon(\lambda)$)-RDP {\cite{RDP,CCS21}}).}}
		\emph{
			A randomized mechanism $\mathcal{M}$: $\mathcal{X}^n\rightarrow \mathcal{Y}$ satisfies $(\lambda,\varepsilon(\lambda))$-RDP if for any two neighboring datasets $\mathcal{D},\mathcal{D}^{\prime}\in\mathcal{X}^n$, we have 
			$$
			D_\lambda\left(\mathcal{M}(\mathcal{D})\|\mathcal{M}\left(\mathcal{D}^{\prime}\right)\right) \leq \varepsilon(\lambda),
			$$ 
			where $D_\lambda(P\| Q)$ is the $\lambda$-R\'enyi divergence between two probability distributions P and Q and is given by
			$$
			D_\lambda(P\| Q) \triangleq \frac{1}{\lambda-1} \log \left(\mathbb{E}_{x \sim Q}\left[\left(\frac{P(x)}{Q(x)}\right)^\lambda\right]\right).
			$$
		}
		
	\end{defn}
	
	The main advantage of RDP compared to other DP notions lies in its composition property, which is highlighted as follows. 
	
	\begin{lem}\textit{\textbf{(Adaptive Composition of RDP \cite{RDP}).}}
		\label{lem:sequantial}
		Let $\mathcal{M}_1: \mathcal{D}\rightarrow\mathcal{R}_1$ be a mechanism satisfying ($\lambda,\varepsilon_1(\lambda)$)-RDP and $\mathcal{M}_2: \mathcal{D}\times\mathcal{R}_1\rightarrow\mathcal{R}_2$ be a mechanism satisfying ($\lambda,\varepsilon_2(\lambda)$)-RDP. Define their combination $\mathcal{M}_{1,2}:\mathcal{D}\rightarrow\mathcal{R}_2$ by $\mathcal{M}_{1,2}(\mathcal{D})=\mathcal{M}_2(\mathcal{D},\mathcal{M}_1(\mathcal{D}))$. Then $\mathcal{M}_{1,2}$ satisfies ($\lambda,\varepsilon_1(\lambda)+\varepsilon_2(\lambda)$)-RDP.
	\end{lem}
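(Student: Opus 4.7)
The plan is to prove the bound at the level of the joint distribution $(\mathcal{M}_1(\mathcal{D}), \mathcal{M}_2(\mathcal{D}, \mathcal{M}_1(\mathcal{D})))$ and then appeal to the data-processing inequality for R\'enyi divergence to descend to the marginal distribution on $\mathcal{R}_2$, which is exactly $\mathcal{M}_{1,2}(\mathcal{D})$. This two-step reduction is natural here because the adaptivity makes it awkward to reason about $\mathcal{M}_{1,2}(\mathcal{D})$ directly: the auxiliary input to $\mathcal{M}_2$ depends on $\mathcal{D}$, so one cannot apply the RDP guarantee of $\mathcal{M}_2$ pointwise against $\mathcal{D}'$ without first conditioning.

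First, fix neighboring datasets $\mathcal{D},\mathcal{D}'$ and let $P(r_1,r_2)$ and $P'(r_1,r_2)$ denote the joint densities of $(\mathcal{M}_1(\mathcal{D}), \mathcal{M}_2(\mathcal{D}, \mathcal{M}_1(\mathcal{D})))$ and of $(\mathcal{M}_1(\mathcal{D}'), \mathcal{M}_2(\mathcal{D}', \mathcal{M}_1(\mathcal{D}')))$. By the chain rule we factor $P(r_1,r_2) = P_1(r_1)\,P_2(r_2\mid r_1)$ and similarly $P'(r_1,r_2) = P_1'(r_1)\,P_2'(r_2\mid r_1)$, where $P_1,P_1'$ are the marginals of $\mathcal{M}_1(\mathcal{D})$ and $\mathcal{M}_1(\mathcal{D}')$ and $P_2(\cdot\mid r_1),P_2'(\cdot\mid r_1)$ are the densities of $\mathcal{M}_2(\mathcal{D},r_1)$ and $\mathcal{M}_2(\mathcal{D}',r_1)$. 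Plugging this factorization into the definition of $D_\lambda(P\|P')$ and rearranging, the exponent $\exp((\lambda-1)D_\lambda(P\|P'))$ becomes a double integral that splits as a product of (i) the outer $r_1$-integral against $P_1'$ and (ii) for each $r_1$, the inner $r_2$-integral against $P_2'(\cdot\mid r_1)$.

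Next, I would bound each factor separately. The inner integral is exactly $\exp((\lambda-1)D_\lambda(\mathcal{M}_2(\mathcal{D},r_1)\|\mathcal{M}_2(\mathcal{D}',r_1)))$, and since $\mathcal{M}_2$ is $(\lambda,\varepsilon_2(\lambda))$-RDP uniformly over its auxiliary input, this is at most $\exp((\lambda-1)\varepsilon_2(\lambda))$, a bound that does not depend on $r_1$ and hence pulls out of the outer integral. What remains in the outer integral is precisely $\exp((\lambda-1)D_\lambda(\mathcal{M}_1(\mathcal{D})\|\mathcal{M}_1(\mathcal{D}')))$, which by hypothesis is at most $\exp((\lambda-1)\varepsilon_1(\lambda))$. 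Taking $\tfrac{1}{\lambda-1}\log(\cdot)$ on both sides yields $D_\lambda(P\|P')\le \varepsilon_1(\lambda)+\varepsilon_2(\lambda)$.

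Finally, since $\mathcal{M}_{1,2}(\mathcal{D})$ is the image of the joint under the projection $(r_1,r_2)\mapsto r_2$, the data-processing inequality for R\'enyi divergence gives $D_\lambda(\mathcal{M}_{1,2}(\mathcal{D})\|\mathcal{M}_{1,2}(\mathcal{D}'))\le D_\lambda(P\|P')\le \varepsilon_1(\lambda)+\varepsilon_2(\lambda)$, which is the claim. I expect the main obstacle to be bookkeeping around adaptivity, specifically making rigorous that the RDP guarantee of $\mathcal{M}_2$ applies for every realization $r_1$ of $\mathcal{M}_1$ (including those of measure zero under one of the two neighboring datasets); this is handled by invoking the uniform-in-auxiliary-input reading of the $(\lambda,\varepsilon_2(\lambda))$-RDP assumption on $\mathcal{M}_2$, together with standard absolute-continuity conventions (terms with $P_1'(r_1)=0$ contribute zero in the outer integral).
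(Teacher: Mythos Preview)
Your proof is correct and follows the standard argument from Mironov's original R\'enyi differential privacy paper: factor the joint distribution via the chain rule, bound the inner conditional integral uniformly using the $(\lambda,\varepsilon_2(\lambda))$-RDP assumption on $\mathcal{M}_2$, bound the outer integral using the $(\lambda,\varepsilon_1(\lambda))$-RDP assumption on $\mathcal{M}_1$, and then invoke data processing for R\'enyi divergence to pass to the marginal on $\mathcal{R}_2$.

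Note, however, that the paper itself does not prove this lemma. It is stated in the preliminaries as a known result with a citation to \cite{RDP}, so there is no ``paper's own proof'' to compare against. Your write-up is essentially the proof one finds in that reference, and the bookkeeping concern you flag (uniformity of the $\mathcal{M}_2$ guarantee over the auxiliary input $r_1$) is exactly the right point to make explicit.
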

	
	Although our primary goal is to analyze the RDP of FL in the shuffle model, we also care about the more meaningful notion of ($\varepsilon,\delta$)-DP. 
	To convert ($\lambda,\varepsilon(\lambda)$)-RDP to ($\varepsilon,\delta$)-DP, we can use the state-of-the-art conversion lemma as follows. 
	
	\begin{lem}\textit{\textbf{(From RDP to DP \cite{BalleBGHS20,Canonne0S20}).}}
		\label{lem:rdp_to_dp}
		If a randomized mechanism $\mathcal{M}$ is ($\lambda,\varepsilon(\lambda)$)-RDP, then the mechanism is also ($\varepsilon,\delta$)-DP, where $\varepsilon$ is defined as below for a given $\delta\in(0,1)$:
		$$
		\varepsilon=\min_\lambda \left( \varepsilon(\lambda)+\frac{\log (1 / \delta)+(\lambda-1) \log (1-1 / \lambda)-\log (\lambda)}{\lambda-1}\right).
		$$
	\end{lem}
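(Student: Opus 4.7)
The plan is to derive the $(\varepsilon,\delta)$-DP guarantee from the Rényi divergence bound via a tail analysis of the privacy loss random variable, combined with a tight treatment of the hockey-stick divergence that extremizes the worst-case tradeoff between the RDP constraint and the DP failure probability.

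First, I would invoke the standard reformulation of $(\varepsilon,\delta)$-DP in terms of the hockey-stick divergence: for $P=\mathcal{M}(\mathcal{D})$ and $Q=\mathcal{M}(\mathcal{D}')$, the mechanism is $(\varepsilon,\delta)$-DP on this neighboring pair iff $\mathbb{E}_{x\sim P}\!\left[\bigl(1-e^{\varepsilon}Q(x)/P(x)\bigr)_{+}\right]\leq \delta$, where $(a)_{+}=\max(a,0)$. Introducing the privacy loss random variable $L(x)=\log(P(x)/Q(x))$ under $x\sim P$, the RDP hypothesis translates into the moment bound $\mathbb{E}_{x\sim P}\!\left[e^{(\lambda-1)L(x)}\right]\leq e^{(\lambda-1)\varepsilon(\lambda)}$ for every $\lambda>1$. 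The goal then becomes: subject to this single-moment constraint, upper-bound the hockey-stick functional above.

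Second, I would set up the extremal problem: among all distributions of $L$ consistent with the RDP moment constraint, maximize $\mathbb{E}\!\left[(1-e^{\varepsilon-L})_{+}\right]$. A naive Markov argument on $e^{(\lambda-1)L}$ would give only $\delta\leq e^{(\lambda-1)(\varepsilon(\lambda)-\varepsilon)}$, which is loose. To obtain the tighter correction $(\lambda-1)\log(1-1/\lambda)-\log\lambda$ appearing in the statement, I would instead solve the optimization exactly, following the Canonne--Kamath--Steinke refinement of Mironov's original conversion: by a Lagrangian/KKT analysis, the extremal distribution of $L$ is supported on two points, and the optimal assignment of mass satisfies a closed-form equation balancing the moment constraint against the hockey-stick threshold. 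Plugging this extremal solution back into the functional and simplifying yields, after rearrangement, the relation
\[
\varepsilon \;\leq\; \varepsilon(\lambda)+\frac{\log(1/\delta)+(\lambda-1)\log(1-1/\lambda)-\log\lambda}{\lambda-1}.
\]

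Third, since the RDP bound holds uniformly over all orders $\lambda>1$, I would take the pointwise minimum of the right-hand side over $\lambda$ to obtain the stated expression for $\varepsilon$. Because the bound derived holds for every pair of neighboring datasets, this immediately yields a global $(\varepsilon,\delta)$-DP guarantee for $\mathcal{M}$.

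The hardest step will be the second one: establishing the exact extremal form rather than settling for a crude Markov bound. The subtlety is that a direct Chernoff-style tail bound overcounts the contribution of events on which $L(x)\leq \varepsilon$, where the hockey-stick integrand is identically zero; recovering the logarithmic correction term requires carefully exploiting this zero region via the two-point extremal distribution, and verifying its optimality through the KKT conditions of the constrained moment problem. This is the delicate piece where the tightness of the conversion is actually earned.
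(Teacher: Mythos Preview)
Your proof sketch is sound and follows precisely the Canonne--Kamath--Steinke argument (with the Balle et al.\ refinement) that the lemma cites. Note, however, that the paper does not prove this lemma at all: it is stated as a preliminary result imported directly from \cite{BalleBGHS20,Canonne0S20}, with no accompanying proof in the text. So there is no ``paper's own proof'' to compare against; your plan simply reconstructs the argument from the cited sources, and does so correctly.
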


	\subsection{Additive Secret Sharing}
	Given a private value $x\in\mathbb{Z}_{p}$, the 2-out-of-2 additive secret sharing (ASS) \cite{mohassel2017secureml} splits it into two secret shares $\langle x \rangle_1$ and $\langle x \rangle_2\in\mathbb{Z}_{p}$ such that $x = \langle x \rangle_1+\langle x \rangle_2\: \bmod p$. 
	The secret shares are held by two parties $\mathcal{P}_1$ and $\mathcal{P}_2$, respectively. 
	Such a sharing of $x$ is denoted as $\llbracket x \rrbracket$.
	Note that secure computation with such secret shares works in $\mathbb{Z}_{p}$. For ease of presentation, we will omit the modulo operation in the subsequent description of secret sharing-based operations.  
	
	The basic operations related to additive secret sharing are as follows. 
	%
	%
	(1) \textit{Reconstruction}. To reconstruct ($\mathsf{Rec(\cdot)}$) a sharing $\llbracket x \rrbracket$, $\mathcal{P}_1$ sends $\langle x \rangle_1$ to $\mathcal{P}_2$, and $\mathcal{P}_2$ sends $\langle x \rangle_2$ to $\mathcal{P}_1$. 
	Both $\mathcal{P}_1$ and $\mathcal{P}_2$ compute and obtain $x = \langle x \rangle_1 + \langle x \rangle_2$.
	(2) \textit{Addition/subtraction}. Addition/subtraction of secret-shared values can be completed by party $\mathcal{P}_i$ non-interactively for $i\in\{1,2\}$: To securely compute $\llbracket z \rrbracket = \llbracket x\pm y \rrbracket$, each party $\mathcal{P}_i$ locally computes $\langle z \rangle_i = \langle x \rangle_i\pm\langle y \rangle_i$. 
	(3) \textit{Multiplication}. Multiplication of two secret-shared values is computed using \textit{Beaver triples} \cite{Beaver}. 
	A Beaver triple is a multiplication triple $(a, b, c)$ secret-shared among $\mathcal{P}_1$ and $\mathcal{P}_2$, where $a,b$ are uniformly random values in $\mathbb{Z}_{p}$ and $c=ab$. 
	In practice, we can let Beaver triples be generated in advance by a third party \cite{Chameleon} and distributed to $\mathcal{P}_1$ and $\mathcal{P}_2$ . 
	To multiply two secret-shared values $\llbracket x \rrbracket$ and $\llbracket y \rrbracket$, each party $\mathcal{P}_i$ first locally computes $\langle e \rangle_i=\langle x \rangle_i-\langle a \rangle_i$ and $\langle f \rangle_i=\langle y \rangle_i-\langle b \rangle_i$ for $i\in\{1,2\}$.
	Then both parties reconstruct $e$ and $f$. 
	Finally, each party $\mathcal{P}_i$ proceeds to compute $\langle x\cdot y \rangle_i=(i-1)\cdot e\cdot f + f\cdot\langle a\rangle_i +e\cdot\langle b \rangle_i+\langle c \rangle_i$ for $i\in\{1,2\}$.
	%

	\section{Problem Statement}
	\label{sec:prob-statement}
	\begin{figure}[!t]
		\centering
		\includegraphics[scale=0.4]{./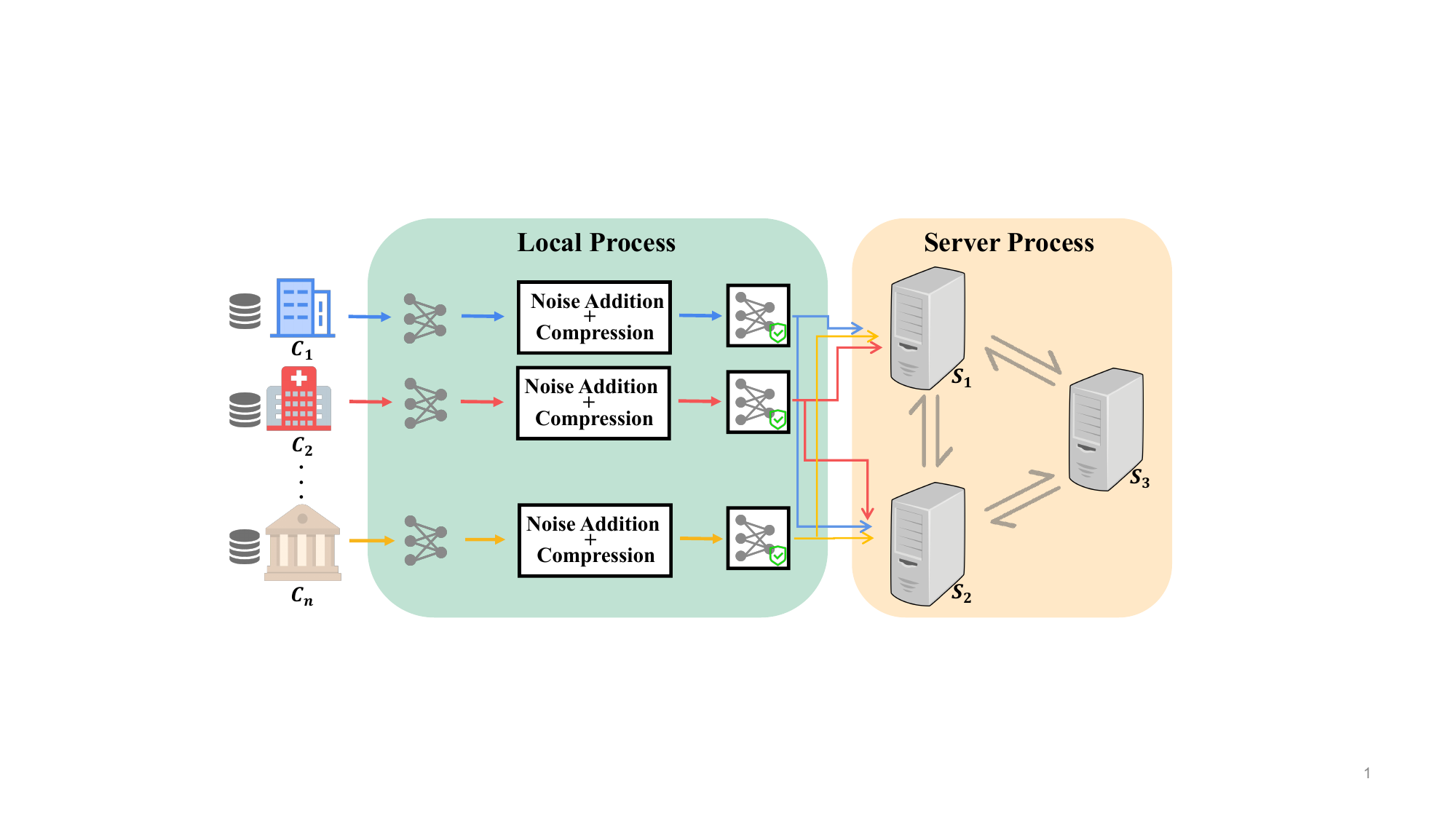}
		\caption{The system model of {\main}.}
		\label{fig:sys_model}
		\vspace{-15pt}
	\end{figure}

	\subsection{System Model}
	\label{sec:problem}
	
	Fig. \ref{fig:sys_model} illustrates {\main}'s system model. 
	{\main}'s design leverages an emerging distributed trust setting where three servers deployed in separate trust domains, each denoted by $\mathcal{S}_i$ for $i\in\{1,2,3\}$, collaboratively provide the FL service for the clients holding local datasets. 
	The adoption of such setting has also appeared recently in works on FL \cite{ELSA,GehlharM0SWY23,tang2024flexible} as well as in other secure systems and applications \cite{DautermanRPS22,NDSS22,NDSS24}. 
	For simplicity of presentation, we will denote the three servers $\mathcal{S}_1$, $\mathcal{S}_2$, and $\mathcal{S}_3$ collectively as $\mathcal{S}_{\{1,2,3\}}$.

	From a high-level point of view, each training iteration of {\main} starts with each client locally computing the gradient for each local data point. 
	Next, each client needs to adequately perturb its gradients under LDP before forwarding them to the servers. 
	Considering that the large sizes of gradients can lead to a communication bottleneck for FL, we introduce a noisy gradient compression mechanism run by the clients locally to compress the noisy gradients. 
	The compressed noisy gradients are then sent to the servers, which collaboratively perform a tailored secure shuffle on the received gradients to achieve privacy amplification by shuffling. 
	\revise{
		It is important to note that it is the compressed gradients that get securely shuffled in {\main}. }
	Finally, these gradients are decompressed and integrated into the global model.

	\subsection{Threat Model and Security Guarantees}
	\label{sec:threat_model}

	{\main} is focused on providing privacy protection for the clients, and we consider the threats primarily from the servers. 
	Like prior works under the three-server setting \cite{NDSS22,PLASMA}, we assume a non-colluding and honest-majority threat model. 
	That is, we assume that each of $\mathcal{S}_{\{1,2,3\}}$ may individually try to deduce private information during the protocol execution, and at most one of the three servers will maliciously deviate from the protocol specification. 
	%
	%
	%
	%
	%

	In {\main}, the individual gradients are locally perturbed by the clients satisfying $\varepsilon_0$-LDP.
	The servers can only view the shuffled noisy gradients and cannot learn which client sends which gradient. 
	That is, no server could learn the permutation used for shuffling the compressed noisy gradients. 
	Moreover, the integrity of the computation regarding the shuffle, sampling, decompression, and aggregation can be checked. 
	Specifically, if there exists a malicious server attempting to tamper with the integrity of the related computation, {\main} will detect the malicious behavior and output abort.

	\section{The Design of {\main}}
	\label{sec:protocol}
	\subsection{Overview}
	\label{sec:protocol:overview}
	
	{\main} is aimed at enabling communication-efficient and private FL services in the shuffle model of DP, with security against a maliciously acting server.
	We note that when the shuffle operation is adequately instantiated, {\main} can enjoy the benefit of privacy amplification provided by shuffling. In the meantime, we observe that the sampling of gradients can also be beneficial to privacy amplification \cite{NeurIPS21,AISTATS21,BalleBG18}. 
	Hence, {\main} also integrates the strategy of gradient sampling so as to achieve stronger privacy amplification.
	Inspired by \cite{AISTATS21}, {\main} conducts gradient sampling at both the client-side and the server-side.

	Next, we consider how to have an efficient and secure realization of the core shuffle computation with robustness against a malicious server in {\main}.
	At first glance, it seems that traditional mixnet-based methods \cite{mixnet,HooffLZZ15} could be used, where a set of servers take turns to perform a verifiable shuffle of data. 
	However, such approaches are expensive due to the computation of a verifiable shuffle at each server \cite{NDSS22}, making it not a promising choice for building {\main} efficiently.
	We observe that a recent trend for secure shuffle with better efficiency is to build on lightweight secret sharing techniques and have a set of servers collaboratively and securely shuffle secret-shared data \cite{ChaseGP20, NDSS22, NDSS24}.

	Our starting point is to leverage a state-of-the-art secret-shared shuffle protocol in the three-server setting from \cite{NDSS22}.
	This protocol allows three servers to jointly perform a shuffle of secret-shared data so that no server learns anything about the permutation used to shuffle data.
	%
	%
	%
	However, simply adopting this protocol in {\main} to shuffle the gradients (perturbed under a LDP mechanism) from the clients would still suffer from inefficiency. 
	This is because the servers would need to routinely exchange information whose size depends on the gradients' sizes to perform the secret-shared shuffle.
	Since gradients are typically high-dimensional vectors in FL, directly secret-sharing (noisy) gradients for the shuffle would incur high performance overhead.

	To overcome this efficiency challenge, our key insight is to have the servers perform a secret-shared shuffle of \emph{compressed} noisy gradients, rather than raw noisy gradients.
	This requires the development of solution that can simultaneously perturb (as per LDP) and compress the clients' gradients, so that the compressed noisy gradients can be suitably used in the secure shuffle computation. 
	To meet this requirement, our key idea is to delicately build on the widely-popular LDP mechanism of \cite{duchi2018minimax}---referred to as DJW18 in this paper---for local gradient perturbation and losslessly compress a noisy high-dimensional gradient vector from this LDP mechanism using simply a random seed and a sign bit. 
	As per our custom design, each client can just secret-share compressed noisy gradients and the servers work over them to perform the secure shuffle.
	By reducing the secret-shared shuffle of high-dimensional noisy gradient vectors to compressed noisy gradients simply consisting of small-sized random seeds and sign bits, we manage to substantially diminish both client-server and inter-server communication costs.	
	
	On the security side, we note that the protocol in \cite{NDSS22} provides a verification mechanism to harden the security in the presence of a malicious server.
	Specifically, it appends MACs to the data to be shuffled, and adds a series of integrity checks performed by the servers during the shuffle computation.
	A malicious server trying to tamper with the shuffling process will be detected.
	However, as shown by the recent work in \cite{NDSS24}, the verification mechanism in \cite{NDSS22} is vulnerable to online selective failure attacks, which result in a \textit{non-random} shuffle in the view of a malicious server. 
	Simply following the verification mechanism of \cite{NDSS22} thus would make the privacy amplification by shuffling theorem problematic for use in {\main}.
	%
	Therefore, we follow the general idea of MAC-based verification in \cite{NDSS22}, but instead develop a new series of lightweight integrity checks so as to ensure the security of the shuffle computation in the presence of a malicious server, with the online selective failure attacks taken into account. 
	Additionally, it is noted that in {\main} the servers need to perform not only the secret-shared shuffle, but also a set of post-shuffle operations including sampling, decompression, and aggregation of (noisy) gradients.
	We also show to perform integrity checks for these operations, and thus deliver a complete solution for maliciously secure computation at the server-side.

	In what follows, we first design a noisy gradient compression mechanism (Section \ref{sec:protocol:ldp-compression}) in which clients locally perturb and losslessly compress gradients under LDP.
	Based on this mechanism, we introduce a basic construction (Section \ref{sec:protocol:semi-honest}) for communication-efficient and private FL in the shuffle model of DP.
	Our basic construction involves the secure shuffling and sampling of gradients to achieve \textit{composed privacy amplification} effects, with semi-honest servers assumed.
	%
	%
	We then show how to extend the basic construction to provide malicious security when at least two servers are honest and arrive at {\main}.

	\subsection{Noisy Gradient Compression under LDP} 
	\label{sec:protocol:ldp-compression}

	\begin{algorithm}[!t]
		\caption{Perturbing and Compressing A Vector under LDP (\textsf{NoisyGradCmpr})} 
		\label{algo:NoisyGradCmpr}
		\begin{algorithmic}[1]
			\Require LDP level $\varepsilon_0$, a vector $\boldsymbol{x}$ clipped to $\ell_2$-norm bound $L$.  
			\Ensure A compressed vector $\boldsymbol{r}$ that satisfies $\varepsilon_0$-LDP. 
			
			\State Compute $\overline{\boldsymbol{x}} \leftarrow \begin{cases}L \cdot \frac{\boldsymbol{x}}{\|\boldsymbol{x}\|_2} & \text { with probability } \frac{1}{2}+\frac{\|\boldsymbol{x}\|_2}{2 L}\text {,} \\ -L \cdot \frac{\boldsymbol{x}}{\|\boldsymbol{x}\|_2} & \text { otherwise. }\end{cases}$ 
			\State Sample $U \sim $ Bernoulli$(\frac{e^{\varepsilon_0}}{e^{\varepsilon_0}+1})$. \Comment \emph{$U=1$ with probability $\frac{e^{\varepsilon_0}}{e^{\varepsilon_0}+1}$, otherwise $U=0$.}
			\State ${s} \stackrel{\$}{\leftarrow} \mathbb{Z}_{p}$.  
			\State $\hat{{s}} \leftarrow \textrm{PRG}({s})$. \Comment \emph{Use the seed ${s}$ and a \textrm{PRG} to generated a sequence of $de$ bits.} 
			\State Expand $\hat{{s}}$ into a $d$-dimensional vector $\boldsymbol{v}$, where each dimension is sequentially represented using $e$ bits from $\hat{{s}}$:
			\begin{align*}
				\boldsymbol{v} &\leftarrow \begin{bmatrix}
					\textsf{Encode}(\hat{{s}}[1,e]) \\
					\textsf{Encode}(\hat{{s}}[e+1,2e]) \\
					\vdots \\
					\textsf{Encode}(\hat{{s}}[(d-1)\cdot e+1,de])
				\end{bmatrix}
			\end{align*} \label{code:1:5}
			\If {$\langle \boldsymbol{v}, \boldsymbol{x}\rangle < 0$}
			\State $\boldsymbol{v} \leftarrow -\boldsymbol{v}$.
			\EndIf
			\State $sgn \leftarrow 2U -1$. \Comment \emph{Calculate the sign of the perturbed vector.}
			\State $\boldsymbol{r} \leftarrow sgn||{s}$. \Comment \emph{Concatenate the sign and seed.}
			
		\end{algorithmic}
	\end{algorithm}

	We first introduce how to perturb and losslessly compress a gradient via our proposed \textsf{NoisyGradCmpr} mechanism, which inputs a gradient (treated as a $d$-dimensional vector) and LDP level $\varepsilon_0$, and outputs a compressed vector that satisfies $\varepsilon_0$-LDP. 
	As will be shown in our complete protocol of FL (Section \ref{sec:protocol:semi-honest} and \ref{sec:protocol:malicious}), \textsf{NoisyGradCmpr} can be integrated into the training process to perturb and compress each gradient at the client, thus facilitating communication-efficient federated model training.
	
	At a high level, \textsf{NoisyGradCmpr} builds upon the widely adopted LDP mechanism DJW18 \cite{duchi2018minimax}, yet introduces a novel approach by employing a PRG to losslessly compress the output perturbed vector. 
	In DJW18, simply put, a perturbed vector is generated by creating a random vector $\boldsymbol{v}$ and calculating a sign bit $sgn$ based on $\boldsymbol{v}$ and the input vector $\boldsymbol{x}$. 
	The perturbed vector is output as $sgn\cdot\boldsymbol{v}$; for more details, refer to \cite{Erlingsson20,duchi2018minimax,AISTATS21}.
	Our key observation is that generating $\boldsymbol{v}$ from a random seed and only transmitting the compressed noisy vector (comprising a random seed and a sign bit) to servers could significantly reduce the client-server communication cost \cite{Erlingsson20}. 
	Moreover, servers could perform a secret-shared shuffle on the compressed noisy gradients, further reducing the inter-server communication cost associated with secret sharing. 
	Algorithm \ref{algo:NoisyGradCmpr} outlines how to locally perturb a vector and utilizes a PRG to compress the perturbed vector. 
	Algorithm \ref{algo:NoisyGradDcmp} describes the decompression of a noisy vector, coupled with Algorithm \ref{algo:NoisyGradCmpr}. 
	%
	
	Apart from the substantial reduction in communication cost achieved by our proposed method, our proposed \textsf{NoisyGradCmpr} also possesses the following properties.
	
	\begin{lem}
		\label{lem:NoisyGradCmpr}
		Our \textsf{NoisyGradCmpr} presented in Algorithm \ref{algo:NoisyGradCmpr}, when used in couple with \textsf{NoisyGradDcmp} presented in Algorithm \ref{algo:NoisyGradDcmp}, achieves lossless compression, is unbiased, guarantees $\varepsilon_0$-LDP, and ensures that the decompressed vector has bounded variance.
		Specifically, for every $\boldsymbol{x}\in\mathcal{B}^d_2(L)$, where $\mathcal{B}^d_2(L) = \left\{\boldsymbol{x} \in \mathbb{R}^d:\|\boldsymbol{x}\|_2 \leq L\right\}$ denotes the $\ell_2$-norm ball of radius $L$, we have $\mathbb{E}[\textsf{R}(\boldsymbol{x})] = \boldsymbol{x}$ and 
		$$
		\mathbb{E}\|\textsf{R}(\boldsymbol{x}) - \boldsymbol{x}\|^2_2 \leq L^2 d\left(\frac{3 \sqrt{\pi}}{4} \frac{e^{\varepsilon_0}+1}{e^{\varepsilon_0}-1}\right)^2, 
		$$
		\noindent where $\textsf{R}(\cdot)=\textsf{NoisyGradDcmp}(\textsf{NoisyGradCmpr}(\cdot))$. 
	\end{lem}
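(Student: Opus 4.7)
The plan is to establish all four claims simultaneously by reducing the combined procedure \textsf{R}=\textsf{NoisyGradDcmp}$\circ$\textsf{NoisyGradCmpr} to the DJW18 mechanism of \cite{duchi2018minimax}, whose unbiasedness, $\varepsilon_0$-LDP property, and variance bound are well established. The central observation is that the PRG expansion followed by \textsf{Encode} yields a vector $\boldsymbol{v}$ that plays exactly the role of the random direction drawn inside DJW18, while the transmitted pair $(sgn, s)$ is a complete, deterministically invertible encoding of the DJW18 perturbed vector $sgn\cdot\boldsymbol{v}$ (up to the fixed DJW18 normalization to be re-applied inside \textsf{NoisyGradDcmp}).

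The first step is to verify lossless compression by direct computation. Given $(sgn, s)$, the decompression $\textsf{NoisyGradDcmp}$ re-evaluates $\hat{s}=\textrm{PRG}(s)$, re-applies \textsf{Encode} chunkwise on successive $e$-bit blocks to reconstruct $\boldsymbol{v}$, and then outputs $sgn\cdot\boldsymbol{v}$ multiplied by the DJW18 normalization. Because \textrm{PRG} is deterministic given the seed and \textsf{Encode} is a fixed bit-to-value map, the reconstructed vector is identical to the vector that the uncompressed DJW18 mechanism would have produced, which proves the lossless compression claim.

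For unbiasedness and the $\varepsilon_0$-LDP guarantee, I would treat the PRG output as a uniformly random bitstring (justified by a standard cryptographic PRG assumption, up to a negligible distinguishing advantage). Under this treatment, the coordinates of $\boldsymbol{v}$ have exactly the marginal distribution used in the DJW18 construction, and the argument mirrors \cite{duchi2018minimax}: the randomized rounding $\boldsymbol{x}\mapsto\overline{\boldsymbol{x}}$ satisfies $\mathbb{E}[\overline{\boldsymbol{x}}]=\boldsymbol{x}$ with $\|\overline{\boldsymbol{x}}\|_2=L$; conditioned on $\boldsymbol{v}$, the deterministic sign alignment based on $\langle\boldsymbol{v},\boldsymbol{x}\rangle$ combined with the Bernoulli variable $U$ of parameter $\tfrac{e^{\varepsilon_0}}{e^{\varepsilon_0}+1}$ yields a conditional expectation that, after applying the DJW18 normalization in decompression, equals $\overline{\boldsymbol{x}}$; the tower property then gives $\mathbb{E}[\textsf{R}(\boldsymbol{x})]=\boldsymbol{x}$. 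The $\varepsilon_0$-LDP bound follows because, for any fixed $\boldsymbol{v}$, the likelihood ratio between two inputs $\boldsymbol{x},\boldsymbol{x}'$ is governed entirely by the Bernoulli weighting, which is at most $e^{\varepsilon_0}$; marginalizing over $\boldsymbol{v}$ (whose law does not depend on $\boldsymbol{x}$) preserves the bound. The variance bound is obtained by substituting $\|\overline{\boldsymbol{x}}\|_2=L$ into the DJW18 second-moment calculation and simplifying, yielding the stated $L^2 d\bigl(\tfrac{3\sqrt{\pi}}{4}\tfrac{e^{\varepsilon_0}+1}{e^{\varepsilon_0}-1}\bigr)^2$.

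The main obstacle is legitimizing the reduction from the real, pseudo-randomly generated $\boldsymbol{v}$ to the idealized uniformly random $\boldsymbol{v}$ used by DJW18; the expectation, DP ratio, and variance identities above are information-theoretic statements and only transfer to the real scheme up to the cryptographic distinguishing advantage of the PRG. I would address this by invoking a PRG-security assumption and observing that this slack is negligible for all three quantities of interest. A secondary but important technical check is verifying that \textsf{Encode} maps a uniformly random $e$-bit block to a coordinate whose marginal distribution matches the one assumed in the DJW18 variance analysis; this is a direct inspection of the encoding map but is the one place where a subtle design mismatch would break the reduction and force an independent computation of the second moment.
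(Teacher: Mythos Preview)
Your proposal is correct and follows essentially the same approach as the paper: establish lossless compression via determinism of the PRG and \textsf{Encode}, then reduce unbiasedness, $\varepsilon_0$-LDP, and the variance bound directly to the known properties of the DJW18 mechanism $\mathsf{M}(\cdot)$. The paper's proof is terser---it simply asserts $\mathbb{E}[\textsf{R}(\boldsymbol{x})]=\mathbb{E}[\mathsf{M}(\boldsymbol{x})]$ and $\mathbb{E}\|\textsf{R}(\boldsymbol{x})-\mathsf{M}(\boldsymbol{x})\|_2^2=0$ from losslessness and then cites \cite{duchi2018minimax} for the rest---whereas you spell out the DJW18 internals and are more explicit about the PRG-idealization gap, which the paper glosses over (it effectively treats the PRG as a random oracle, consistent with its security analysis elsewhere).
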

	
	\begin{algorithm}[!t]
		\caption{Decompressing A Compressed Noisy Vector (\textsf{NoisyGradDcmp})} 
		\label{algo:NoisyGradDcmp}
		\begin{algorithmic}[1]
			\Require A vector $\boldsymbol{r}$ compressed by \textsf{NoisyGradCmpr} with LDP level $\varepsilon_0$, $\ell_2$-norm bound $L$. 
			\Ensure  A decompressed vector $\boldsymbol{x}\in\mathbb{R}^d$ that satisfies $\varepsilon_0$-LDP. 
			
			\State $(sgn,{s}) \leftarrow \boldsymbol{r}$. \Comment{\emph{Deconcatenation.}}
			\State $\hat{{s}} \leftarrow \textrm{PRG}({s})$. 
			\State $\boldsymbol{v} \xleftarrow{\text{expand}} \hat{{s}}$. \Comment \emph{Same operation as line \ref{code:1:5}, Algorithm \ref{algo:NoisyGradCmpr}.}
			\State $\overline{\boldsymbol{v}}\leftarrow \frac{\boldsymbol{v}}{\|\boldsymbol{v}\|_2}$. 
			\State $M \leftarrow L \frac{\sqrt{\pi}}{2} \frac{d\Gamma\left(\frac{d-1}{2}+1\right)}{\Gamma\left(\frac{d}{2}+1\right)} \frac{e^{\varepsilon_0+1}}{e^{\varepsilon_0-1}}$. 
			\State $\boldsymbol{x} \leftarrow sgn\cdot M\cdot\overline{\boldsymbol{v}}$. 
		\end{algorithmic}
		\vspace{-1mm}
	\end{algorithm}

	We provide the proof of Lemma \ref{lem:NoisyGradCmpr} in Appendix \ref{appendix:lem_NoisyGradCmpr_proof}. 
	Note that Lemma \ref{lem:NoisyGradCmpr} indicates that our mechanism \textsf{NoisyGradCmpr} provides a gradient with $\varepsilon_0$-LDP guarantee. 
	Later in this paper, we will show that the (compressed) noisy gradients will be shuffled and sampled, and the $\varepsilon_0$-LDP guarantee can be amplified through composing privacy amplification by shuffling and privacy amplification by subsampling. 
	This results in a tighter bound on privacy loss for each iteration (see Section \ref{sec:theoretical-results} for a more detailed analysis).

	\subsection{Communication-Efficient FL in the Shuffle Model of DP with Semi-Honest Security}
	\label{sec:protocol:semi-honest}

	%
	In this section, we introduce our basic construction for federated model training in the shuffle model of DP, which considers a semi-honest adversary setting. 
	Algorithm \ref{algo:SGD-workflow} presents our basic construction, which bridges our proposed noisy gradient compression scheme \textsf{NoisyGradCmpr} and the secret-shared shuffle protocol. 
	%
	%
	We will show how to extend our basic construction for achieving malicious security later in Section \ref{sec:protocol:malicious}.

	At the beginning, a global model $\theta_0$ is required to be initialized on the server side. 
	Since our basic construction considers semi-honest servers, the initialization can be done by any server. 
	For simplicity, we assign the initialization of $\theta_0$ to $\mathcal{S}_1$. 
	Thus $\mathcal{S}_1$ initializes and broadcasts $\theta_0$ to each client $\mathcal{C}_{i}$ for $i\in[n]$. 
	Recall from Section \ref{sec:protocol:overview}, the client-side sampling of gradients, together with server-side gradient sampling, enables an \textit{additional privacy amplification} effect (via subsampling \cite{BalleBG18}) apart from privacy amplification by shuffling. 
	To achieve such additional privacy amplification, inspired by the sampling strategy from \cite{AISTATS21}, we let $\mathcal{C}_i$ uniformly sample a subset $\mathcal{U}_{i}$ of $s$ data points and compute gradients for the sampled data points at the beginning of each training iteration. 
	The server-side gradient sampling will be shown later in this section.
	
	Next, each client $\mathcal{C}_i$ clips the gradient $\nabla_{\theta_t}\ell(\theta_t, \boldsymbol{x}_{ij})$ for each data point $j\in \mathcal{U}_{i}$ using clipping parameter $L$ to bound the $\ell_2$-norm of the gradient. 
	Here, $\ell(\theta_t, \cdot)$ is a loss function. 
	Then $\mathcal{C}_i$ applies our proposed noisy gradient compression mechanism \textsf{NoisyGradCmpr} to perturb and compress each (sampled) gradient. 
	At the end of the local process, $\mathcal{C}_i$ obtains a set of $s$ compressed noisy gradients, each comprising a random seed and a sign bit, satisfying $\varepsilon_0$-LDP. 
	
	After that, $\mathcal{C}_i$ splits the compressed noisy gradients into two shares and distributes them among $\mathcal{S}_{\{1,2\}}$. 
	Thus, $\mathcal{S}_{\{1,2\}}$ hold a length-$N$ vector $\llbracket \boldsymbol{x}\rrbracket= \llbracket\left( \{\boldsymbol{r}^t_{1j}\}_{j\in\mathcal{U}_{1}}, \{\boldsymbol{r}^t_{2j}\}_{j\in\mathcal{U}_{2}}, \cdots, \{\boldsymbol{r}^t_{nj}\}_{j\in\mathcal{U}_{n}} \right)\rrbracket$, comprising $ns$ secret-shared compressed noisy gradients ($N=ns$). 
	Then $\mathcal{S}_{\{1,2,3\}}$ collaboratively shuffle $\llbracket \boldsymbol{x}\rrbracket$, which enables a privacy amplification effect by shuffling \cite{BalleBGN19}. 
	
	We adapt the secret-shared shuffle protocol \cite{NDSS22} in {\main} to instantiate the secure shuffle computation (denoted as $\textsf{SecShuffle}$) as follows. 
	To securely shuffle $\llbracket \boldsymbol{x}\rrbracket$, $\mathcal{S}_3$ needs to interact with $\mathcal{S}_{\{1,2\}}$ to generate the correlations required for performing a secret-shared shuffle in advance. 
	Specifically, $\mathcal{S}_{1}$ chooses a random seed and expands it to obtain $\pi_1,\boldsymbol{a}'_2,\boldsymbol{b}_2$, where $\pi_1:\mathbb{Z}_N \rightarrow \mathbb{Z}_N$ is a random permutation and $\boldsymbol{a}'_2,\boldsymbol{b}_2$ are length-$N$ random vectors. 
	Next, $\mathcal{S}_{1}$ sends the seed to $\mathcal{S}_3$, which then expands it to recover the same values. 
	Similarly, $\mathcal{S}_{2}$ chooses a random seed, expands it to obtain $\pi_2,\boldsymbol{a}_1$, and sends this seed to $\mathcal{S}_3$. 
	We can significantly reduce communication costs by transmitting only the relevant seeds to $\mathcal{S}_3$. 
	Note that since $\mathcal{S}_3$ learns the permutations $\pi_1,\pi_2$ used by $\mathcal{S}_{\{1,2\}}$, we follow \cite{NDSS22} to additionally have $\mathcal{S}_{\{1,2\}}$ pre-share a permutation $\pi_{12}$ and apply it to $\llbracket \boldsymbol{x}\rrbracket$ to get a locally permuted vector $\llbracket\hat{\boldsymbol{x}}\rrbracket$. 
	Upon expanding the received seeds to retrieve the values, $\mathcal{S}_3$ calculates a vector $\boldsymbol{\Delta} = \pi_2(\pi_1(\boldsymbol{a}_1) + \boldsymbol{a}'_2) - \boldsymbol{b}_2$ and sends it to $\mathcal{S}_{2}$. 
	This completes the \textit{offline} phase of the secret-shared shuffle. 
	
	The \textit{online} phase of the secret-shared shuffle proceeds as follows: 
	Firstly, $\mathcal{S}_2$ masks its input share $\langle \hat{\boldsymbol{x}}\rangle_2$ using $\boldsymbol{a}_1$ and sends $\boldsymbol{z}_2 \leftarrow \langle \hat{\boldsymbol{x}}\rangle_2 - \boldsymbol{a}_1$ to $\mathcal{S}_1$. 
	Secondly, $\mathcal{S}_1$ sets its output to be $\langle\boldsymbol{y}\rangle_1\leftarrow\boldsymbol{b}_2$ and sends $\boldsymbol{z}_1 \leftarrow \pi_1(\boldsymbol{z}_2+\langle\hat{\boldsymbol{x}}\rangle_1) - \boldsymbol{a}'_2$ to $\mathcal{S}_2$. 
	Finally, $\mathcal{S}_2$ sets its output to be $\langle\boldsymbol{y}\rangle_2\leftarrow\pi_2(\boldsymbol{z}_1) + \boldsymbol{\Delta}$. 
	
	The correctness of the secret-shared shuffle is as follows: 
	\begin{equation}
		\begin{aligned} 
			\langle \boldsymbol{y} \rangle_1 &+ \langle \boldsymbol{y} \rangle_2 =\pi_2\left(\boldsymbol{z}_1\right)+\boldsymbol{\Delta}_2+\boldsymbol{b}_2  \\ 
			& = \pi_2\left(\pi_1\left(\boldsymbol{z}_2 + \langle\hat{\boldsymbol{x}}\rangle_1\right)-\boldsymbol{a}_2^{\prime}\right) + \pi_2\left(\pi_1\left(\boldsymbol{a}_1\right)+\boldsymbol{a}_2^{\prime}\right) \\
			& = \pi_2(\pi_1(\langle \hat{\boldsymbol{x}}\rangle_2 - \boldsymbol{a}_1 + \langle\hat{\boldsymbol{x}}\rangle_1) + \pi_2(\pi_1(\boldsymbol{a}_1)) \\
			& = \pi_2(\pi_1(\hat{\boldsymbol{x}})) \\
			& = \pi_2(\pi_1(\pi_{12}(\boldsymbol{x}))).
		\end{aligned}
		\nonumber
	\end{equation}
	
	\noindent Here, no single server could view all three permutations $\pi_1,\pi_2,\pi_{12}$. 
	Specifically, $\mathcal{S}_1$ has $\pi_1,\pi_{12}$, $\mathcal{S}_2$ has $\pi_2,\pi_{12}$, and $\mathcal{S}_3$ has $\pi_1,\pi_2$. 
	
	\begin{algorithm}[!t]
		\caption{Our Basic Construction for Communication-Efficient Secure FL in the Shuffle Model of DP} 
		\label{algo:SGD-workflow}
		\begin{algorithmic}[1]
			\Require Each client $\mathcal{C}_i$ holds a local dataset $\mathcal{D}_i$ for $i\in[n]$. 
			\Ensure$ \mathcal{S}_1$ and each client $\mathcal{C}_{i}$ obtain a global model $\theta$ that satisfies $(\varepsilon,\delta)$-DP  for $i\in[n]$. 
			
			\State $\mathcal{S}_{1}$ initializes: ${\theta}_0 \in \mathcal{G}$. 
			\For {$t\in [T]$}\
			\State // \emph{\underline{Client local process:}}
			\For{each client $\mathcal{C}_i$}
			\State \multiline{$\mathcal{C}_i$ chooses uniformly at random a set $\mathcal{U}_{i}$ of $s$ data points.}
			\For{data point $j\in\mathcal{U}_{i}$}
			\State $\mathbf{g}^t_{ij}\leftarrow\nabla_{\theta_t}\ell(\theta_t, \boldsymbol{x}_{ij})$.
			\State $\tilde{\mathbf{g}}^t_{ij} \leftarrow \mathbf{g}^t_{ij} / \max \left\{1, \frac{\left\|\mathbf{g}^t_{ij}\right\|_2}{L}\right\}$.
			\State \multiline{$\boldsymbol{r}^t_{ij} \leftarrow$ \textsf{NoisyGradCmpr}($\tilde{\mathbf{g}}^t_{ij}$). \Comment \emph{Noisy gradient compression under LDP.}}
			\EndFor
			\State \multiline{$\mathcal{C}_i$ splits compressed gradients $\{\boldsymbol{r}^t_{ij}\}_{j\in\mathcal{U}_{i}}$ into two shares and distributes them among $\mathcal{S}_1$ and $\mathcal{S}_2$. }
			\EndFor
			
			\State // \emph{\underline{Server-side computation:}}
			
			\State $\llbracket\boldsymbol{x}\rrbracket \leftarrow \llbracket \left( \{\boldsymbol{r}^t_{1j}\}_{j\in\mathcal{U}_{1}}, \{\boldsymbol{r}^t_{2j}\}_{j\in\mathcal{U}_{2}}, \cdots, \{\boldsymbol{r}^t_{nj}\}_{j\in\mathcal{U}_{n}} \right) \rrbracket$.
			\State \multiline{$\llbracket\pi(\boldsymbol{x})\rrbracket$ $\leftarrow$ \textsf{SecShuffle}($\llbracket \boldsymbol{x}\rrbracket$). \Comment \emph{Secret-shared shuffle of $N=ns$ compressed noisy gradients.}}
			
			\State \multiline{$\{\boldsymbol{r}_i\}_{i\in [B]}$ $\xleftarrow{\text{sample}}$ \textsf{Rec}($\llbracket\pi(\boldsymbol{x})\rrbracket$). \Comment \emph{Reconstruct and sample the first $B=ks$ shuffled compressed gradients.}} \label{code:fl:reconstruct} 
			
			\For{$i\in [B]$}
			\State \multiline{$\mathbf{g}_i\leftarrow$ \textsf{NoisyGradDcmp}($\boldsymbol{r}_i$). \Comment \emph{Decompression.}\label{code:fl:decompress}}
			\EndFor
			
			\State $\overline{\mathbf{g}}^t\leftarrow \frac{\sum_{i=1}^B\mathbf{g}_i}{B}$. \label{code:fl:sample}
			
			\State $\theta_{t+1} \leftarrow \prod_{\mathcal{G}}\left(\theta_t-\eta_t \overline{\mathbf{g}}^t\right)$. \label{code:fl:aggregation}
			
			\EndFor
		\end{algorithmic}
	
	\end{algorithm}

	To achieve additional privacy amplification via subsampling, sampling of shuffled (compressed noisy) gradients after the secure shuffle is conducted once again at the server side.
	Inspired by the sampling strategy in \cite{AISTATS21}, we set $B=ks$ ($k\in[n]$) for analyzing the privacy amplification effect. 
	Such sampling can be securely achieved by $\mathcal{S}_{\{1,2\}}$, which reconstruct the first $B$ elements out of the $N$ (shuffled) compressed noisy gradients by disclosing each other the corresponding secret shares. 
	These $B$ compressed noisy gradients can be regarded as \textit{uniformly} sampled from the initial set of $N$ secret-shared compressed noisy gradients, given that the permutations used in the secret-shared shuffle are uniformly random. 
	
	After secure sampling, these compressed noisy gradients are required to be decompressed, and then integrated to update the global model $\theta_t$ at the $t$-th iteration. 
	Given that we are operating under the assumption of a semi-honest adversary setting in our basic construction, the computations involving gradient decompression, aggregation, and global model update (line \ref{code:fl:decompress} - \ref{code:fl:aggregation}) can be assigned to either $\mathcal{S}_{1}$ or $\mathcal{S}_{2}$. 
	Since we have already let $\mathcal{S}_{1}$ initialize the global model at the beginning, we can assign these computations to $\mathcal{S}_{1}$.

	\subsection{Achieving Malicious Security}
	\label{sec:protocol:malicious}
	
	In our basic construction described in Section \ref{sec:protocol:semi-honest}, we have assumed a semi-honest adversary setting.   
	To offer an integrity guarantee against the malicious adversary defined in Section \ref{sec:threat_model}, we need to have integrity checks for the following operations: (1) shuffle, (2) sampling, (3) decompression, and (4) aggregation of the noisy gradients. 
	Specifically, if a malicious server attempts to deviate from the protocol during these operations, the honest servers will detect this misbehavior and output abort.  
	We present how {\main} guarantees integrity in the presence of a malicious server as follows. 
	
	At a high level, regarding verifying the correctness of a secret-shared shuffle, we follow \cite{NDSS22} to adopt the (post-shuffle) blind MAC verification scheme. 
	In particular, we have each (compressed noisy) gradient locally MACed with a key before the secret-shared shuffle. 
	The MAC and the key are then (securely) shuffled together with this gradient. 
	After shuffling, the shuffled MACs could be \textit{blindly} verified in a batch in the secret sharing domain. 
	
	However, we also notice that the recent work \cite{NDSS24} highlights the vulnerability of solely relying on a post-shuffle blind MAC verification: the attacks proposed in \cite{NDSS24}, if successful, leak information about the underlying permutations in \cite{NDSS22}. 
	This results in the shuffle not random in a malicious server's view, making the privacy amplification by shuffling theorem not applicable in our considered shuffle model of DP, as it necessitates random shuffling of the data \cite{BalleBGN19} (see Section \ref{sec:protocol:verify_shuffle} for a more detailed discussion of the attack). 
	In contrast to \cite{NDSS24}, which attempts to reduce leakage by naively repeating the secret-shared shuffle, we propose a novel defense mechanism to defend against the attacks identified in \cite{NDSS24}. 
	Our approach involves integrating integrity checks into the \textit{online} phase of the secret-shared shuffle, as will be shown in Section \ref{sec:protocol:verify_shuffle}.
	
	In addition to checking the integrity of the secret-shared shuffle, we propose additional checks (Section \ref{sec:protocol:verifiable_aggregation}) to ensure the integrity of (post-shuffle) sampling, decompression, and aggregation results using lightweight cryptographic techniques such as hashing. 
	
	\begin{figure}[!t]
		\centering
		\scalebox{0.9}{
			\fbox
			{
				\shortstack[l]{
					$\mathcal{S}_1$ holds $\pi_1,\boldsymbol{a}'_2,\boldsymbol{b}_2$; $\mathcal{S}_2$ holds $\pi_2,\boldsymbol{a}_1$;  $\mathcal{S}_3$ holds $\pi_1,\pi_2,\boldsymbol{a}'_2,\boldsymbol{b}_2,\boldsymbol{a}_1,\boldsymbol{\Delta}$. \\
					After $\mathcal{S}_2$ sends $\boldsymbol{z}_2\leftarrow \langle \hat{\boldsymbol{x}}\rangle_2 - \boldsymbol{a}_1$ to $\mathcal{S}_1$, $\mathcal{S}_{\{1,3\}}$ conduct:\\
					(1) $\mathcal{S}_1$ locally computes $\hat{\boldsymbol{x}} - \boldsymbol{a}_1$, where $\hat{\boldsymbol{x}} - \boldsymbol{a}_1 \leftarrow \boldsymbol{z}_2+\langle\hat{\boldsymbol{x}}\rangle_1$. \\
					(2) $\mathcal{S}_1$ splits $\hat{\boldsymbol{x}} - \boldsymbol{a}_1$ into two shares and discloses one share to $\mathcal{S}_3$. \\
					(3) $\mathcal{S}_3$ splits $\boldsymbol{a}_1$ into two shares and discloses one share to $\mathcal{S}_{1}$. \\
					(4) $\mathcal{S}_{\{1,3\}}$ calculate $\llbracket \hat{\boldsymbol{x}}\rrbracket$ by summing $\llbracket \hat{\boldsymbol{x}}- \boldsymbol{a}_1\rrbracket$ and $\llbracket\boldsymbol{a}_1\rrbracket$. \\
					(5) $\mathcal{S}_{\{1,3\}}$ calculate $f$ following Eq. \ref{eq:verification} and outputs abort if $f\neq0$.  
				}
		}}
		\caption{The integrity check for $\boldsymbol{z}_2$ sent from $\mathcal{S}_2$.}
		\label{fig:mac_check_z2}
		\vspace{-15pt}
	\end{figure}
	
	\subsubsection{Maliciously Secure Secret-Shared Shuffle}
	\label{sec:protocol:verify_shuffle}
	In this section, we first present how to use blind MAC verification to check the integrity of a secret-shared shuffle. 
	Then we review the attacks (from \cite{NDSS24}) on this verification mechanism, and finally give our defense mechanism against the attacks.
	
	\noindent\textbf{Achieving Malicious Security Using Blind MAC Verification.}
	Our starting point is to follow the general strategy of \cite{NDSS22}, leveraging the blind MAC verification scheme to \textit{blindly} check whether the secret-shared shuffle is performed correctly. 
	The blind MAC verification scheme in \cite{NDSS22} employs Carter-Wegman MAC \cite{WegmanC81}, which is defined as follows: 
	for a compressed noisy gradient $\boldsymbol{r}$ (padded to fixed length $l$), the MAC of this gradient is computed as 
	\begin{equation}
		\label{eq:MAC}
		t \leftarrow \sum_{j=1}^{l} \boldsymbol{k}[j] \cdot \boldsymbol{r}[j], 
	\end{equation}
	\noindent where $\boldsymbol{k}\in\mathbb{Z}^l_p$ is a random key and $\boldsymbol{k}[j]$ denotes the $j$-th element of $\boldsymbol{k}$. 
	Integrating Carter-Wegman MAC, our shuffle protocol can be divided into three steps: (1) client local processing, (2) secret-shared shuffle, and (3) post-shuffle blind MAC check.

	At the local process, each client $\mathcal{C}_i$ follows Eq. \ref{eq:MAC} to compute MAC $t_{ij}$ for each (compressed noisy) gradient $\boldsymbol{r}_{ij}$ using the MAC key $\boldsymbol{k}_{ij}$ for $i\in[n],j\in\mathcal{U}_{i}$. 
	Recall that $\mathcal{U}_{i}$ is the subset containing $s$ sampled data points. 
	Here, $\boldsymbol{k}_{ij}$ is formed by $\mathcal{C}_i$ sampling two MAC key seeds $\mathsf{m}_{ij,1}, \mathsf{m}_{ij,2} \in \mathbb{Z}_p$ and computing 
	\begin{equation}
		\notag
		\begin{aligned}
			\boldsymbol{k}_{ij}  = (\boldsymbol{k}_{ij}[1], \cdots, \boldsymbol{k}_{ij}[l]) 
			\leftarrow \textrm{G}(\mathsf{m}_{ij,1}) + \textrm{G}(\mathsf{m}_{ij,2}),  
		\end{aligned}
	\end{equation}
	where G: $\mathbb{Z}_p \rightarrow \mathbb{Z}^{l}_p$ is a PRG. 
	In this way, the MAC key $\boldsymbol{k}_i$ is split into two secret shares $(\langle t_{ij} \rangle_1, \langle \boldsymbol{r}_{ij} \rangle_1, \mathsf{m}_{ij,1})$ and $(\langle t_{ij} \rangle_2, \langle \boldsymbol{r}_{ij} \rangle_2, \mathsf{m}_{ij,2})$, which are then sent to $\mathcal{S}_1$ and $\mathcal{S}_2$, respectively. 
	Upon receiving the shares, $\mathcal{S}_1$ locally runs $\langle \boldsymbol{k}_{ij} \rangle_1 \leftarrow \textrm{G}(\mathsf{m}_{ij,1})$ and $\mathcal{S}_2$ locally runs $\langle \boldsymbol{k}_{ij} \rangle_2 \leftarrow \textrm{G}(\mathsf{m}_{ij,2})$. 
	Finally, $\mathcal{S}_{\{1,2\}}$ hold secret-shared elements $\llbracket \boldsymbol{x}_{ij} \rrbracket = (\llbracket t_{ij} \rrbracket, \llbracket \boldsymbol{r}_{ij} \rrbracket, \llbracket \boldsymbol{k}_{ij} \rrbracket)$ of size-($2l+1$) for $i\in[n], j\in\mathcal{U}_{i}$ and a length-$N$ vector $\llbracket \boldsymbol{x} \rrbracket = (\llbracket \boldsymbol{x}_{11} \rrbracket, \cdots,\llbracket \boldsymbol{x}_{1s} \rrbracket,\cdots,$ $ \llbracket \boldsymbol{x}_{n1} \rrbracket, \cdots,\llbracket \boldsymbol{x}_{ns} \rrbracket)$.

	At the second step, $\mathcal{S}_{\{1,2,3\}}$ only need to follow the same secret-shared shuffle process to shuffle $\llbracket \boldsymbol{x} \rrbracket$ as described in Section \ref{sec:protocol:semi-honest}. 
	If a malicious server tampers with any information in this process, misbehavior will be detected by blind MAC verification in the third step after shuffling. 
	The blind MAC verification is performed by $\mathcal{S}_{\{1,2,3\}}$ firstly calculating the secret-shared MAC 
	$$
	\sum_{i=1}^{n}\sum_{j=1}^{s} \left( \sum_{k=1}^{l} \llbracket \boldsymbol{k}_{ij}[k] \rrbracket \llbracket \boldsymbol{r}_{ij}[k] \rrbracket\right). 
	$$
	The secret-shared multiplication is computed using Beaver triples provided by $\mathcal{S}_{3}$. 
	Then the verification can be finished by computing and reconstructing 
	\begin{small}
	\begin{equation}
		\label{eq:verification}
		\begin{split}
			f = & \textsf{Rec}\left[\llbracket w \rrbracket \left(\sum_{i=1}^{n}\sum_{j=1}^{s} \llbracket t_{ij} \rrbracket -   \sum_{i=1}^{n}\sum_{j=1}^{s} \left( \sum_{k=1}^{l} \llbracket \boldsymbol{k}_{ij}[k] \rrbracket \llbracket \boldsymbol{r}_{ij}[k] \rrbracket \right) \right)\right], 
		\end{split}
	\end{equation}	
	\end{small}
	where $\llbracket w \rrbracket$ is a random secret-share of a random $w\in\mathbb{Z}_p$, obtained by servers sampling random shares. 
	If there does not exist a malicious server, the secret-shared MAC calculated using $\llbracket \boldsymbol{r}_{ij} \rrbracket$ and $\llbracket \boldsymbol{k}_{ij} \rrbracket$ should match the initial MAC $t_{ij}$ generated at the client's local process. 
	Therefore, if the verification outputs $f=0$, the correctness of the shuffle can be ensured. 
	Otherwise, the honest servers output abort and stop. 
	In this way, all the MACs are blindly verified by the servers together as one batch in the secret sharing domain. 
	
	Note that $\mathcal{S}_{1}$ or $\mathcal{S}_{2}$, if malicious, could lie about its share of $f$ to force $f=0$. 
	For example, $\mathcal{S}_{1}$ could sets $\langle f \rangle_1\leftarrow-\langle f \rangle_2$ if he first receives the share $\langle f \rangle_2$ sent by $\mathcal{S}_{2}$ for reconstructing $\llbracket f \rrbracket$. 
	To detect this misbehavior, we initiate a process where servers $\mathcal{S}_{\{1,2\}}$ exchange hashes of their shares of $f$ before disclosing the actual shares. 
	The exchanged hashes prevent any attempt by a malicious server to tamper with its shares in order to forge MACs.

	\begin{figure}[!t]
		\centering
		\scalebox{0.9}{
		\fbox
		{
			\shortstack[l]{
				$\mathcal{S}_1$ holds $\pi_1,\boldsymbol{a}'_2,\boldsymbol{b}_2$; $\mathcal{S}_2$ holds $\pi_2,\boldsymbol{a}_1$;  $\mathcal{S}_3$ holds $\pi_1,\pi_2,\boldsymbol{a}'_2, \boldsymbol{b}_2,\boldsymbol{a}_1,\boldsymbol{\Delta}$. \\
				After $\mathcal{S}_1$ sends $\boldsymbol{z}_1 \leftarrow \pi_1(\hat{\boldsymbol{x}} - \boldsymbol{a}_1)-\boldsymbol{a}'_2$ to $\mathcal{S}_2$, $\mathcal{S}_{\{2,3\}}$ conduct:\\
				(1) $\mathcal{S}_2$ locally computes $\pi_2(\boldsymbol{z}_1)$. \\
				(2) $\mathcal{S}_2$ splits $\pi_2(\boldsymbol{z}_1)$ into two shares and discloses one share to $\mathcal{S}_3$. \\
				(3) $\mathcal{S}_3$ splits $\pi_2(\pi_1(\boldsymbol{a}_1)+\boldsymbol{a}'_2)$ into two shares and discloses one\\  share to $\mathcal{S}_{2}$. \\
				(4) $\mathcal{S}_{\{2,3\}}$ calculate $\llbracket \pi_2(\pi_1(\hat{\boldsymbol{x}}))\rrbracket$ by summing  $\llbracket\pi_2(\boldsymbol{z}_1)\rrbracket$ and \\ $\llbracket\pi_2(\pi_1(\boldsymbol{a}_1)+\boldsymbol{a}'_2)\rrbracket$. \\
				(5) $\mathcal{S}_{\{2,3\}}$ calculate $f$ following Eq. \ref{eq:verification} and outputs abort if $f\neq0$.  
			}
		}}
		\caption{The integrity check for $\boldsymbol{z}_1$ sent from $\mathcal{S}_1$.}
		\label{fig:mac_check_z1}
		\vspace{-15pt}
	\end{figure}

	\noindent\textbf{Online Selective Failure Attacks \cite{NDSS24}.}
	Although the (post-shuffle) blind MAC verification provides an effective solution to detect misbehavior by a malicious server, recent work \cite{NDSS24} has shown that checking all the MACs in a batch still allows potential privacy leakage. 
	Specifically, in the secure shuffling process, $\mathcal{S}_{1}$ and $\mathcal{S}_{2}$ could launch the online selective failure attacks. 
	We first present how $\mathcal{S}_2$ launches an attack: before sending $\boldsymbol{z}_2 \leftarrow \langle \hat{\boldsymbol{x}}\rangle_2 - \boldsymbol{a}_1$ to $\mathcal{S}_{1}$, $\mathcal{S}_{2}$ locally samples a vector $\boldsymbol{u}$ the same structure as $\boldsymbol{x}$, with only one non-zero entry at position $q$ and other entries being 0. 
	$\mathcal{S}_{2}$ then sends $\boldsymbol{z}_2 \leftarrow \langle \hat{\boldsymbol{x}}\rangle_2 - \boldsymbol{a}_1 + \boldsymbol{u}$ to $\mathcal{S}_{1}$, instead of sending $\boldsymbol{z}_2 \leftarrow \langle \hat{\boldsymbol{x}}\rangle_2 - \boldsymbol{a}_1$. 
	Before the post-shuffle blind MAC verification process, $\mathcal{S}_{2}$ randomly guesses a position $p$ as the permuted position of $q$ after the secret-shared shuffle. 
	Then $\mathcal{S}_{2}$ shifts the non-zero entry of $\boldsymbol{u}$ from position $q$ to $p$ to create a new vector $\boldsymbol{v}$, and then sets the output of secret-shared shuffle to be $\langle\boldsymbol{y}\rangle_1\leftarrow\boldsymbol{b}_2 - \boldsymbol{v}$, instead of $\langle\boldsymbol{y}\rangle_1\leftarrow\boldsymbol{b}_2$. 
	If $\mathcal{S}_{2}$ makes a correct guess (with a probability of $1/N$), the added error $\boldsymbol{u}$ will be canceled by $\boldsymbol{v}$ before the post-shuffle verification. 
	Therefore, the integrity check will still output $f=0$, and the misbehavior will be left undetected. 
	Similarly, $\mathcal{S}_{1}$ could launch the same attack when sending $\boldsymbol{z}_1 \leftarrow \pi_1(\boldsymbol{z}_2+\langle\hat{\boldsymbol{x}}\rangle_1) - \boldsymbol{a}'_2$ to $\mathcal{S}_2$. 
	
	\noindent\textbf{Adding Integrity Checks During Secure Shuffling Process.} In practice, it is hard to directly verify the legitimacy of $\boldsymbol{z}_1,\boldsymbol{z}_2$. 
	The work \cite{NDSS24} that first proposes the online selective failure attack repeatedly performs secret-shared shuffle for $K$ times to reduce the success probability of a selective failure attack from $1/N$ to $1/N^K$. 
	However, this straightforward leakage reduction method significantly increases both communication and computation costs of a secure shuffle for $K$ times in our setting, while still being susceptible to selective failure attacks with a probability of $1/N^K$.
	
	In contrast, we propose a novel verification mechanism to check the legitimacy of $\boldsymbol{z}_1,\boldsymbol{z}_2$ during the secure shuffling process. 
	Our proposed verification mechanism effectively protects the integrity against selective failure attacks, at the expense of only two additional blind MAC verifications and four rounds of data transmission. 
	
	Our verification mechanism is proposed through an in-depth examination of the secure shuffling process. 
	In case of $\mathcal{S}_2$ being malicious in the aforementioned attack, we need to verify the legitimacy of $\boldsymbol{z}_2$ sent by $\mathcal{S}_2$. 
	After $\mathcal{S}_2$ sends $\boldsymbol{z}_2\leftarrow \langle \hat{\boldsymbol{x}}\rangle_2 - \boldsymbol{a}_1$ to $\mathcal{S}_1$, we let $\mathcal{S}_{\{1,3\}}$ follow the procedure given in Fig. \ref{fig:mac_check_z2} to check whether $\boldsymbol{z}_2$ is tampered with. 
	Suppose that $\mathcal{S}_2$ has added an error at position $q$ before sending $\boldsymbol{z}_2$. 
	As illustrated in Fig. \ref{fig:mac_check_z2}, we can exclude the malicious $\mathcal{S}_2$ in the verification process and only let the honest servers $\mathcal{S}_{\{1,3\}}$ conduct blind MAC verification based on $\llbracket \hat{\boldsymbol{x}}\rrbracket$. 
	This prevents $\mathcal{S}_2$ from canceling out the added error, leading to a failed integrity check. 
	Such integrity check for $\boldsymbol{z}_2$ only requires a blind MAC verification and two rounds of communication. 
	Note that in the verification process, $\mathcal{S}_1$ has no access to $\boldsymbol{a}_1$ and $\mathcal{S}_{3}$ has no access to $\hat{\boldsymbol{x}} - \boldsymbol{a}_1$. 
	This guarantees the confidentiality of $\hat{\boldsymbol{x}}$ as neither $\mathcal{S}_{\{1,3\}}$ can deduce $\hat{\boldsymbol{x}}$ based on their individual knowledge. 
    In case of $\mathcal{S}_1$ being malicious, we can apply a similar strategy to verify the legitimacy of $\boldsymbol{z}_1$ from $\mathcal{S}_1$, as illustrated in Fig. \ref{fig:mac_check_z1}. 
	
	We present the complete maliciously secure secret-shared shuffle protocol, integrated with our proposed defense mechanism, in Algorithm \ref{algo:malicious-shuffle} in Appendix \ref{appdix:complete_protocols}.

	\subsubsection{Maliciously Secure Gradient Sampling, Decompression and Aggregation}
	\label{sec:protocol:verifiable_aggregation}
	Recall that in our basic construction, the post-shuffle computations involving gradient sampling, decompression, and aggregation (line \ref{code:fl:reconstruct}-\ref{code:fl:aggregation}) can be assigned to either $\mathcal{S}_1$ or $\mathcal{S}_2$.
	However, ensuring the correctness of these operations becomes more challenging for achieving malicious security: (1) a malicious server could tamper with its share before reconstructing secret-shared compressed noisy gradients, and (2) a malicious server could tamper with the reconstructed compressed noisy gradients (line \ref{code:fl:reconstruct}), incorrectly decompress gradients (line \ref{code:fl:decompress}), or output an incorrect aggregation result (line \ref{code:fl:aggregation}). 
	
	To detect the aforementioned misbehavior, {\main} verifies the integrity of sampling, decompression, and aggregation results by $\mathcal{S}_{\{1,2\}}$ collaboratively. 
	Given that $\mathcal{S}_{\{1,2\}}$ hold shares of shuffled compressed noisy gradients, a malicious server cannot drop or add fake gradients without detection. 
	Thus, checking the integrity of gradient sampling results is facilitated: We only need to ensure that $\mathcal{S}_{\{1,2\}}$ transmit the correct shares required for reconstructing the first $B$ shuffled compressed noisy gradients. 
	The verification can be achieved by checking the MAC of each gradient, as follows. 
	Firstly, let $\mathcal{S}_{\{1,2\}}$ reconstruct the first $B$ shuffled compressed noisy gradients (along with the MACs and MAC keys): $\mathcal{S}_{\{1,2\}}$ get $\{ \boldsymbol{x}_{i}  = ( t_{i} ,  \boldsymbol{r}_{i} ,  \boldsymbol{k}_{i} )\}_{i\in[B]}$. 
	Similar to reconstructing $\llbracket f \rrbracket$ in Section \ref{sec:protocol:verify_shuffle}, $\mathcal{S}_{\{1,2\}}$ first locally compute the hashes of the first $B$ shuffled compressed noisy gradients' shares before disclosing the actual shares. 
	Then, $\mathcal{S}_{\{1,2\}}$ exchange hashes and check whether the hashes match. 
	If no inconsistencies are detected, $\mathcal{S}_{\{1,2\}}$ check each MAC following Eq. \ref{eq:MAC} and output abort if any verification fails. 
	
	Then we can check the integrity of decompression results and aggregation results together: Let $\mathcal{S}_{\{1,2\}}$ locally decompress the sampled compressed noisy gradients $\{\boldsymbol{r}_{i}\}_{i\in[B]}$ to get $\{\mathbf{g}_{i}\}_{i\in[B]}$. 
	Subsequently, $\mathcal{S}_{\{1,2\}}$ locally compute $\overline{\mathbf{g}}^t\leftarrow \frac{\sum_{i=1}^B\mathbf{g}_i}{B}$ and update the global model following $\theta_{t+1} \leftarrow \prod_{\mathcal{G}}\left(\theta_t-\eta_t \overline{\mathbf{g}}^t\right)$. 
	After that, $\mathcal{S}_{\{1,2\}}$ hash their locally computed $\theta_{t+1}$ and send each other the hash. 
	If the hashes do not match, the protocol outputs abort.

	The complete protocol of the maliciously secure construction of {\main} is given in Algorithm \ref{algo:malicious-SGD-workflow} in Appendix \ref{appdix:complete_protocols}. 
	We formally prove its security in Appendix \ref{appdix:security_proof}.

	\section{Theoretical Analysis of Privacy, Communication, and Convergence}
	\label{sec:theoretical-results}
	
	In this section, we analyze the  privacy guarantee of {\main} by first composing privacy amplification by subsampling with privacy amplification by shuffling at each iteration and then analyzing the RDP of the overall FL process. 
	We also provide analysis in terms of the communication cost and model convergence of {\main}. 
	Theorem \ref{thm:analysis_overview} gives the main analytical results. 

	\begin{thm}
		\label{thm:analysis_overview}
		For sampling rate $\gamma = \frac{B}{M}$, where $B=ks$ and $M=nr$, if we run {\main} over $T$ iterations, then we have: 
		\begin{itemize}
			\item \textbf{Privacy}: {\main} satisfies ($\varepsilon,\delta$)-DP, where $\varepsilon$ is defined as below for $\varepsilon_0 = \mathcal{O}(1)$ and $\delta\in(0,1)$: 
			\begin{equation}
				\notag
				\begin{aligned}
					\quad\quad \varepsilon=\min_\lambda \left( T\varepsilon(\lambda) + \log (1-{1}/{\lambda}) 
					+ \frac{\log ({1}/{\delta})-\log (\lambda)}{\lambda-1} \right).  
				\end{aligned}
			\end{equation}
			Here $\varepsilon(\lambda)$ is the RDP guarantee of each iteration that composes shuffling and subsampling: $ \varepsilon(\lambda) = \frac{\lambda \log^2(1+\gamma(e^{\tilde{\varepsilon}} - 1)) }{2},$
			\noindent where $
			\tilde{\varepsilon}=\mathcal{O}\left(\min \left\{\varepsilon_0, 1\right\} e^{\varepsilon_0} \sqrt{\frac{\log (1 / \tilde{\delta})}{\gamma M}}\right)$ and $\tilde{\delta}\in(0,1)$. 
			
			\item \textbf{Communication}: The complexity of client-server communication and inter-server communication in {\main} is $\mathcal{O}(N)$ for processing $N$ compressed gradients of $s+1$ bits each, where $s$ denotes the length of a random seed.

			\item \textbf{Convergence}: 
			Define the (distributed) empirical risk minimization (ERM) problem: 
			$\arg \min _{\theta \in \mathcal{G}}\left(F(\theta):=\frac{1}{n} \sum_{i=1}^n F_i(\theta)\right)$,
			where $\mathcal{G}\subset \mathbb{R}^d$ is a convex set and $F_i(\theta)=\frac{1}{r}\sum_{j=1}^{r} \ell(\theta,\boldsymbol{x}_{ij})$ is the local loss function at $\mathcal{C}_i$ for $i\in[n]$.
			Assume a convex and $L$-Lipschitz continuous function $\ell: \mathcal{G}\times\mathcal{D} \rightarrow \mathbb{R}$ defined over $\mathcal{G}$ with diameter $D$. 
			By letting $\theta^*=\mathop{\arg\min}_{\theta \in \mathcal{G}} F(\theta)$ denote the optimal solution of the distributed ERM problem and $\eta_t = \frac{D}{G\sqrt{t}}$, where $G=L\sqrt{1+\frac{14d}{\gamma M}(\frac{e^{\varepsilon_0}+1}{e^{\varepsilon_0}-1})^2}$, for any $T>1$, the following holds: 
			\begin{small}
			\begin{equation}
				\notag
				\begin{aligned}
					&\mathbb{E}[F(\theta_T)] - F(\theta^*) \leq \mathcal{O}\left(\frac{L D \log (T)}{\sqrt{T}} \sqrt{\frac{14 d}{\gamma M}}\left(\frac{e^{\varepsilon_0}+1}{e^{\varepsilon_0}-1}\right)\right). 
				\end{aligned}
			\end{equation}
			\end{small}
		\end{itemize} 
	\end{thm}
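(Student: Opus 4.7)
The theorem bundles three largely independent claims, so my plan is to handle each in turn, reusing the structural properties already established in the paper (Lemmas \ref{lem:NoisyGradCmpr}, \ref{lem:sequantial}, \ref{lem:rdp_to_dp}) together with off-the-shelf amplification results cited earlier.

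\emph{Privacy.} I would start from the per-gradient $\varepsilon_0$-LDP guarantee of \textsf{NoisyGradCmpr} (Lemma \ref{lem:NoisyGradCmpr}). For a single iteration, the server's view after the shuffle and the post-shuffle sampling of $B=ks$ out of $N=ns$ compressed noisy gradients must be translated into a differential privacy guarantee against a neighboring change in a single data point. My plan is to first apply the privacy amplification by shuffling theorem of \cite{BalleBGN19} over the $M=nr$ underlying (client-side-sampled) reports in order to replace $\varepsilon_0$ with the amplified $\tilde{\varepsilon} = \mathcal{O}\bigl(\min\{\varepsilon_0,1\}e^{\varepsilon_0}\sqrt{\log(1/\tilde\delta)/(\gamma M)}\bigr)$, and then compose this with privacy amplification by subsampling \cite{BalleBG18} at rate $\gamma = B/M$ to obtain a DP guarantee of each iteration of the form $\bigl(\log(1+\gamma(e^{\tilde\varepsilon}-1)),\gamma\tilde\delta\bigr)$. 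I would then convert this guarantee to $(\lambda,\varepsilon(\lambda))$-RDP, using the standard bound $\varepsilon(\lambda)\le \lambda\log^2(1+\gamma(e^{\tilde\varepsilon}-1))/2$ for the subsampled mechanism. Composition across $T$ iterations is immediate via Lemma \ref{lem:sequantial}, yielding $T\varepsilon(\lambda)$-RDP for the overall training. Finally, an application of Lemma \ref{lem:rdp_to_dp} gives the advertised $(\varepsilon,\delta)$-DP bound after taking the minimum over $\lambda$. The subtle step here, and the one requiring the most care, is justifying why the client-side subsampling of $s$ points out of $r$ together with post-shuffle sampling of $B$ out of $N$ is equivalent to a single subsampling mechanism with rate $\gamma = B/M = B/(nr)$; this is precisely the point at which our setting differs from the single-data-point setting of \cite{NeurIPS21,CCS21} and where the argument cannot simply cite existing subsampling amplification results verbatim.

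\emph{Communication.} This part is essentially an accounting exercise. I would observe that each client transmits $s$ compressed noisy gradients, each of size $s+1$ bits (seed plus sign), together with two MAC-key seeds of $\mathcal{O}(1)$ bits and a single scalar MAC, so the client-to-server cost per client is $\mathcal{O}(s)$ and summing over clients gives $\mathcal{O}(N)$ in total. For the inter-server communication, the secret-shared shuffle and the MAC-based integrity checks each transmit vectors of length $N$ with entries of the same small bit-length (seeds, signs, and padded MACs), and the offline correlation phase transmits only $\mathcal{O}(1)$-sized seeds plus a single length-$N$ vector $\boldsymbol{\Delta}$, again yielding $\mathcal{O}(N)$. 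Adding the two contributions preserves the $\mathcal{O}(N)$ bound.

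\emph{Convergence.} I would adopt the standard analysis for projected SGD on a convex $L$-Lipschitz objective over a compact convex set of diameter $D$, with decaying step size $\eta_t = D/(G\sqrt{t})$. The key ingredient I pull from Lemma \ref{lem:NoisyGradCmpr} is that the decompressed perturbed gradient is an unbiased estimator of the clipped gradient with bounded second moment of order $L^2 d ((e^{\varepsilon_0}+1)/(e^{\varepsilon_0}-1))^2$, so averaging $B=\gamma M$ independently-sampled such estimators gives a stochastic gradient whose second moment is bounded by $G^2 = L^2(1 + 14d/(\gamma M)\cdot((e^{\varepsilon_0}+1)/(e^{\varepsilon_0}-1))^2)$. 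Plugging this $G$ into the textbook bound $\mathbb{E}[F(\bar\theta_T)] - F(\theta^\star) \le \mathcal{O}(GD\log(T)/\sqrt{T})$ for projected SGD under $L$-Lipschitz convex losses and simplifying yields the claimed rate. The main obstacle across the three parts is clearly the privacy part: verifying that the composition of the two-level sampling with shuffling yields the stated amplified RDP bound in our multi-data-point setting, rather than a weaker bound that would follow from naively invoking existing results.
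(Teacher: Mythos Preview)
Your proposal is correct and follows essentially the same route as the paper: for privacy you apply shuffling amplification \cite{BalleBGN19} followed by subsampling amplification \cite{BalleBG18} per iteration, convert the resulting approximate-DP guarantee to RDP via the Bun--Steinke-type bound, compose over $T$ rounds with Lemma~\ref{lem:sequantial}, and finish with Lemma~\ref{lem:rdp_to_dp}; for communication you do the same accounting of fixed-length seeds/signs; and for convergence you bound the second moment of $\overline{\mathbf g}^t$ by $G^2$ (the paper cites \cite[Corollary~1]{AISTATS21} for the variance step) and invoke the standard projected-SGD rate of \cite{Shamir013}. Your flagged ``subtle step'' about the two-level sampling being equivalent to a single rate-$\gamma$ subsample is indeed the crux, and the paper handles it exactly as you anticipate, by appealing to the analysis of \cite{AISTATS21}.
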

	
	We provide the proof for Theorem \ref{thm:analysis_overview} in Appendix \ref{appendix:analysis_overview}.

	\begin{table}[!t]
		
		\centering
		\caption{Model Architecture for MNIST.}
		\label{tab:mnist_model}
		\setlength\tabcolsep{3pt}
		
		\scalebox{0.9}{
			\begin{tabular}{|c|c|}
				\hline \text { Layer } & \text { Parameters } \\
				\hline \hline \text { Convolution } & 16 \text { filters of } 8 $\times$ 8, \text { Stride 2 } \\
				\text { Max-Pooling } & 2 $\times$ 2 \\
				\text { Convolution } & 32 \text { filters of } 4 $\times$ 4, \text { Stride 2 } \\
				\text { Max-Pooling } & 2 $\times$ 2 \\
				\text { Fully connected } & 32 { units } \\
				\text { Softmax } & 10 \text { units } \\
				\hline
			\end{tabular}
		}
		\vspace{-10pt}
	\end{table}
	
	\begin{figure*}[!t]
		
		\centering
		
		\begin{minipage}[t]{0.233\linewidth}
			\centering
			\includegraphics[width=\linewidth]{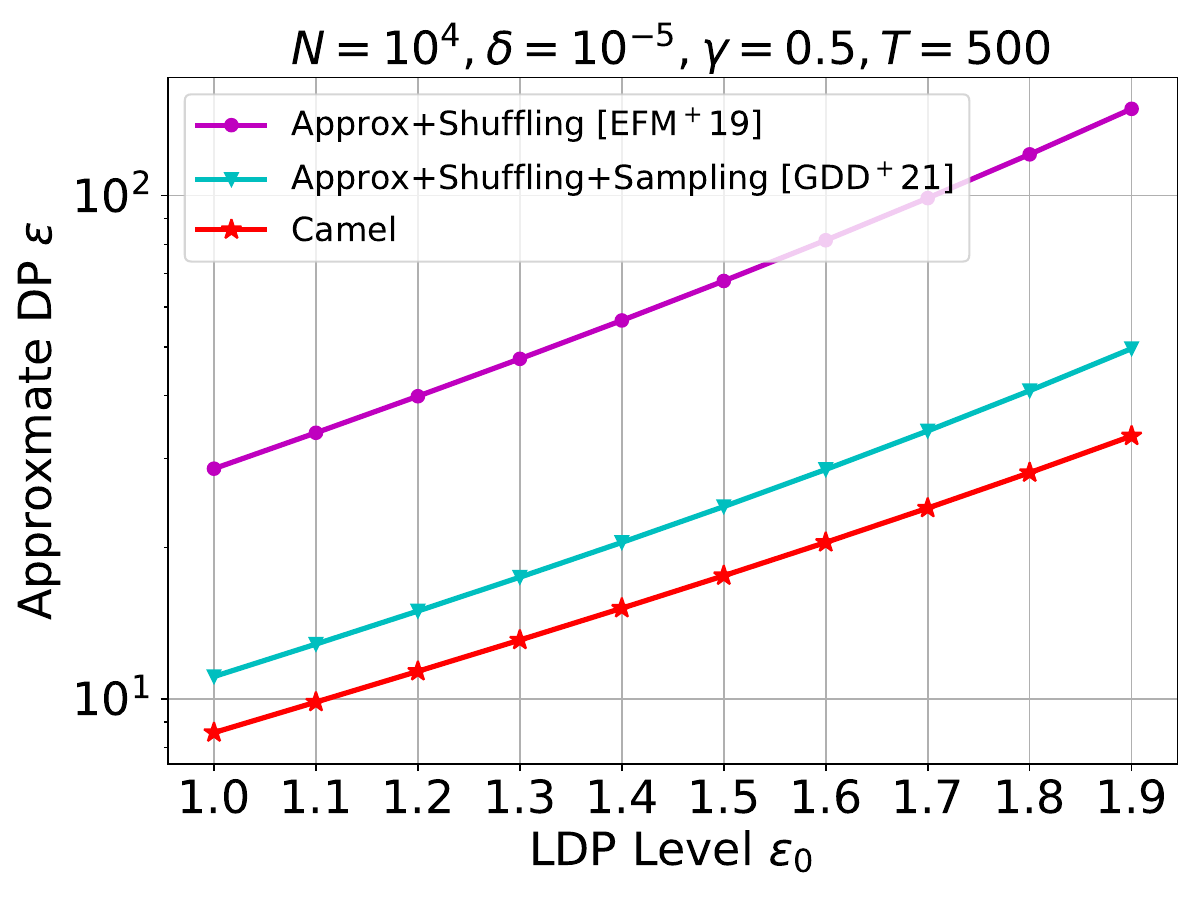}\\(a)
		\end{minipage}
		\begin{minipage}[t]{0.233\linewidth}
			\centering
			\includegraphics[width=\linewidth]{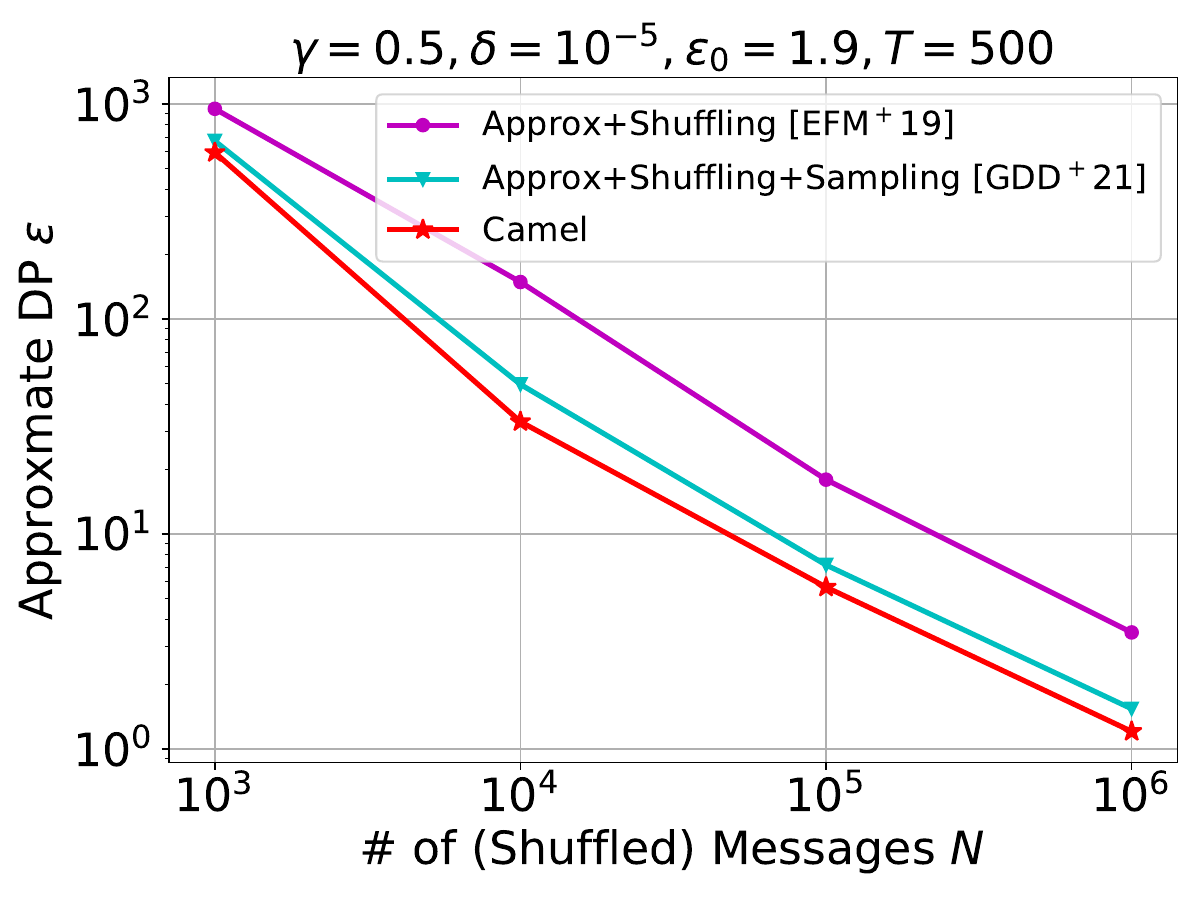}\\(b)
		\end{minipage}
		\begin{minipage}[t]{0.233\linewidth}
			\centering
			\includegraphics[width=\linewidth]{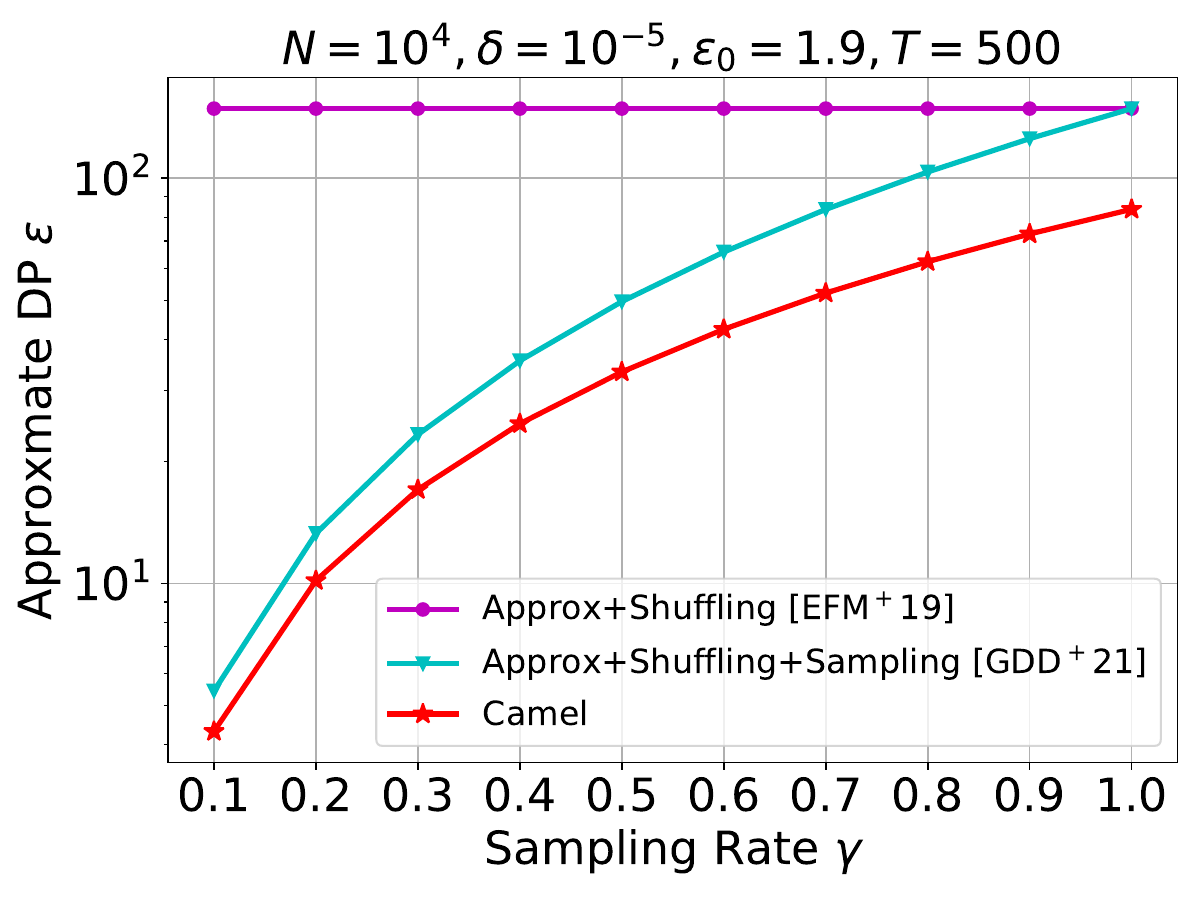}\\(c)
		\end{minipage}
		\begin{minipage}[t]{0.233\linewidth}
			\centering
			\includegraphics[width=\linewidth]{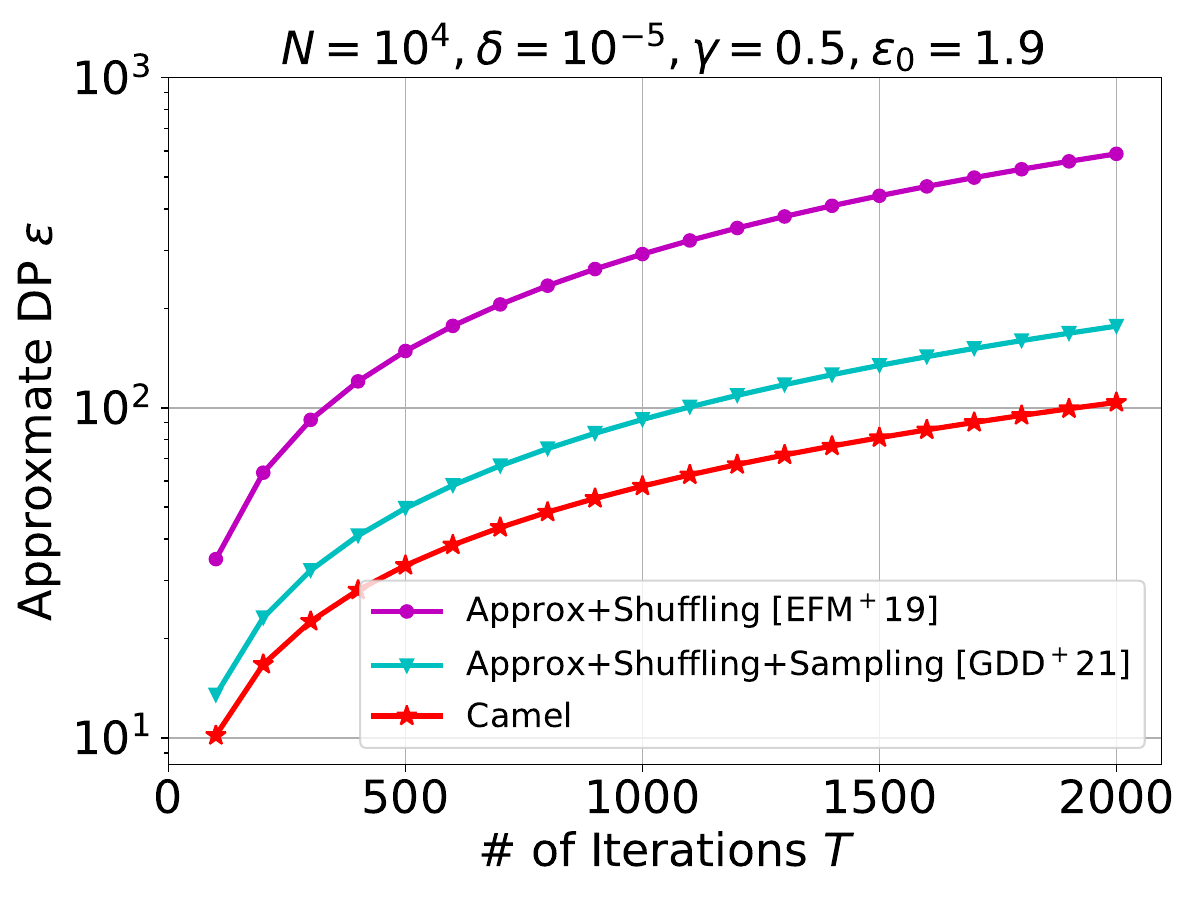}\\(d)
		\end{minipage}
		\caption{Comparison of different bounds on the Approximate ($\varepsilon,\delta$)-DP with fixed $\delta=10^{-5}$: (1) Approximate DP obtained from our derived bound in Theorem \ref{thm:analysis_overview} (converted from RDP to ($\varepsilon,\delta$)-DP) (red); (2) Approximate DP obtained from the bound given in \cite{AISTATS21} (cyan); and (3) Approximate DP obtained from the bound given in \cite{Erlingsson20} (magenta).} 
		\label{fig:numerical_bounds}
		\vspace{-10pt}
		
	\end{figure*}

	\section{Experimental Results}
	\label{sec:exp}
	
	\begin{figure}[!t]
		
		\centering
		
		\begin{minipage}[t]{0.45\linewidth}
			\centering
			\includegraphics[width=\linewidth]{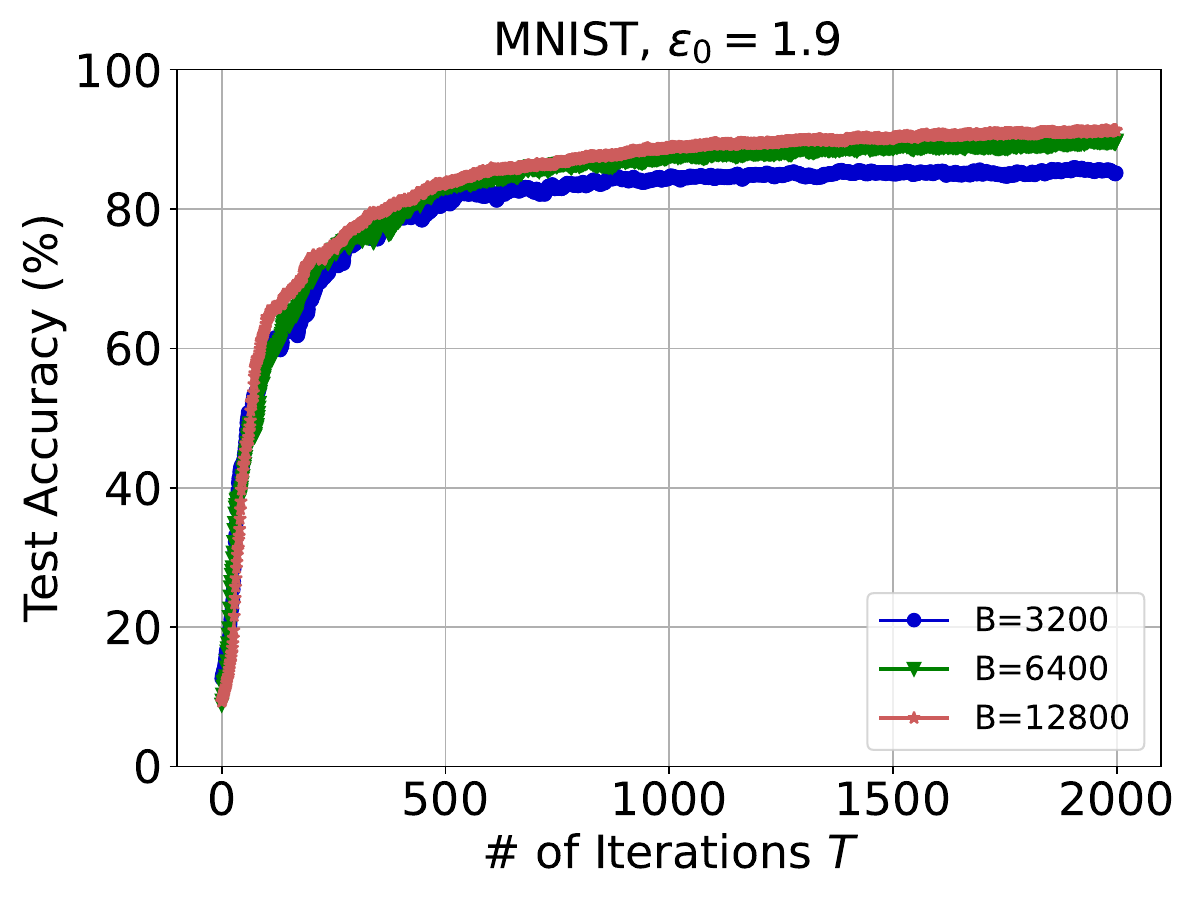}\\(a)
		\end{minipage}
		\begin{minipage}[t]{0.45\linewidth} 
			\centering
			\includegraphics[width=\linewidth]{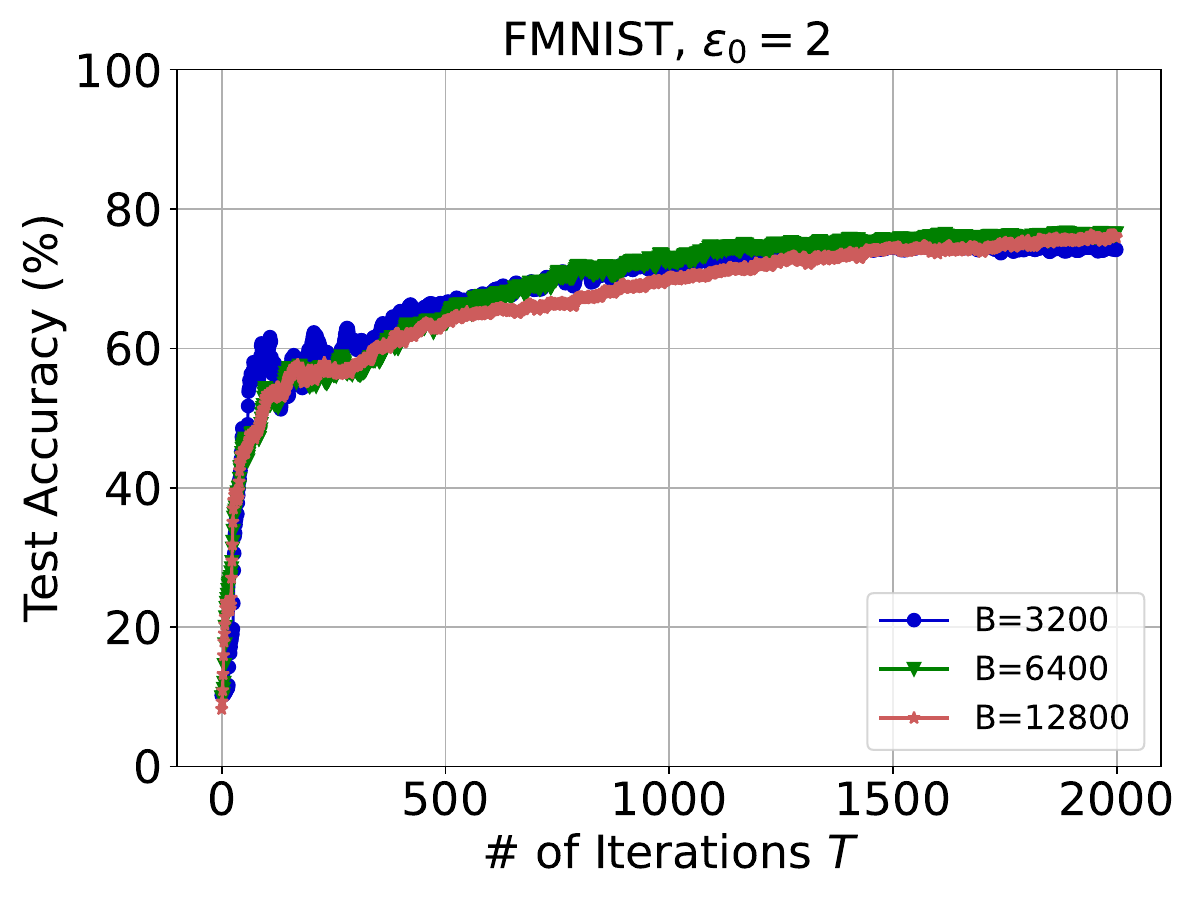}\\(b) 
		\end{minipage}
		
		\caption{Utility performance of {\main}, by fixing $\varepsilon_0=1.9$ for the MNIST dataset, $\varepsilon_0=2.0$ for the FMNIST dataset, and varying $T\in[2000]$ and $B\in\{3200,6400,12800\}$. } 
		\label{fig:utility_vary_T}
		\vspace{-10pt}
		
	\end{figure}
	
	\begin{figure}[!t]
		\centering
		\begin{minipage}[t]{0.233\textwidth}
			\centering
			\includegraphics[width=\textwidth]{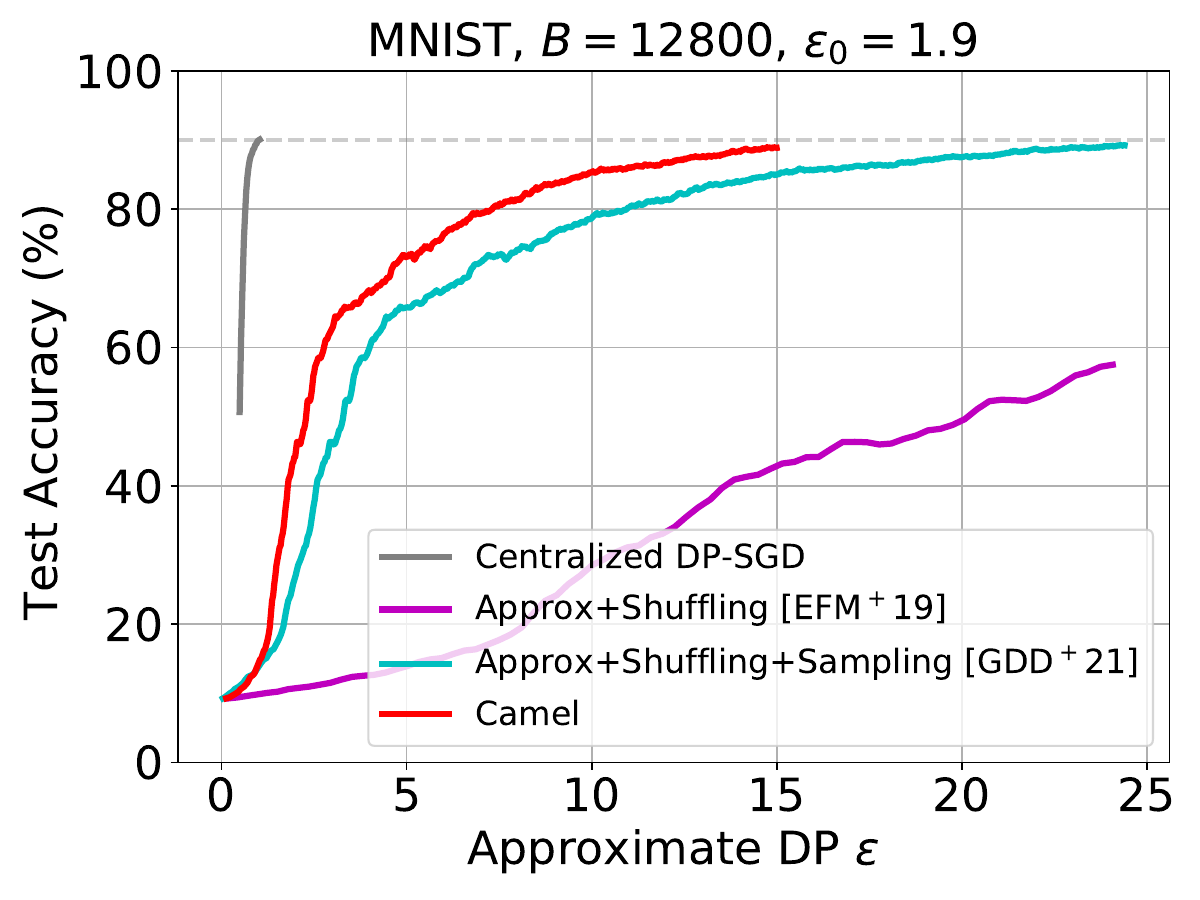}\\{(a)}
		\end{minipage}
		\hfill
		\begin{minipage}[t]{0.233\textwidth}
			\centering
			\includegraphics[width=\textwidth]{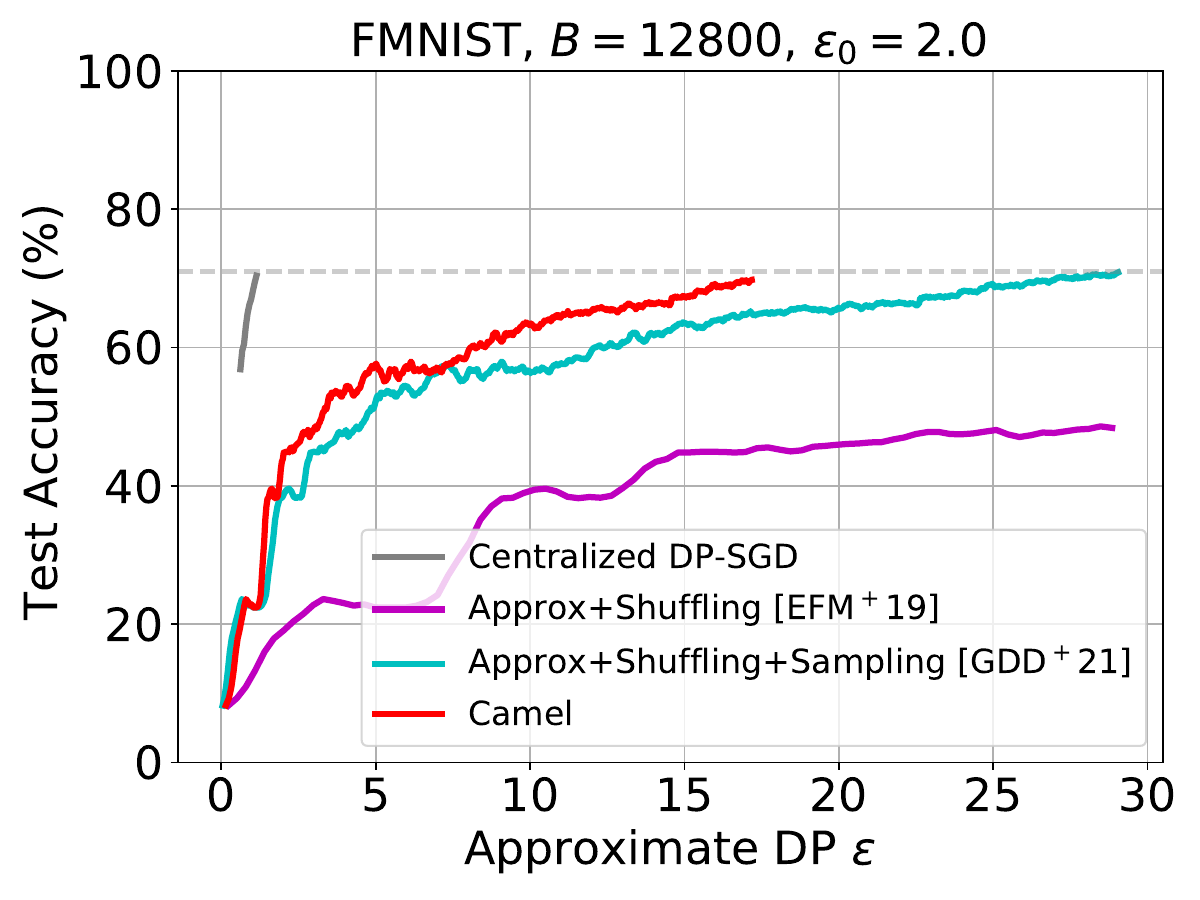}\\{(b)}
		\end{minipage}
		\\
		\begin{minipage}[t]{0.233\textwidth}
			\centering
			\includegraphics[width=\textwidth]{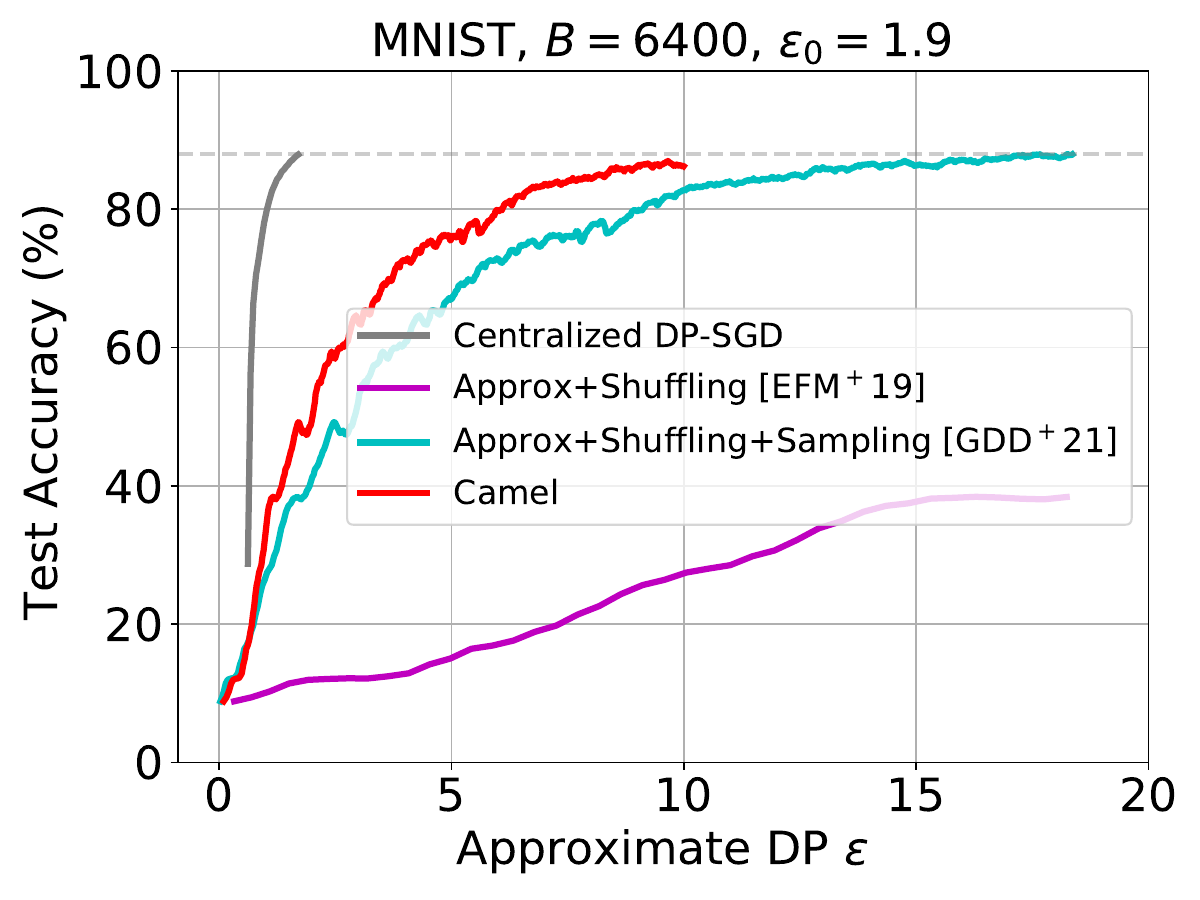}\\{(c)}
		\end{minipage}
		\hfill
		\begin{minipage}[t]{0.233\textwidth}
			\centering
			\includegraphics[width=\textwidth]{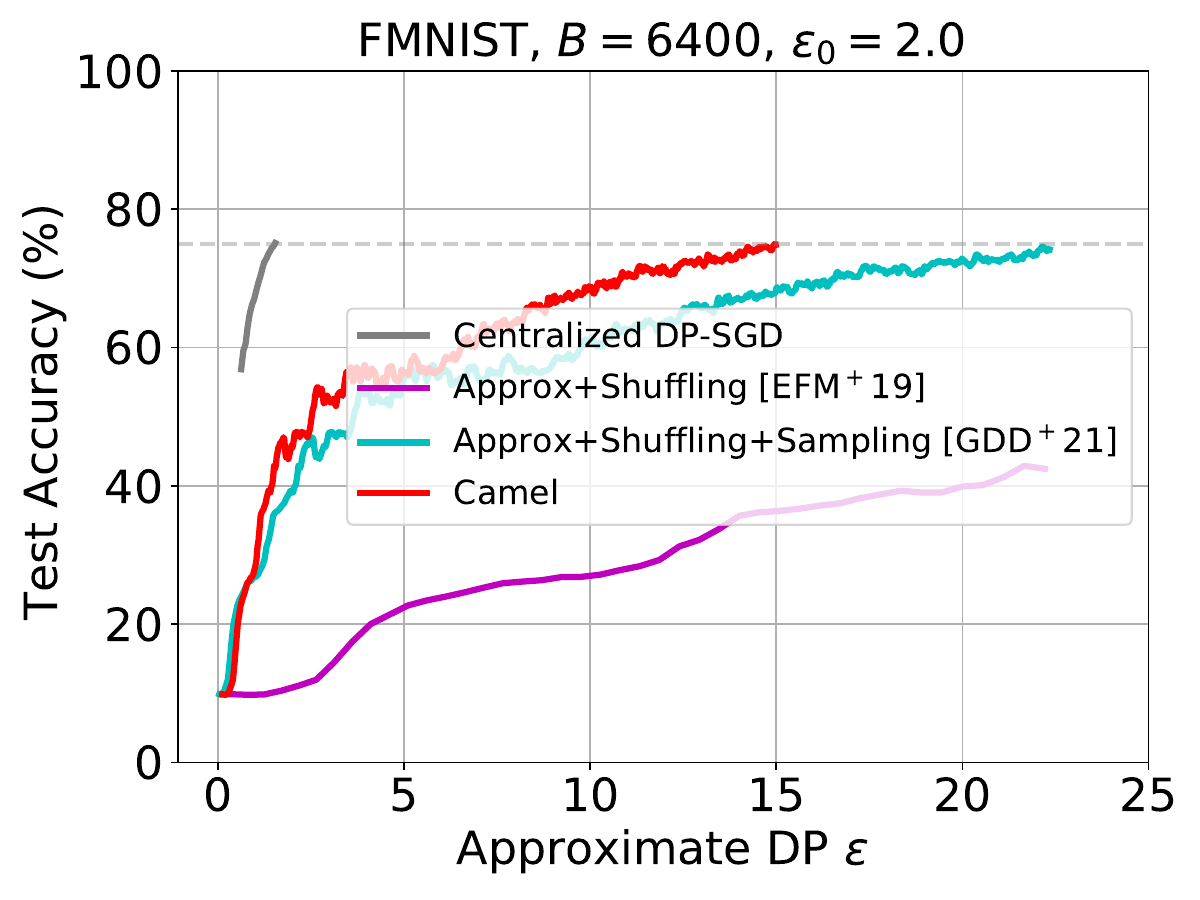}\\{(d)}
		\end{minipage}
		\\
		\begin{minipage}[t]{0.233\textwidth}
			\centering
			\includegraphics[width=\textwidth]{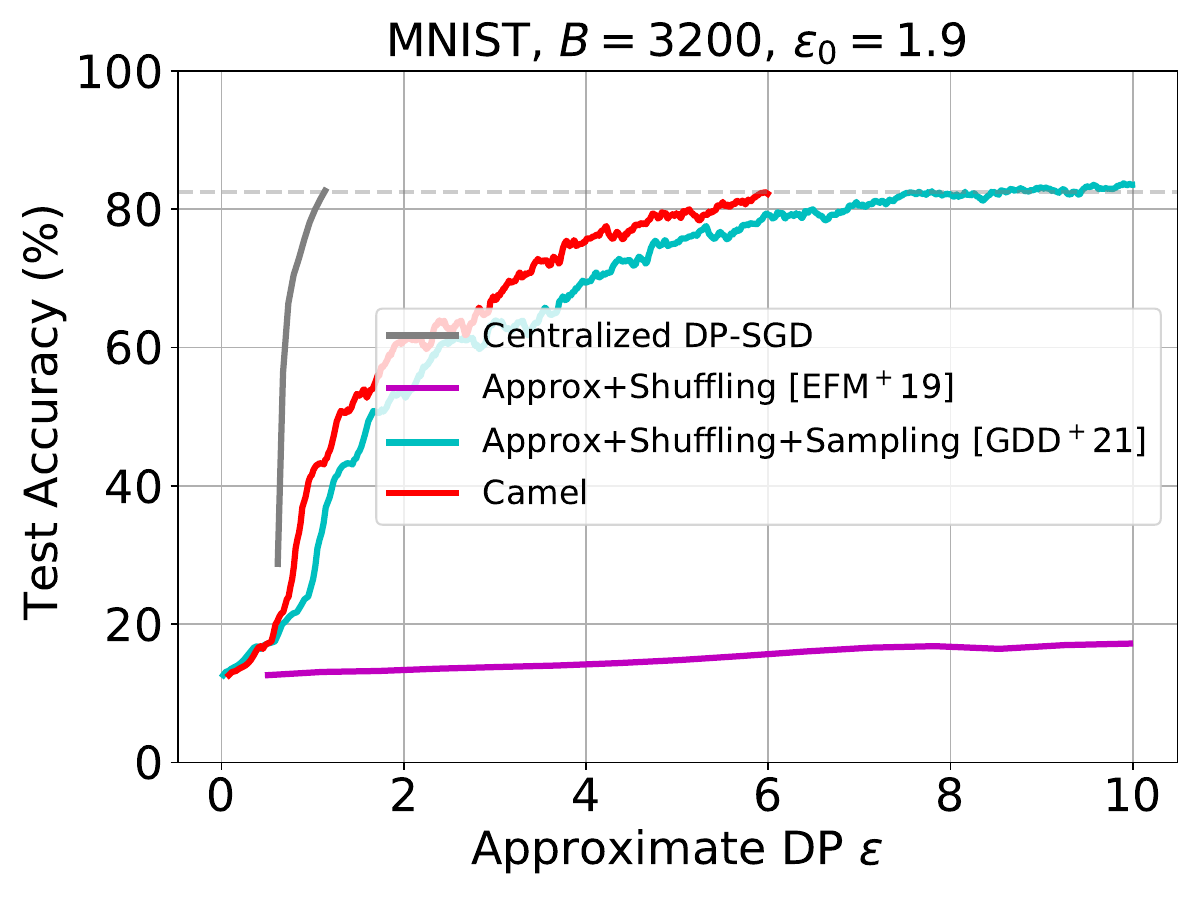}\\{(e)}
		\end{minipage}
		\hfill
		\begin{minipage}[t]{0.233\textwidth}
			\centering
			\includegraphics[width=\textwidth]{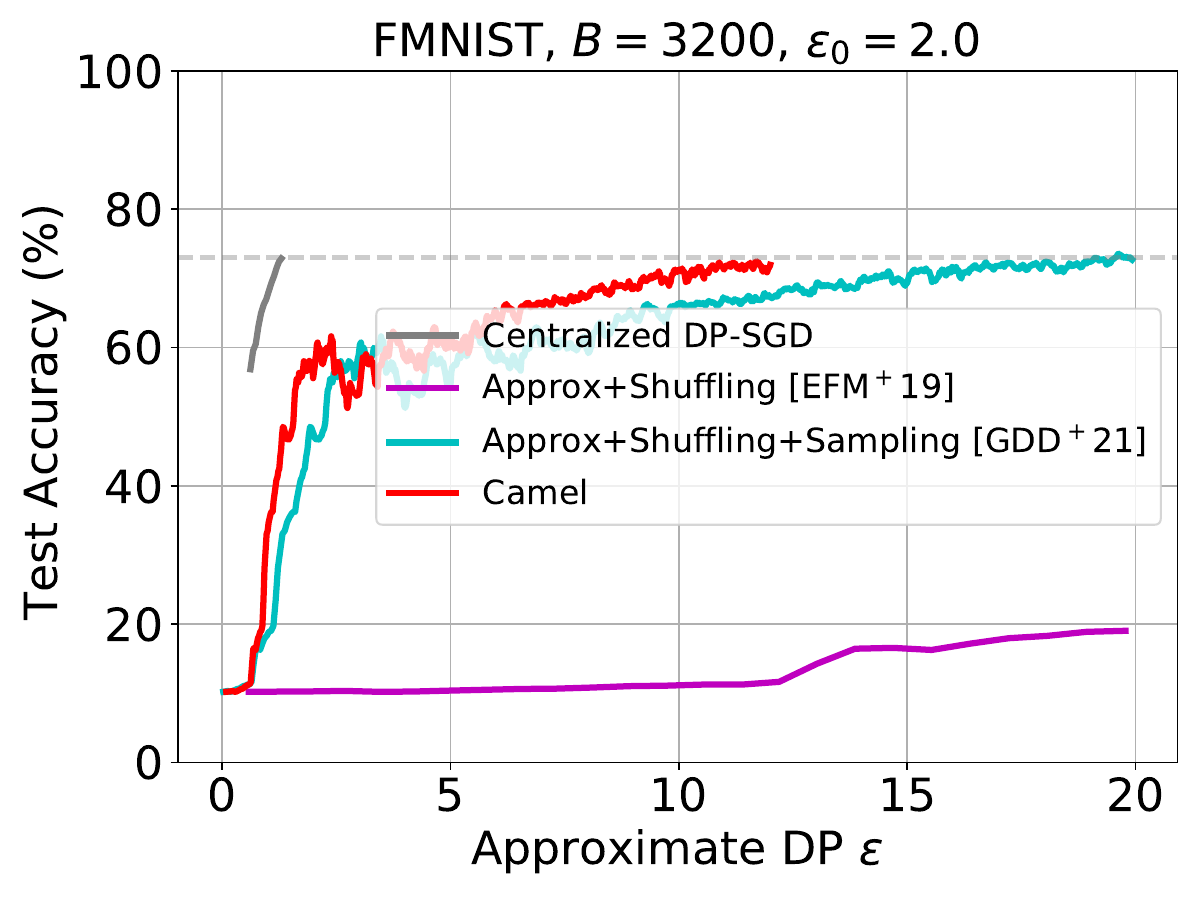}\\{(f)}
		\end{minipage}
		\caption{\revise{Privacy/utility trade-offs in comparison with existing methods on MNIST and FMNIST datasets. 
				For shuffle-model-based methods, $\varepsilon_0$ is set as 1.9 and 2.0 for the MNIST dataset and the FMNIST dataset, respectively.
				In Fig. \ref{fig:utility-vary-eps} (a) and (b), we fix the number of shuffled and sampled gradients $B$ as 12800 and vary the (overall) approximate DP $\varepsilon$ to test the utility of {\main} and the methods from \cite{AISTATS21,Erlingsson20}.  
				In Fig. \ref{fig:utility-vary-eps} (c) and (d), $B$ is fixed as 6400. 
				In Fig. \ref{fig:utility-vary-eps} (e) and (f), $B$ is set as 3200 for evaluation. 
				For Fig. \ref{fig:utility-vary-eps} (a)-(f), we also run the centralized DP-SGD method with the default parameters. 
		}}
		\label{fig:utility-vary-eps}
		\vspace{-10pt}
	\end{figure}
	
	\noindent \textbf{Implementation.} We implement a prototype system of {\main} in Python and Go. 
        Our code is open-source\footnote{https://github.com/Shuangqing-Xu/Camel}. 
	Specifically, we use Opacus\footnote{https://github.com/pytorch/opacus} with Pytorch to implement the differentially private model training protocol, while the server-side maliciously secure secret-shared shuffle protocol is implemented in Go. 
	For the finite field arithmetic associated with secret-shared shuffle, we follow \cite{NDSS22} to use the Goff library in \cite{Goff}. 
	We employ SHA256 for instantiating our hash function and AES in CTR mode for PRG. 
	All experiments are conducted on a workstation equipped with an Intel Xeon CPU boasting 64 cores running at 2.20GHz, 3 NVIDIA RTX A6000 GPUs, 256G RAM, and running Ubuntu 20.04.6 LTS.
	The client-server and inter-server communication is emulated by the loopback filesystem, where the delay of both communication is set to 40 ms, and the bandwidth is set to 100 Mbps to simulate a practical WAN setting. 
	
	\noindent \textbf{Datasets and Model Architectures.} We introduce two widely-used datasets and their corresponding model architectures involved in our experiments as follows: 
	\begin{itemize}
		\item \emph{MNIST\cite{LeCunBBH98}: MNIST contains images of 0-9 handwritten digits, which comprises a training set of $M=60,000$ examples and a test set of 10,000 examples. 
			We employ the same network as described in prior works \cite{Erlingsson20,AISTATS21}, outlined in Table \ref{tab:mnist_model}, with a total parameter count of $d=9594$.
		}
		
		\item \emph{FMNIST\cite{FMNIST}: FMNIST contains 60,000 training and 10,000 testing examples of Zalando's article images. Each is a 28x28 grayscale image labeled among 10 classes. 
			We utilize the widely-used ``2NN'' network architecture, as adopted in prior work \cite{fedavg}. 
			This architecture consists of two hidden layers, with each layer comprising 200 units and employing ReLU activations, leading to a total of $d=199,210$ parameters.}
		
	\end{itemize}
	\noindent The training data points are randomly shuffled and evenly distributed across $n$ clients.

	\noindent \revise{\textbf{Baselines.}} \revise{We compare the privacy amplification effect of our proposed {\main} with the work of \cite{Erlingsson20} (denoted as \textit{Approx+Shuffling}) that only considers privacy amplification by shuffling and the \textit{state-of-the-art} work \cite{AISTATS21} (denoted as \textit{Approx+Shuffling+Sampling}) on the shuffle model of DP in FL that considers composing amplification by shuffling with subsampling. 
		We compare {\main} with the aforementioned two baselines to demonstrate that {\main} can achieve a significantly better privacy-utility trade-off. 
		We also include a centralized DP-SGD method \cite{DPSGD} as a baseline. 
		It is noted that the non-private accuracy baselines using our predefined model architectures are 99\% and 89\% on MNIST and FMNIST datasets, respectively \cite{Erlingsson20}. 
		Furthermore, to illustrate the substantial communication-efficiency optimization achieved by {\main}, we also construct a baseline, denoted as {\baseline}, which directly secret-shares the noisy gradients and securely shuffles them without compression.
	}
	
	\noindent \textbf{Parameters.} For both the MNIST dataset and the FMNIST dataset, we fix the learning rate as $0.1$, $\ell_2$-norm bound $L=0.5$, and momentum as $0.5$. 
	In all our experiments, unless otherwise specified, we fix the number of clients $n=100$ and $\delta=10^{-5}$, smaller than $1/M$. 
	We vary $N$ to change the number of shuffled gradients and vary $B$ to change the number of shuffled and sampled gradients. 
	The optimal privacy parameter $\lambda$ in Theorem \ref{thm:analysis_overview} is obtained following the autodp library\footnote{\url{https://github.com/yuxiangw/autodp}}. 
	We consider individual parameters in a gradient as 32 bits and a 128-bit prime $p$.

	\begin{table}[!t]
		\centering
		\caption{Privacy/utility trade-offs of {\main} on the MNIST dataset and the FMNIST dataset. Here we vary the number of iterations $T$ to investigate the test accuracy (Acc) in \% and the overall approximate DP $\varepsilon$ (converted from RDP). }
		\label{tab:utility}
		\setlength\tabcolsep{3pt}
		
		\scalebox{0.9}{
			\begin{tabular}{cccccccccc}
				\hline \multirow{2}{*}{ Dataset }  & \multirow{2}{*}{$\varepsilon_0$}  & \multirow{2}{*}{$B$} &  \multicolumn{2}{c}{$T=500$} & \multicolumn{2}{c}{$T=1000$} & \multicolumn{2}{c}{$T=2000$} \\
				&   &   & Acc & $\varepsilon$ & Acc & $\varepsilon$ & Acc & $\varepsilon$  \\
				
				\hline \multirow{3}{*}{ MNIST } & \multirow{3}{*}{1.9} 
				& 3200 & 81.16\% & 5.84 & 84.42\% & 9.56 & 85.19\% & 15.92\\
				\cline{3-9} &  & 6400 & 83.31\% & 7.03 & 87.62\% & 11.40 & 89.51\% & 18.83 \\
				\cline{3-9} & & 12800 & \textbf{83.66\%} & 9.25 & \textbf{88.74\%} & 15.05 & \textbf{91.25\%} & 24.97 \\
				
				\hline \multirow{3}{*}{ FMNIST } & \multirow{3}{*}{2.0} 
				& 3200 & \textbf{64.01\%} & 7.23 & \textbf{72.76\%} & 12.02 & 76.44\% & 20.41 \\
				\cline{3-9} & & 6400 & 63.72\% & 8.36 & 72.57\% & 13.69 & \textbf{76.45\%} & 22.88 \\
				\cline{3-9} & & 12800 & 63.91\% & 10.80 & 70.12\% & 17.71 & 75.9\% & 29.68 \\
				\hline
			\end{tabular}
		}
		\vspace{-10pt}
	\end{table}
	
	\begin{table*}[!t]
		\centering
		\caption{Training cost of {\main} on the MNIST dataset and the FMNIST dataset per iteration.}
		\label{tab:effi_real_datasets}
		\setlength\tabcolsep{3pt}
		
		\scalebox{0.8}{
			\begin{tabular}{cccccccc}
				\hline \multirow{3}{*}{ Method }  & \multirow{3}{*}{ $N$ } & \multirow{3}{*}{ Dataset } & \multirow{3}{*}{\begin{tabular}{c} 
						Offline Comm.  \\
						Cost (MB)
				\end{tabular}}  & \multicolumn{4}{c}{Online Training Cost} \\
				
				&  &  &  & \multirow{2}{*}{\begin{tabular}{c} 
						Per-Client  \\
						Comp. Cost (s)
				\end{tabular}}  
				& \multirow{2}{*}{\begin{tabular}{c} 
						Server-Side  \\
						Comp. Cost (s)
				\end{tabular}}  & \multirow{2}{*}{\begin{tabular}{c} 
						Online Comm.  \\
						Cost (MB)
				\end{tabular}} & \multirow{2}{*}{\begin{tabular}{c} 
						Server-Side  \\
						Overall Runtime (s)
				\end{tabular}} \\
				\\
				\hline \multirow{2}{*}{ {\main} } & \multirow{2}{*}{ 400 } & { MNIST } & 0.031 & 0.040 & 0.019 & 0.213 & \textbf{0.796}\\
				& & { FMNIST } & 0.031 & 0.049 & 0.112 & 0.213 & \textbf{0.889} \\
				
				\multirow{2}{*}{ {\baseline} } & \multirow{2}{*}{ 400 } & { MNIST } & 29.279 & 0.066 & 2.501 & 205.096 & 19.669\\
				& & { FMNIST } & 607.965  & 0.489 & 34.963 & 4258.795 & 376.427 \\

				\hline \multirow{2}{*}{ {\main} } & \multirow{2}{*}{ 800 } & { MNIST } & 0.061 & 0.043 & 0.037 & 0.427 & \textbf{0.831}\\
				& & { FMNIST } & 0.061 & 0.073 & 0.184 & 0.427 & \textbf{0.978} \\
				
				\multirow{2}{*}{ {\baseline} } & \multirow{2}{*}{ 800 } & { MNIST } & 58.557 & 0.115 & 4.319 & 410.046 & 37.882\\
				& & { FMNIST } & 1215.930 & 0.778 & 64.953 & 8514.551 & 746.877 \\

				\hline \multirow{2}{*}{ {\main} } & \multirow{2}{*}{ 1600 } & { MNIST } & 0.122 & 0.041 & 0.065 & 0.854 & \textbf{0.893}\\
				& & { FMNIST } & 0.122 & 0.127 & 0.466 & 0.854 & \textbf{1.294} \\
				
				\multirow{2}{*}{ {\baseline} } & \multirow{2}{*}{ 1600 } & { MNIST } & 117.114 & 0.143 & 8.032 & 819.946 & 74.388 \\
				& & { FMNIST } & 2431.860  & 1.300 & 120.713 & 17026.062 & 1483.558 \\

				\hline \multirow{2}{*}{ {\main} } & \multirow{2}{*}{ 3200 } & { MNIST } & 0.244 & 0.073 & 0.171 & 1.709 & \textbf{1.068}\\
				& & { FMNIST } & 0.244 & 0.234 & 0.947 & 1.709 & \textbf{1.844} \\
				
				\multirow{2}{*}{ {\baseline} } & \multirow{2}{*}{ 3200 } & { MNIST } & 234.229 & 0.276 & 15.383 & 1639.746 & 147.323 \\
				& & { FMNIST } & 4863.721 & 2.808 & 239.109 & 34049.084 & 2963.796\\
				\hline
			\end{tabular}
		}
		\vspace{-5pt}
	\end{table*}

	\subsection{Numerical Results of Privacy Amplification}
	\label{sec:exp:numerical}
	
	In this section we provide the numerical results of privacy amplification by comparing the bounds of {\main} with the approximate DP bounds from closely related works \cite{Erlingsson20,AISTATS21}. 
	Specifically, the work of \cite{Erlingsson20} only uses privacy amplification by shuffling, without considering privacy amplification by subsampling. 
	The work of \cite{AISTATS21} composes amplification by shuffling with amplification by subsampling to achieve tighter bounds. 
	In Fig. \ref{fig:numerical_bounds}, we vary the LDP level $\varepsilon_0$, number of (shuffled) messages $N$, subsampling rate $\gamma$, and number of iterations $T$, and plot the bounds on approximate ($\varepsilon,\delta$)-DP of different methods for fixed $\delta=10^{-5}$. 
	We can observe that the work of \cite{AISTATS21} has already improved the work of \cite{Erlingsson20} by deriving tighter bounds, and the improvement is significantly impacted by the sampling rate $\gamma$. 
	Notably, when $\gamma=1$, the bound of \cite{AISTATS21} is identical to the bound of \cite{Erlingsson20}. 
	We can also find that the bounds derived in \cite{AISTATS21} are always looser than the bounds derived in our work by analyzing RDP. 
	In other words, our work provides the tightest bound in all cases. 
	This demonstrates the advantages of analyzing the RDP of multiple iterations.

	\begin{figure}[t!]
		
		\centering
		
		\begin{minipage}[t]{0.4\linewidth}
			\centering
			\includegraphics[width=\linewidth]{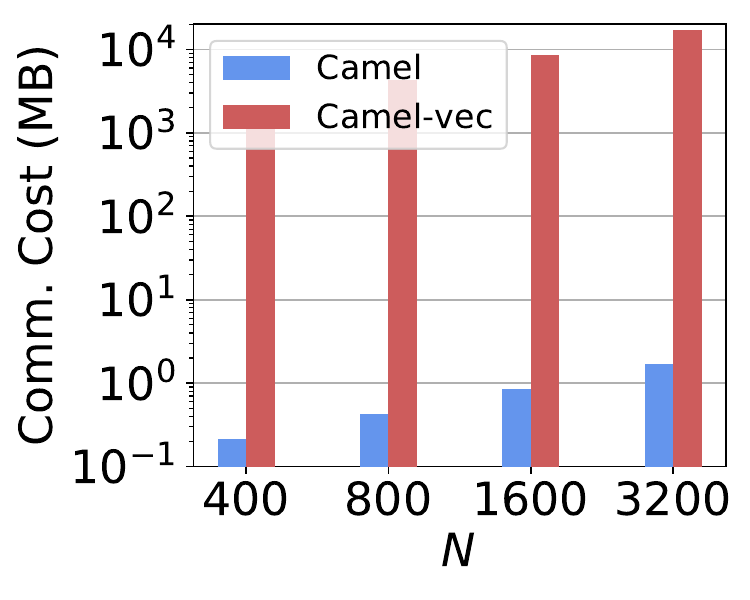}\\(a)
		\end{minipage}
		\begin{minipage}[t]{0.4\linewidth}
			\centering
			\includegraphics[width=\linewidth]{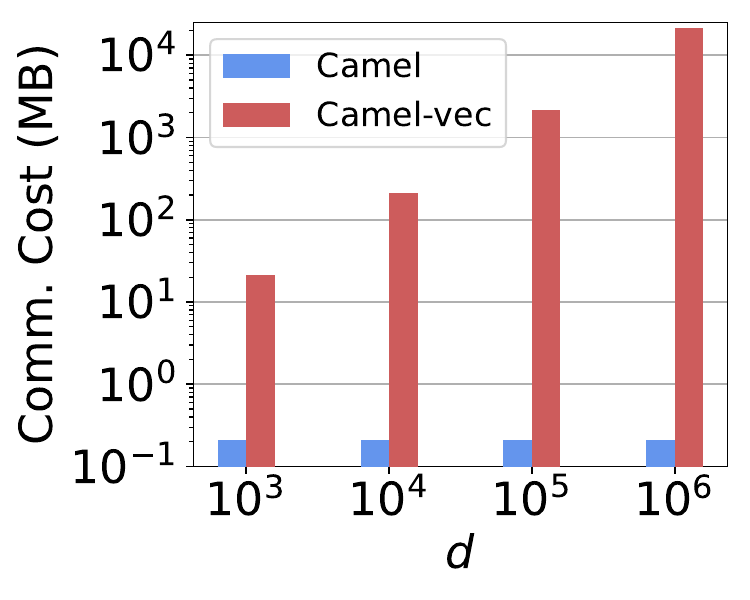}\\(b)
		\end{minipage}
		
		\caption{Comparison of communication cost per shuffle: (a) communication cost by varying the number of noisy gradients (to shuffle) $N\in\{400,800,1600,3200\}$ and fixing $d=10^5$, and (b) communication cost by varying $d\in\{10^3,10^4,10^5,10^6\}$ and fixing $N=400$. }
		\label{fig:effi_comm_cost}
		\vspace{-15pt}
		
	\end{figure}
	
	\begin{figure}[t!]
		
		\centering
		
		\begin{minipage}[t]{0.4\linewidth}
			\centering
			\includegraphics[width=\linewidth]{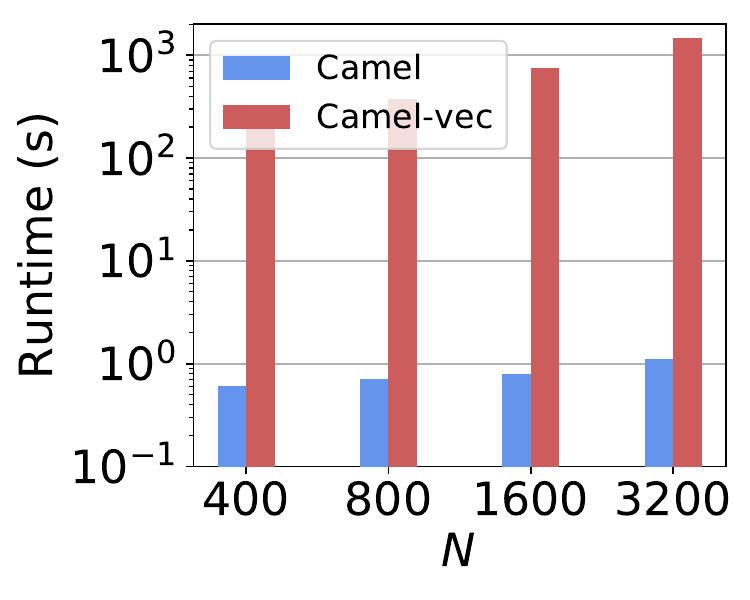}\\(a)
		\end{minipage}
		\begin{minipage}[t]{0.4\linewidth}
			\centering
			\includegraphics[width=\linewidth]{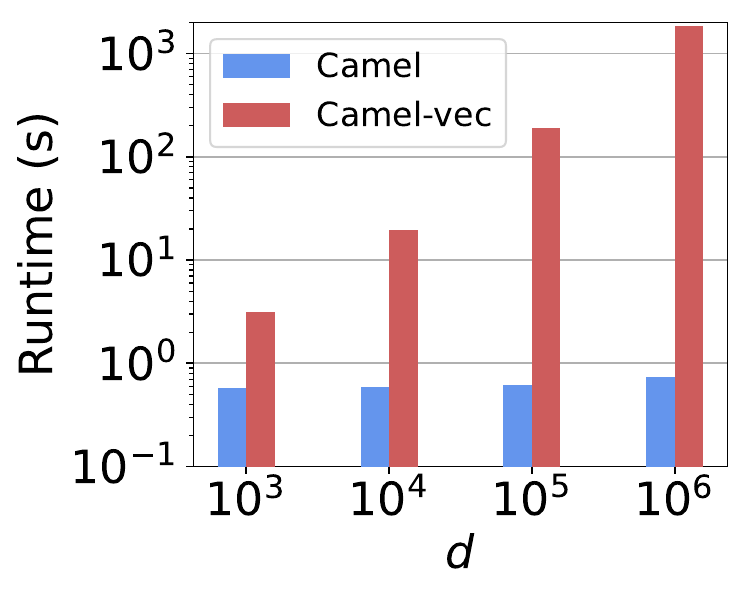}\\(b)
		\end{minipage}
		\caption{Comparison of server-side overall runtime per shuffle: (a) runtime by varying the number of noisy gradients (to shuffle) $N\in\{400,800,1600,3200\}$ and fixing $d=10^5$, and (b) runtime by varying $d\in\{10^3,10^4,10^5,10^6\}$ and fixing $N=400$. }
		\label{fig:effi_run_time}
		\vspace{-15pt}
		
	\end{figure}
	
	\subsection{Utility}
	\label{sec:exp:utility}
	
	In this section, we first evaluate the utility performance of our proposed {\main}, and further investigate its privacy-utility trade-offs by varying different parameters and comparing it with baselines. 
	For all experiments in this section, we set $\varepsilon_0=1.9$ for the MNIST dataset and $\varepsilon_0=2.0$ for the FMNIST dataset. 
	Firstly, we vary $T\in[2000]$ and $B\in\{3200,6400,12800\}$ to plot the test accuracy in \%. 
	From Fig. \ref{fig:utility_vary_T} we can observe that (1) despite the locally injected noise, {\main} finally converges on all datasets, and that (2) the number of shuffled and sampled gradients $B$ impacts the model utility. 
	Besides, we also care about the privacy-utility trade-offs of {\main}. 
	In Table \ref{tab:utility}, we present the results of model accuracy and the corresponding approximate DP $\varepsilon$ on MNIST and FMNIST datasets by varying $T\in\{500,1000,2000\}$ and $B\in\{3200,6400,12800\}$. 
	Note that changing $B$ results in different values of $\gamma$ and privacy-utility trade-offs. 
	It is also evident that a larger $B$ gives better model utility on the MNIST dataset but is not necessary for the FMNIST dataset. 
	Therefore, it is important to select $B$ to give the optimal privacy-utility trade-offs. 
	We can further observe that with a relatively small budget $\varepsilon=15.05$, {\main} achieves an accuracy of 88.74\% on the MNIST dataset. 
	Generally, {\main} gives satisfactory model utility under reasonable privacy budgets. 
	
	\revise{
		Furthermore, we compare {\main} with two baselines to show that {\main} achieves a significantly better privacy-utility trade-off than prior related work on FL in the shuffle model of DP. 
		We also include the privacy-utility trade-off of the centralized DP-SGD baseline by fixing the learning rate as $0.1$, clipping bound as $0.5$, batch size as $500$, noise multiplier as $1.3$ for both datasets, and varying the approximate DP $\varepsilon$. 
		Fig. \ref{fig:utility-vary-eps} shows our evaluation results on the MNIST dataset and FMNIST dataset, where the results of shuffle-model-based methods are obtained by fixing the LDP level $\varepsilon_0$ for each FL iteration and varying the overall approximate DP $\varepsilon$. 
		It can be observed that given the same $\varepsilon$, {\main} yields the best model utility in all cases compared to previous methods in the shuffle model of DP \cite{AISTATS21,Erlingsson20}. 
		This is because our work derives tighter bounds on $\varepsilon$ over \cite{AISTATS21,Erlingsson20}, as evidenced in Section \ref{sec:exp:numerical}, thereby yielding the improved privacy-utility trade-off results. 
		It is important to note that although the centralized DP-SGD baseline \cite{DPSGD} can yield the same accuracy with smaller $\varepsilon$, it needs access to raw training datasets for centralized processing and does not protect client privacy during the training process.  
	}

	\subsection{Efficiency}
	\label{sec:exp:efficiency}

	\begin{table*}[!t]
		\centering
		\caption{\revise{Training cost comparison of {\main} and KLS21 \cite{kairouz21} on the MNIST dataset and the FMNIST dataset per iteration.}}
		\label{tab:effi_cmp_secagg}
		\setlength\tabcolsep{3pt}
		\scalebox{0.8}{\revise{
				\begin{tabular}{cccccc}
					\hline  \multirow{2}{*}{ Method }  & \multirow{2}{*}{ Dataset } &  \multirow{2}{*}{\begin{tabular}{c} 
							Per-Client   \\
							Comp. Cost (s)
					\end{tabular}} & \multirow{2}{*}{\begin{tabular}{c} 
							Server-Side \\
							Comp. Cost (s)
					\end{tabular}}  &  \multirow{2}{*}{\begin{tabular}{c} 
							Per-Client   \\
							Comm. Cost (KB)
					\end{tabular}} &  \multirow{2}{*}{\begin{tabular}{c} 
							Server-Side \\
							Overall Runtime (s)
					\end{tabular}} \\
					\\
					\hline \multirow{2}{*}{ {\main} } & { MNIST } & 0.040 & 0.019 & 0.250 & 0.796\\
					& { FMNIST }  & 0.049 & 0.112 & 0.250 & 0.889 \\ 
					\multirow{2}{*}{ {KLS21 \cite{kairouz21}} } & { MNIST } & 0.185 & 0.060 & 55.30 & 0.220 \\ 
					& { FMNIST }  & 0.572 & 0.554 &  795.6 & 0.776 \\ 
					\hline
		\end{tabular}}}
		\vspace{-5pt}
	\end{table*}

	We start with comparing the training cost of {\main} with the baseline {\baseline} which does not consider gradient compression on the MNIST dataset and the FMNIST dataset. 
	Table \ref{tab:effi_real_datasets} presents the evaluation results, where the offline communication costs include the data transmission of necessary materials required for secure shuffling.  
	The online communication costs is system-wide, which include both client-server and inter-server data transmission throughout a training iteration. 
	The computation cost per client includes the time for computing, perturbing, compressing, and MACing gradients. 
	The server-side computation cost includes the overall computation time for performing maliciously secure secret-shared shuffle and gradient subsampling, decompression, and aggregation. 
	The server-side overall runtime per iteration includes server-side computation time, network latency, and data transmission time. 
	From Table \ref{tab:effi_real_datasets}  we can get the following observations. 
	Firstly, compared to {\baseline}, {\main} significantly reduces both offline and online communication costs. Specifically, {\main} reduces online communication costs by 965$\times$ and 20,029$\times$ over {\baseline} on the MNIST dataset and the FMNIST datasets for $N=3200$, respectively.
	This demonstrates the effectiveness of our noisy gradient compression mechanism in cutting both client-server and inter-server communication costs. 
	Secondly, the training cost of {\main} per iteration is also thousands of times lower than {\baseline}, for example, 1,607$\times$ lower on the FMNIST dataset when $N=3200$. 
	Notably, the communication cost dominates the training cost because our secure shuffling protocol is based on secret sharing. 
	Furthermore, it is noticeable that the server-side computation cost in {\main} are lower than those in {\baseline}.
	This is attributed to the fact that in {\main}, the noisy gradients intended for shuffling are compressed prior to secure shuffling. 
	The compression not only saves communication costs, but also results in reduced computation costs compared to the direct manipulation of high-dimensional vectors in {\baseline}.
	
	To further investigate the impact of the number of noisy gradients (to shuffle) $N$ and the gradient vector dimension $d$ on training efficiency, we vary $N$ and $d$ to evaluate communication costs (including both client-server and inter-server data transmission) and server-side runtime of performing a secret-shared shuffle (including computation time, network latency, and data transmission time). 
	%
	%
	Fig. \ref{fig:effi_comm_cost} (a) and Fig. \ref{fig:effi_run_time} (a) present the evaluation results by varying $N\in\{400,800,1600,3200\}$ and fixing $d=10^5$, from which we can observe that both {\main} and {\baseline} experience increases in communication costs and runtime with $N$, but the former exhibits significantly lower costs compared to the latter. 
	Moreover, from Fig. \ref{fig:effi_comm_cost} (b) and Fig. \ref{fig:effi_run_time} (b), where we vary $d\in\{10^3,10^4,10^5,10^6\}$ and fix $N=400$, it becomes evident that $d$ has a substantial impact on {\baseline}'s communication costs and runtime while exerting minimal influence on {\main}. 
	This is because in {\main}, noisy gradients are compressed into fixed-length messages, each consisting of a 128-bit seed and a 1-bit sign in our experiment.

	\begin{figure}[t!]
		\centering
		\begin{minipage}[t]{0.32\linewidth}
			\centering
			\includegraphics[width=\linewidth]{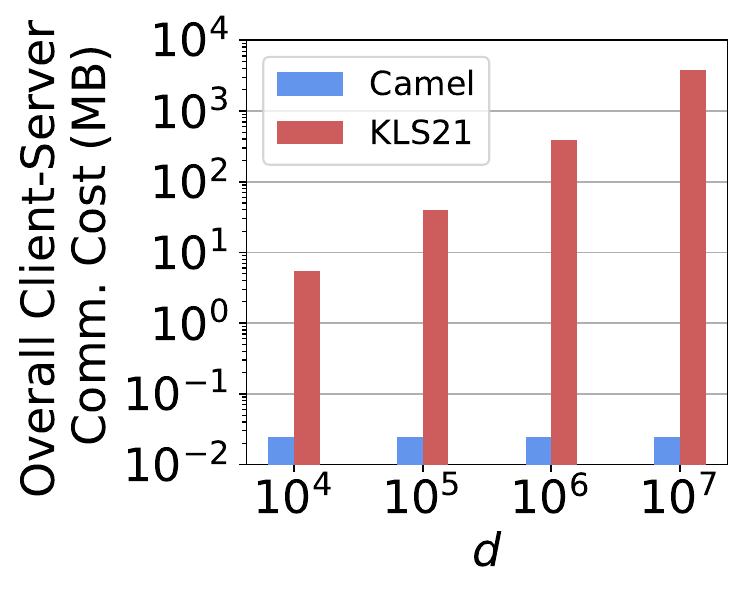}\\ \:\:\:\:\:\:\:\:\:(a)
		\end{minipage}
		\begin{minipage}[t]{0.32\linewidth}
			\centering
			\includegraphics[width=\linewidth]{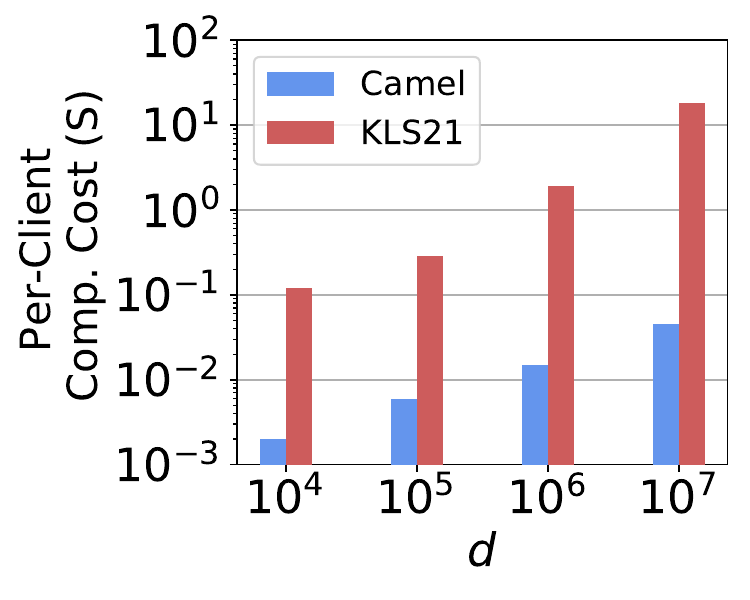}\\ \:\:\:\:\:\:\:\:\:(b)
		\end{minipage}
		\begin{minipage}[t]{0.32\linewidth}
			\centering
			\includegraphics[width=\linewidth]{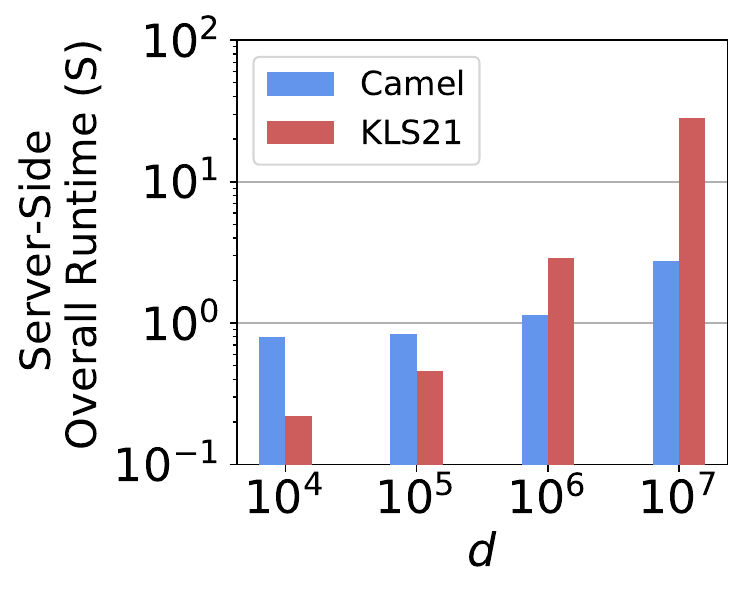}\\ \:\:\:\:\:\:\:\:\:(c)
		\end{minipage}
		
		\caption{\revise{Training cost comparison of {\main} and KLS21 \cite{kairouz21} per iteration by varying $d\in\{10^4,10^5,10^6,10^7\}$ and fixing $K=N=400, n=100$: (a) overall client-server communication cost comparison, and (b) per-client computation cost comparison, and (c) server-side overall runtime (including server-side computation time, network latency, and data transmission time) comparison. }}
		\label{fig:effi_secagg_vary_d}
		\vspace{-15pt}
	\end{figure}

	\subsection{Efficiency Comparison with Approach Combining Secure Aggregation and DP}
	\label{sec:exp:comparison-secagg}
	
	\revise{
		We now compare {\main} with the orthogonal secure-aggregation-based approach, which could also provide the output model with DP guarantee. 
		Recall that secure aggregation would restrict the use of gradient compression (as employed in {\main}) for communication efficiency optimization because decompression is required before aggregation. 
		Therefore, we focus on demonstrating the advantage of {\main} in client-side efficiency. 
		Specifically, we compare {\main} (with malicious security) with the work of \cite{kairouz21} (denoted as KLS21)\footnote{Code from \url{https://github.com/google-research/federated/tree/master/distributed_dp}} that considers a semi-honest server. 
		For a fair comparison on the MNIST and FMNIST datasets, we use the same experimental setup as in {\main}. 
		Specifically, we adopt the same model architecture, set each gradient parameter to $32$ bits, fix the number of clients at $n=100$, and use a batch size of $K=400$ gradients per iteration (for {\main}, we set the number of gradients to shuffle as $N=K$).
		Since our {\main} employs the mini-batch SGD algorithm as the underlying learning algorithm, which is also commonly seen in other FL works \cite{AISTATS21, NeurIPS21, liu2020fedsel}, we implement KLS21 with the same learning algorithm, instead of the inherently different FedAvg algorithm \cite{fedavg} originally considered in KLS21. 
		For this reason, KLS21 cannot be directly included for a privacy-utility comparison with {\main}. 
		However, it is still meaningful to investigate and compare the communication and computation efficiency of KLS21 and {\main}. 
		For efficiency comparison, we use the widely popular protocol\footnote{Code from \url{https://github.com/55199789/PracSecure}} by Bonawitz et al. \cite{BonawitzIKMMPRS17} to connect the clients and the server in KLS21. 
		Under such setting, each client in KLS21 would compute and clip $K/n$ gradients, locally aggregate these gradients, process the aggregated gradient (discretize it and perturb it with discrete Gaussian noise), and then send it to the server for secure aggregation. 
	}
	
	\begin{figure}[t!]
		\centering
		\begin{minipage}[t]{0.32\linewidth}
			\centering
			\includegraphics[width=\linewidth]{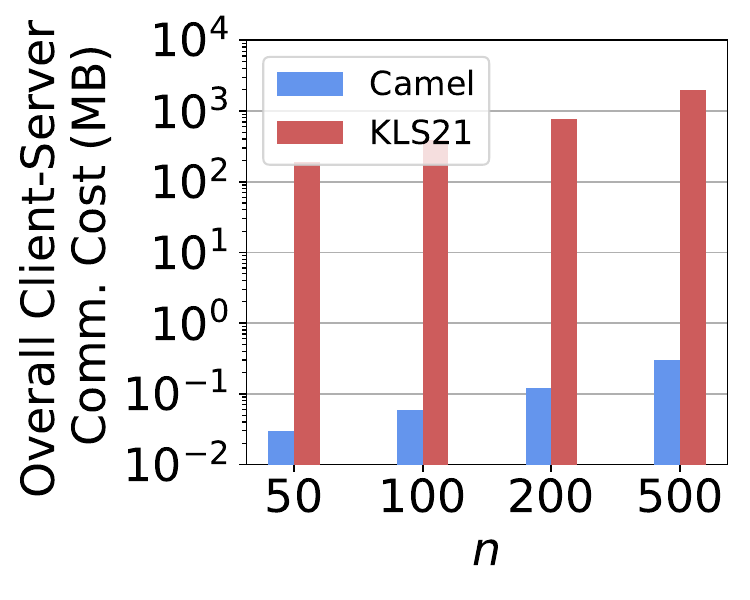}\\ \:\:\:\:\:\:\:\:\:(a)
		\end{minipage}
		\begin{minipage}[t]{0.32\linewidth}
			\centering
			\includegraphics[width=\linewidth]{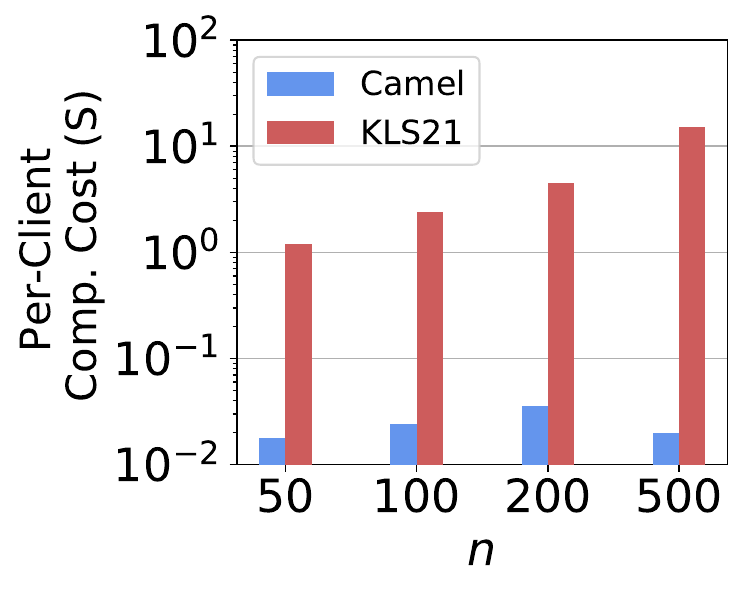}\\ \:\:\:\:\:\:\:\:\:(b)
		\end{minipage}
		\begin{minipage}[t]{0.32\linewidth}
			\centering
			\includegraphics[width=\linewidth]{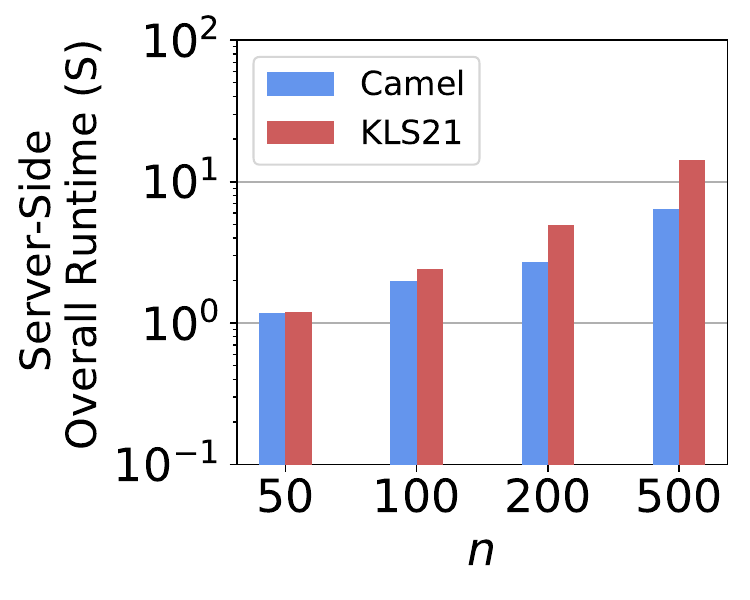}\\ \:\:\:\:\:\:\:\:\:(c)
		\end{minipage}
		
		\caption{\revise{Training cost comparison of {\main} and KLS21 \cite{kairouz21} per iteration by varying $n\in\{50,100,200,500\}$ and fixing $N/n=K/n=10,d=10^6$: (a) overall client-server communication cost comparison, and (b) per-client computation cost comparison, and (c) server-side overall runtime (including server-side computation time, network latency, and data transmission time) comparison. }}
		\label{fig:effi_secagg_vary_n}
		\vspace{-15pt}
	\end{figure}
	\revise{
		Table \ref{tab:effi_cmp_secagg} presents the training costs of {\main} and KLS21 on MNIST and FMNIST datasets at each iteration. 
		The per-client computation time includes the time required for computing and processing gradients.
		The server-side overall runtime consists of server-side computation time, network latency, and data transmission time. 
		For a fair comparison, we assume that in both {\main} and KLS21 all clients are synchronized, and the data transmission time involves only the time required to transmit a single client's data to the server. 
		We observe that the per-client computation cost, the per-client communication cost, and the server-side computation cost are significantly lower than those of the baseline KLS21. 
		%
		%
		This is attributed to the complex pairwise masking strategy in secure aggregation \cite{BonawitzIKMMPRS17}. 
		%
	}
	
	\revise{
		In contrast, due to the application of gradient compression, each client in {\main} only needs to send compressed gradients (each with a fixed size regardless of $d$) to the server, resulting in a smaller and unchanged per-client communication cost when transitioning from the MNIST dataset to the FMNIST dataset. 
		It is also noted that the server-side overall runtime of {\main} is comparable to that of KLS21. 
		Notably, the gap between KLS21 and {\main} decreases when shifting from the MNIST dataset to the FMNIST dataset. 
		This is because {\main}'s server-side overall runtime is largely impacted by the network latency (0.76 seconds of network latancy out of 0.796 seconds of server-side overall runtime on MNIST dataset). 
		Although network latency has little impact on the server-side overall runtime of KLS21, it is greatly influenced by the gradient vector dimension $d$.  
		This indicates a turning point where, as $d$ increases, the server-side overall runtime of {\main} becomes lower than that of KLS21. 
	}
	
	\revise{
		To identify this turning point and further investigate system efficiency concerning $d$, we conduct experiments on a synthetic dataset by varying $d$ and fixing $K=N=400, n=100$. 
		We evaluate the overall client-server communication cost (Fig. \ref{fig:effi_secagg_vary_d} (a)), per-client computation cost (Fig. \ref{fig:effi_secagg_vary_d} (b)), and server-side overall runtime (Fig. \ref{fig:effi_secagg_vary_d} (c)) of {\main} and KLS21. 
		Here, the per-client computation time consists of the time required for processing the gradients. 
		From Fig. \ref{fig:effi_secagg_vary_d}, we observe that the overall client-server communication cost and per-client computation cost of {\main} are always lower than those of KLS21. 
		Notably, the client-server communication cost for KLS21 increases linearly with $d$, while it remains stable for {\main}. 
		When $d\geq10^6$, the server-side computation cost for KLS21 surpasses that of {\main}.
		This suggests a turning point for $d$ between $10^5$ and $10^6$, where {\main} starts to demonstrate better server-side overall runtime performance than KLS21. 
		This indicates that {\main} is well-suited for handling large-scale models.
	}
	
	\revise{
		We also note that the communication and computation efficiency of secure-aggregation-based methods are significantly impacted by the number of clients $n$ \cite{BonawitzIKMMPRS17}. 
		Therefore, we conduct experiments on a synthetic dataset by varying $n\in\{50,100,200,500\}$ and fix $N/n=K/n=10,d=10^6$ (i.e., we consider each client locally process $10$ gradients regardless of $n$). 
		We evaluate the overall client-server communication cost (Fig. \ref{fig:effi_secagg_vary_n} (a)), per-client computation cost (Fig. \ref{fig:effi_secagg_vary_n} (b)), and server-side overall runtime (Fig. \ref{fig:effi_secagg_vary_n} (c)) of {\main} and KLS21. 
		From Fig. \ref{fig:effi_secagg_vary_n} we can observe a similar trend as in Fig. \ref{fig:effi_secagg_vary_d}, i.e., the overall client-server communication cost and per-client computation cost of {\main} are always lower than those of KLS21. 
		Notably, the per-client computation cost for processing gradients of {\main} is very small, at around 0.03 seconds.
		It is also observed that when $n$ is small, the server-side overall runtime of {\main} and KLS21 is comparable, while {\main} is better than KLS21 with the increase of $n$. 
		This indicates that {\main} is better suited for the more practical scenario where a large number of clients are involved. 
	}
	
	\vspace{-12pt}
	
	\section{Discussion}
	\label{sec:discussion}
	
	\noindent \revise{\textbf{Comparison with existing maliciously secure FL frameworks.} 
		While other maliciously secure FL frameworks exist, most of them (like ELSA \cite{ELSA}) do not provide DP guarantees with good utility. 
		Prior works \cite{kairouz21,AgarwalKL21} that rely on secure aggregation and provide DP guarantees could be extended to defend against a malicious server by using an extended version of the underlying secure aggregation technique that supports verifiability (e.g., \cite{HahnKKH23}).
		However, recall that the use of secure aggregation hinders system-wide communication efficiency optimization for FL (e.g., through gradient compression as in Camel, which requires decompression before aggregation).
		Also, as indicated by the experimental results in Section \ref{sec:exp:comparison-secagg}, our system {\main}, even with malicious security, has already achieved a significant advantage in client-side efficiency over the secure aggregation approach-based work \cite{kairouz21} that considers an honest-but-curious adversary.
		Such efficiency advantage will be further amplified when compared to the aforementioned extended approaches that use a verifiable version of the secure aggregation with increased costs. 
	}
	
	\noindent \revise{\textbf{Multi-message shuffle model of DP for FL.} 
		In this paper, we have followed most existing works \cite{AISTATS21,NeurIPS21,liu2021flame} to apply the single-message shuffle model of DP in FL. 
		We also notice that there is a recent trend of applying the multi-message shuffle model of DP \cite{BalleBGN20,GhaziG0PV21} to get better privacy-utility trade-offs over the single-message shuffle model of DP. 
		However, multi-message shuffling has rarely been applied in FL so far. 
		Existing multi-message shuffling works like \cite{BalleBGN20,GhaziG0PV21} focus on the general problem of private summation and do not specifically target FL where communication efficiency is essential (besides privacy and utility) and cannot be overlooked. 
		To our best knowledge, there is no known practical application of multi-message shuffling in FL without sacrificing communication efficiency compared to single-message shuffling. 
	}
	
	\noindent \revise{\textbf{Extending design beyond the three-server model.} 
		Recall that {\main} is designed and built in the three-server distributed trust setting. 
		We note that it is possible to extend {\main} to a $k$-server setting ($k>3$). 
		The current three-server secret-shared shuffle protocol essentially applies in a secure manner a composition of permutations separately held by two servers. 
		Thus, to extend to more servers, a direction is to use a pairwise processing strategy, where each server holding a permutation interacts with all other servers to get the permutation securely applied and this process repeats until all permutations have been applied. 
		With such strategy as a basis, how to properly adapt the integrity checks should be further explored. 
		%
	}
	
	\section{Conclusion}
	\label{sec:conclusion}
	This paper presents {\main}, a new communication-efficient and maliciously secure FL framework in the shuffle model of DP. 
	{\main} first departs from prior works by ambitiously supporting integrity check for the shuffle computation, achieving security against malicious adversary. 
	In particular, {\main}'s design leverages an emerging distributed trust setting and a trending cryptographic primitive of secret-shared shuffle, with custom techniques developed to improve system-wide communication efficiency and harden the security of server-side computation. 
	Furthermore, through analyzing the RDP of the overall FL process, we also derive a tighter bound compared to existing approximate DP bounds. 
	We conduct extensive experiments over two real-world datasets to demonstrate {\main}'s better privacy-utility trade-off and promising performance.
	
	\section*{Acknowledgments}
	
	We sincerely thank the shepherd and the anonymous reviewers for their constructive and invaluable feedback. 
	This work was supported in part by the Guangdong Basic and Applied Basic Research Foundation under Grant 2023A1515010714 and Grant 2024A1515012299, by the National Natural Science Foundation of China under Grant 62071142, and by the Shenzhen Science and Technology Program under Grant JCYJ20220531095416037 and Grant JCYJ20230807094411024.
	
	\bibliographystyle{ACM-Reference-Format}
	\bibliography{reference}


\begin{thebibliography}{54}


\ifx \showCODEN    \undefined \def \showCODEN     #1{\unskip}     \fi
\ifx \showDOI      \undefined \def \showDOI       #1{#1}\fi
\ifx \showISBNx    \undefined \def \showISBNx     #1{\unskip}     \fi
\ifx \showISBNxiii \undefined \def \showISBNxiii  #1{\unskip}     \fi
\ifx \showISSN     \undefined \def \showISSN      #1{\unskip}     \fi
\ifx \showLCCN     \undefined \def \showLCCN      #1{\unskip}     \fi
\ifx \shownote     \undefined \def \shownote      #1{#1}          \fi
\ifx \showarticletitle \undefined \def \showarticletitle #1{#1}   \fi
\ifx \showURL      \undefined \def \showURL       {\relax}        \fi
\providecommand\bibfield[2]{#2}
\providecommand\bibinfo[2]{#2}
\providecommand\natexlab[1]{#1}
\providecommand\showeprint[2][]{arXiv:#2}

\bibitem[Abadi et~al\mbox{.}(2016)]%
        {DPSGD}
\bibfield{author}{\bibinfo{person}{Martin Abadi}, \bibinfo{person}{Andy Chu}, \bibinfo{person}{Ian Goodfellow}, \bibinfo{person}{H~Brendan McMahan}, \bibinfo{person}{Ilya Mironov}, \bibinfo{person}{Kunal Talwar}, {and} \bibinfo{person}{Li Zhang}.} \bibinfo{year}{2016}\natexlab{}.
\newblock \showarticletitle{Deep learning with differential privacy}. In \bibinfo{booktitle}{\emph{Proc. of ACM CCS}}.
\newblock


\bibitem[Agarwal et~al\mbox{.}(2021)]%
        {AgarwalKL21}
\bibfield{author}{\bibinfo{person}{Naman Agarwal}, \bibinfo{person}{Peter Kairouz}, {and} \bibinfo{person}{Ziyu Liu}.} \bibinfo{year}{2021}\natexlab{}.
\newblock \showarticletitle{The Skellam Mechanism for Differentially Private Federated Learning}. In \bibinfo{booktitle}{\emph{Proc. of NeurIPS}}.
\newblock


\bibitem[Balle et~al\mbox{.}(2018)]%
        {BalleBG18}
\bibfield{author}{\bibinfo{person}{Borja Balle}, \bibinfo{person}{Gilles Barthe}, {and} \bibinfo{person}{Marco Gaboardi}.} \bibinfo{year}{2018}\natexlab{}.
\newblock \showarticletitle{Privacy Amplification by Subsampling: Tight Analyses via Couplings and Divergences}. In \bibinfo{booktitle}{\emph{Proc. of {NeurIPS}}}.
\newblock


\bibitem[Balle et~al\mbox{.}(2020a)]%
        {BalleBGHS20}
\bibfield{author}{\bibinfo{person}{Borja Balle}, \bibinfo{person}{Gilles Barthe}, \bibinfo{person}{Marco Gaboardi}, \bibinfo{person}{Justin Hsu}, {and} \bibinfo{person}{Tetsuya Sato}.} \bibinfo{year}{2020}\natexlab{a}.
\newblock \showarticletitle{Hypothesis Testing Interpretations and Renyi Differential Privacy}. In \bibinfo{booktitle}{\emph{Proc. of {AISTATS}}}.
\newblock


\bibitem[Balle et~al\mbox{.}(2019)]%
        {BalleBGN19}
\bibfield{author}{\bibinfo{person}{Borja Balle}, \bibinfo{person}{James Bell}, \bibinfo{person}{Adri{\`{a}} Gasc{\'{o}}n}, {and} \bibinfo{person}{Kobbi Nissim}.} \bibinfo{year}{2019}\natexlab{}.
\newblock \showarticletitle{The Privacy Blanket of the Shuffle Model}. In \bibinfo{booktitle}{\emph{Proc. of CRYPTO}}.
\newblock


\bibitem[Balle et~al\mbox{.}(2020b)]%
        {BalleBGN20}
\bibfield{author}{\bibinfo{person}{Borja Balle}, \bibinfo{person}{James Bell}, \bibinfo{person}{Adri{\`{a}} Gasc{\'{o}}n}, {and} \bibinfo{person}{Kobbi Nissim}.} \bibinfo{year}{2020}\natexlab{b}.
\newblock \showarticletitle{Private Summation in the Multi-Message Shuffle Model}. In \bibinfo{booktitle}{\emph{Proc. of {ACM} {CCS}}}.
\newblock


\bibitem[Beaver(1991)]%
        {Beaver}
\bibfield{author}{\bibinfo{person}{Donald Beaver}.} \bibinfo{year}{1991}\natexlab{}.
\newblock \showarticletitle{Efficient Multiparty Protocols Using Circuit Randomization}. In \bibinfo{booktitle}{\emph{Proc. of {CRYPTO}}}.
\newblock


\bibitem[Bonawitz et~al\mbox{.}(2017)]%
        {BonawitzIKMMPRS17}
\bibfield{author}{\bibinfo{person}{Kallista~A. Bonawitz}, \bibinfo{person}{Vladimir Ivanov}, \bibinfo{person}{Ben Kreuter}, \bibinfo{person}{Antonio Marcedone}, \bibinfo{person}{H.~Brendan McMahan}, \bibinfo{person}{Sarvar Patel}, \bibinfo{person}{Daniel Ramage}, \bibinfo{person}{Aaron Segal}, {and} \bibinfo{person}{Karn Seth}.} \bibinfo{year}{2017}\natexlab{}.
\newblock \showarticletitle{Practical Secure Aggregation for Privacy-Preserving Machine Learning}. In \bibinfo{booktitle}{\emph{Proc. of ACM {CCS}}}.
\newblock


\bibitem[Botrel et~al\mbox{.}(2023)]%
        {Goff}
\bibfield{author}{\bibinfo{person}{Gautam Botrel}, \bibinfo{person}{Thomas Piellard}, \bibinfo{person}{Youssef~El Housni}, \bibinfo{person}{Arya Tabaie}, \bibinfo{person}{Gus Gutoski}, {and} \bibinfo{person}{Ivo Kubjas}.} \bibinfo{year}{2023}\natexlab{}.
\newblock \bibinfo{booktitle}{\emph{ConsenSys/gnark-crypto: v0.11.2}}.
\newblock


\bibitem[Bun and Steinke(2016)]%
        {BunS16}
\bibfield{author}{\bibinfo{person}{Mark Bun} {and} \bibinfo{person}{Thomas Steinke}.} \bibinfo{year}{2016}\natexlab{}.
\newblock \showarticletitle{Concentrated Differential Privacy: Simplifications, Extensions, and Lower Bounds}. In \bibinfo{booktitle}{\emph{Proc. of TCC}}.
\newblock


\bibitem[Canonne et~al\mbox{.}(2020)]%
        {Canonne0S20}
\bibfield{author}{\bibinfo{person}{Cl{\'{e}}ment~L. Canonne}, \bibinfo{person}{Gautam Kamath}, {and} \bibinfo{person}{Thomas Steinke}.} \bibinfo{year}{2020}\natexlab{}.
\newblock \showarticletitle{The Discrete Gaussian for Differential Privacy}. In \bibinfo{booktitle}{\emph{Proc. of NeurIPS}}.
\newblock


\bibitem[Chamikara et~al\mbox{.}(2022)]%
        {ChamikaraLCNGBK22}
\bibfield{author}{\bibinfo{person}{Mahawaga Arachchige~Pathum Chamikara}, \bibinfo{person}{Dongxi Liu}, \bibinfo{person}{Seyit Camtepe}, \bibinfo{person}{Surya Nepal}, \bibinfo{person}{Marthie Grobler}, \bibinfo{person}{Peter Bert{\'{o}}k}, {and} \bibinfo{person}{Ibrahim Khalil}.} \bibinfo{year}{2022}\natexlab{}.
\newblock \showarticletitle{Local Differential Privacy for Federated Learning}. In \bibinfo{booktitle}{\emph{Proc. of {ESORICS}}}.
\newblock


\bibitem[Chase et~al\mbox{.}(2020)]%
        {ChaseGP20}
\bibfield{author}{\bibinfo{person}{Melissa Chase}, \bibinfo{person}{Esha Ghosh}, {and} \bibinfo{person}{Oxana Poburinnaya}.} \bibinfo{year}{2020}\natexlab{}.
\newblock \showarticletitle{Secret-Shared Shuffle}. In \bibinfo{booktitle}{\emph{Proc. of {ASIACRYPT}}}.
\newblock


\bibitem[Chaum(1981)]%
        {mixnet}
\bibfield{author}{\bibinfo{person}{David Chaum}.} \bibinfo{year}{1981}\natexlab{}.
\newblock \showarticletitle{Untraceable Electronic Mail, Return Addresses, and Digital Pseudonyms}.
\newblock \bibinfo{journal}{\emph{Commun. {ACM}}} \bibinfo{volume}{24}, \bibinfo{number}{2} (\bibinfo{year}{1981}), \bibinfo{pages}{84--88}.
\newblock


\bibitem[Dauterman et~al\mbox{.}(2022)]%
        {DautermanRPS22}
\bibfield{author}{\bibinfo{person}{Emma Dauterman}, \bibinfo{person}{Mayank Rathee}, \bibinfo{person}{Raluca~Ada Popa}, {and} \bibinfo{person}{Ion Stoica}.} \bibinfo{year}{2022}\natexlab{}.
\newblock \showarticletitle{Waldo: {A} Private Time-Series Database from Function Secret Sharing}. In \bibinfo{booktitle}{\emph{Proc. of {IEEE} {S\&P}}}.
\newblock


\bibitem[Duchi et~al\mbox{.}(2018)]%
        {duchi2018minimax}
\bibfield{author}{\bibinfo{person}{John~C Duchi}, \bibinfo{person}{Michael~I Jordan}, {and} \bibinfo{person}{Martin~J Wainwright}.} \bibinfo{year}{2018}\natexlab{}.
\newblock \showarticletitle{Minimax optimal procedures for locally private estimation}.
\newblock \bibinfo{journal}{\emph{J. Amer. Statist. Assoc.}} \bibinfo{volume}{113}, \bibinfo{number}{521} (\bibinfo{year}{2018}), \bibinfo{pages}{182--201}.
\newblock


\bibitem[Dwork and Roth(2014)]%
        {DworkR14}
\bibfield{author}{\bibinfo{person}{Cynthia Dwork} {and} \bibinfo{person}{Aaron Roth}.} \bibinfo{year}{2014}\natexlab{}.
\newblock \showarticletitle{The Algorithmic Foundations of Differential Privacy}.
\newblock \bibinfo{journal}{\emph{Found. Trends Theor. Comput. Sci.}} \bibinfo{volume}{9}, \bibinfo{number}{3-4} (\bibinfo{year}{2014}), \bibinfo{pages}{211--407}.
\newblock


\bibitem[Erlingsson et~al\mbox{.}(2020)]%
        {Erlingsson20}
\bibfield{author}{\bibinfo{person}{{\'{U}}lfar Erlingsson}, \bibinfo{person}{Vitaly Feldman}, \bibinfo{person}{Ilya Mironov}, \bibinfo{person}{Ananth Raghunathan}, \bibinfo{person}{Shuang Song}, \bibinfo{person}{Kunal Talwar}, {and} \bibinfo{person}{Abhradeep Thakurta}.} \bibinfo{year}{2020}\natexlab{}.
\newblock \showarticletitle{Encode, Shuffle, Analyze Privacy Revisited: Formalizations and Empirical Evaluation}.
\newblock \bibinfo{journal}{\emph{CoRR}}  \bibinfo{volume}{abs/2001.03618} (\bibinfo{year}{2020}).
\newblock


\bibitem[Erlingsson et~al\mbox{.}(2019)]%
        {ErlingssonFMRTT19}
\bibfield{author}{\bibinfo{person}{{\'{U}}lfar Erlingsson}, \bibinfo{person}{Vitaly Feldman}, \bibinfo{person}{Ilya Mironov}, \bibinfo{person}{Ananth Raghunathan}, \bibinfo{person}{Kunal Talwar}, {and} \bibinfo{person}{Abhradeep Thakurta}.} \bibinfo{year}{2019}\natexlab{}.
\newblock \showarticletitle{Amplification by Shuffling: From Local to Central Differential Privacy via Anonymity}. In \bibinfo{booktitle}{\emph{Proc. of {SODA}}}.
\newblock


\bibitem[Eskandarian and Boneh(2022)]%
        {NDSS22}
\bibfield{author}{\bibinfo{person}{Saba Eskandarian} {and} \bibinfo{person}{Dan Boneh}.} \bibinfo{year}{2022}\natexlab{}.
\newblock \showarticletitle{Clarion: Anonymous Communication from Multiparty Shuffling Protocols}. In \bibinfo{booktitle}{\emph{Proc. of {NDSS}}}.
\newblock


\bibitem[Gehlhar et~al\mbox{.}(2023)]%
        {GehlharM0SWY23}
\bibfield{author}{\bibinfo{person}{Till Gehlhar}, \bibinfo{person}{Felix Marx}, \bibinfo{person}{Thomas Schneider}, \bibinfo{person}{Ajith Suresh}, \bibinfo{person}{Tobias Wehrle}, {and} \bibinfo{person}{Hossein Yalame}.} \bibinfo{year}{2023}\natexlab{}.
\newblock \showarticletitle{SafeFL: MPC-friendly Framework for Private and Robust Federated Learning}. In \bibinfo{booktitle}{\emph{Proc. of {IEEE} {S\&P} Workshops (SPW)}}.
\newblock


\bibitem[Ghazi et~al\mbox{.}(2021)]%
        {GhaziG0PV21}
\bibfield{author}{\bibinfo{person}{Badih Ghazi}, \bibinfo{person}{Noah Golowich}, \bibinfo{person}{Ravi Kumar}, \bibinfo{person}{Rasmus Pagh}, {and} \bibinfo{person}{Ameya Velingker}.} \bibinfo{year}{2021}\natexlab{}.
\newblock \showarticletitle{On the Power of Multiple Anonymous Messages: Frequency Estimation and Selection in the Shuffle Model of Differential Privacy}. In \bibinfo{booktitle}{\emph{Proc. of {EUROCRYPT}}}.
\newblock


\bibitem[Girgis et~al\mbox{.}(2021a)]%
        {NeurIPS21}
\bibfield{author}{\bibinfo{person}{Antonious~M. Girgis}, \bibinfo{person}{Deepesh Data}, {and} \bibinfo{person}{Suhas~N. Diggavi}.} \bibinfo{year}{2021}\natexlab{a}.
\newblock \showarticletitle{Renyi Differential Privacy of The Subsampled Shuffle Model In Distributed Learning}. In \bibinfo{booktitle}{\emph{Proc. of {NeurIPS}}}.
\newblock


\bibitem[Girgis et~al\mbox{.}(2021b)]%
        {AISTATS21}
\bibfield{author}{\bibinfo{person}{Antonious~M. Girgis}, \bibinfo{person}{Deepesh Data}, \bibinfo{person}{Suhas~N. Diggavi}, \bibinfo{person}{Peter Kairouz}, {and} \bibinfo{person}{Ananda~Theertha Suresh}.} \bibinfo{year}{2021}\natexlab{b}.
\newblock \showarticletitle{Shuffled Model of Differential Privacy in Federated Learning}. In \bibinfo{booktitle}{\emph{Proc. of {AISTATS}}}.
\newblock


\bibitem[Girgis et~al\mbox{.}(2021c)]%
        {CCS21}
\bibfield{author}{\bibinfo{person}{Antonious~M. Girgis}, \bibinfo{person}{Deepesh Data}, \bibinfo{person}{Suhas~N. Diggavi}, \bibinfo{person}{Ananda~Theertha Suresh}, {and} \bibinfo{person}{Peter Kairouz}.} \bibinfo{year}{2021}\natexlab{c}.
\newblock \showarticletitle{On the R{\'{e}}nyi Differential Privacy of the Shuffle Model}. In \bibinfo{booktitle}{\emph{Proc. of ACM {CCS}}}.
\newblock


\bibitem[Hahn et~al\mbox{.}(2023)]%
        {HahnKKH23}
\bibfield{author}{\bibinfo{person}{Changhee Hahn}, \bibinfo{person}{Hodong Kim}, \bibinfo{person}{Minjae Kim}, {and} \bibinfo{person}{Junbeom Hur}.} \bibinfo{year}{2023}\natexlab{}.
\newblock \showarticletitle{VerSA: Verifiable Secure Aggregation for Cross-Device Federated Learning}.
\newblock \bibinfo{journal}{\emph{{IEEE} Trans. Dependable Secur. Comput.}} \bibinfo{volume}{20}, \bibinfo{number}{1} (\bibinfo{year}{2023}), \bibinfo{pages}{36--52}.
\newblock


\bibitem[Iyengar et~al\mbox{.}(2019)]%
        {iyengar2019towards}
\bibfield{author}{\bibinfo{person}{Roger Iyengar}, \bibinfo{person}{Joseph~P Near}, \bibinfo{person}{Dawn Song}, \bibinfo{person}{Om Thakkar}, \bibinfo{person}{Abhradeep Thakurta}, {and} \bibinfo{person}{Lun Wang}.} \bibinfo{year}{2019}\natexlab{}.
\newblock \showarticletitle{Towards practical differentially private convex optimization}. In \bibinfo{booktitle}{\emph{Proc. of IEEE S\&P}}.
\newblock


\bibitem[Kairouz et~al\mbox{.}(2021)]%
        {kairouz21}
\bibfield{author}{\bibinfo{person}{Peter Kairouz}, \bibinfo{person}{Ziyu Liu}, {and} \bibinfo{person}{Thomas Steinke}.} \bibinfo{year}{2021}\natexlab{}.
\newblock \showarticletitle{The Distributed Discrete Gaussian Mechanism for Federated Learning with Secure Aggregation}. In \bibinfo{booktitle}{\emph{Proc. of {ICML}}}.
\newblock


\bibitem[Kasiviswanathan et~al\mbox{.}(2011)]%
        {KasiviswanathanLNRS11}
\bibfield{author}{\bibinfo{person}{Shiva~Prasad Kasiviswanathan}, \bibinfo{person}{Homin~K. Lee}, \bibinfo{person}{Kobbi Nissim}, \bibinfo{person}{Sofya Raskhodnikova}, {and} \bibinfo{person}{Adam~D. Smith}.} \bibinfo{year}{2011}\natexlab{}.
\newblock \showarticletitle{What Can We Learn Privately?}
\newblock \bibinfo{journal}{\emph{{SIAM} J. Comput.}} \bibinfo{volume}{40}, \bibinfo{number}{3} (\bibinfo{year}{2011}), \bibinfo{pages}{793--826}.
\newblock


\bibitem[LeCun et~al\mbox{.}(1998)]%
        {LeCunBBH98}
\bibfield{author}{\bibinfo{person}{Yann LeCun}, \bibinfo{person}{L{\'{e}}on Bottou}, \bibinfo{person}{Yoshua Bengio}, {and} \bibinfo{person}{Patrick Haffner}.} \bibinfo{year}{1998}\natexlab{}.
\newblock \showarticletitle{Gradient-based learning applied to document recognition}.
\newblock \bibinfo{journal}{\emph{Proc. {IEEE}}} \bibinfo{volume}{86}, \bibinfo{number}{11} (\bibinfo{year}{1998}), \bibinfo{pages}{2278--2324}.
\newblock


\bibitem[Lindell(2017)]%
        {Lindell17}
\bibfield{author}{\bibinfo{person}{Yehuda Lindell}.} \bibinfo{year}{2017}\natexlab{}.
\newblock \showarticletitle{How to Simulate It - {A} Tutorial on the Simulation Proof Technique}.
\newblock  (\bibinfo{year}{2017}), \bibinfo{pages}{277--346}.
\newblock


\bibitem[Liu et~al\mbox{.}(2021)]%
        {liu2021flame}
\bibfield{author}{\bibinfo{person}{Ruixuan Liu}, \bibinfo{person}{Yang Cao}, \bibinfo{person}{Hong Chen}, \bibinfo{person}{Ruoyang Guo}, {and} \bibinfo{person}{Masatoshi Yoshikawa}.} \bibinfo{year}{2021}\natexlab{}.
\newblock \showarticletitle{Flame: Differentially private federated learning in the shuffle model}. In \bibinfo{booktitle}{\emph{Proc. of AAAI}}.
\newblock


\bibitem[Liu et~al\mbox{.}(2020)]%
        {liu2020fedsel}
\bibfield{author}{\bibinfo{person}{Ruixuan Liu}, \bibinfo{person}{Yang Cao}, \bibinfo{person}{Masatoshi Yoshikawa}, {and} \bibinfo{person}{Hong Chen}.} \bibinfo{year}{2020}\natexlab{}.
\newblock \showarticletitle{Fedsel: Federated sgd under local differential privacy with top-k dimension selection}. In \bibinfo{booktitle}{\emph{Proc. of DASFAA}}.
\newblock


\bibitem[Liu et~al\mbox{.}(2023)]%
        {LiuCLWW023}
\bibfield{author}{\bibinfo{person}{Yuchen Liu}, \bibinfo{person}{Chen Chen}, \bibinfo{person}{Lingjuan Lyu}, \bibinfo{person}{Fangzhao Wu}, \bibinfo{person}{Sai Wu}, {and} \bibinfo{person}{Gang Chen}.} \bibinfo{year}{2023}\natexlab{}.
\newblock \showarticletitle{Byzantine-Robust Learning on Heterogeneous Data via Gradient Splitting}. In \bibinfo{booktitle}{\emph{Proc. of ICML}}.
\newblock


\bibitem[Mayekar and Tyagi(2020)]%
        {mayekar2020limits}
\bibfield{author}{\bibinfo{person}{Prathamesh Mayekar} {and} \bibinfo{person}{Himanshu Tyagi}.} \bibinfo{year}{2020}\natexlab{}.
\newblock \showarticletitle{Limits on gradient compression for stochastic optimization}. In \bibinfo{booktitle}{\emph{Proc. of IEEE ISIT}}.
\newblock


\bibitem[McMahan et~al\mbox{.}(2017)]%
        {fedavg}
\bibfield{author}{\bibinfo{person}{Brendan McMahan}, \bibinfo{person}{Eider Moore}, \bibinfo{person}{Daniel Ramage}, \bibinfo{person}{Seth Hampson}, {and} \bibinfo{person}{Blaise~Ag{\"{u}}era y Arcas}.} \bibinfo{year}{2017}\natexlab{}.
\newblock \showarticletitle{Communication-Efficient Learning of Deep Networks from Decentralized Data}. In \bibinfo{booktitle}{\emph{Proc. of {AISTATS}}}.
\newblock


\bibitem[Miao et~al\mbox{.}(2022)]%
        {miao2022compressed}
\bibfield{author}{\bibinfo{person}{Yinbin Miao}, \bibinfo{person}{Rongpeng Xie}, \bibinfo{person}{Xinghua Li}, \bibinfo{person}{Ximeng Liu}, \bibinfo{person}{Zhuo Ma}, {and} \bibinfo{person}{Robert~H Deng}.} \bibinfo{year}{2022}\natexlab{}.
\newblock \showarticletitle{Compressed federated learning based on adaptive local differential privacy}. In \bibinfo{booktitle}{\emph{Proc. of ACM ACSAC}}.
\newblock


\bibitem[Mironov(2017)]%
        {RDP}
\bibfield{author}{\bibinfo{person}{Ilya Mironov}.} \bibinfo{year}{2017}\natexlab{}.
\newblock \showarticletitle{R{\'{e}}nyi Differential Privacy}. In \bibinfo{booktitle}{\emph{Proc. of {IEEE} CSF}}.
\newblock


\bibitem[Mohassel and Zhang(2017)]%
        {mohassel2017secureml}
\bibfield{author}{\bibinfo{person}{Payman Mohassel} {and} \bibinfo{person}{Yupeng Zhang}.} \bibinfo{year}{2017}\natexlab{}.
\newblock \showarticletitle{SecureML: {A} System for Scalable Privacy-Preserving Machine Learning}. In \bibinfo{booktitle}{\emph{Proc. of {IEEE} S\&P}}.
\newblock


\bibitem[Mouris et~al\mbox{.}(2024)]%
        {PLASMA}
\bibfield{author}{\bibinfo{person}{Dimitris Mouris}, \bibinfo{person}{Pratik Sarkar}, {and} \bibinfo{person}{Nektarios~Georgios Tsoutsos}.} \bibinfo{year}{2024}\natexlab{}.
\newblock \showarticletitle{{PLASMA: Private, Lightweight Aggregated Statistics against Malicious Adversaries}}.
\newblock \bibinfo{journal}{\emph{{Proc. Priv. Enhancing Technol.}}} \bibinfo{volume}{2024}, \bibinfo{number}{3} (\bibinfo{year}{2024}), \bibinfo{pages}{1--19}.
\newblock


\bibitem[Rathee et~al\mbox{.}(2023)]%
        {ELSA}
\bibfield{author}{\bibinfo{person}{Mayank Rathee}, \bibinfo{person}{Conghao Shen}, \bibinfo{person}{Sameer Wagh}, {and} \bibinfo{person}{Raluca~Ada Popa}.} \bibinfo{year}{2023}\natexlab{}.
\newblock \showarticletitle{{ELSA:} Secure Aggregation for Federated Learning with Malicious Actors}. In \bibinfo{booktitle}{\emph{Proc. of {IEEE} {S\&P}}}.
\newblock


\bibitem[Riazi et~al\mbox{.}(2018)]%
        {Chameleon}
\bibfield{author}{\bibinfo{person}{M.~Sadegh Riazi}, \bibinfo{person}{Christian Weinert}, \bibinfo{person}{Oleksandr Tkachenko}, \bibinfo{person}{Ebrahim~M. Songhori}, \bibinfo{person}{Thomas Schneider}, {and} \bibinfo{person}{Farinaz Koushanfar}.} \bibinfo{year}{2018}\natexlab{}.
\newblock \showarticletitle{Chameleon: {A} Hybrid Secure Computation Framework for Machine Learning Applications}. In \bibinfo{booktitle}{\emph{Proc. of ACM AsiaCCS}}.
\newblock


\bibitem[Shalev{-}Shwartz(2012)]%
        {Shwartz12}
\bibfield{author}{\bibinfo{person}{Shai Shalev{-}Shwartz}.} \bibinfo{year}{2012}\natexlab{}.
\newblock \showarticletitle{Online Learning and Online Convex Optimization}.
\newblock \bibinfo{journal}{\emph{Found. Trends Mach. Learn.}} \bibinfo{volume}{4}, \bibinfo{number}{2} (\bibinfo{year}{2012}), \bibinfo{pages}{107--194}.
\newblock


\bibitem[Shamir and Zhang(2013)]%
        {Shamir013}
\bibfield{author}{\bibinfo{person}{Ohad Shamir} {and} \bibinfo{person}{Tong Zhang}.} \bibinfo{year}{2013}\natexlab{}.
\newblock \showarticletitle{Stochastic Gradient Descent for Non-smooth Optimization: Convergence Results and Optimal Averaging Schemes}. In \bibinfo{booktitle}{\emph{Proc. of {ICML}}}.
\newblock


\bibitem[Song et~al\mbox{.}(2024)]%
        {NDSS24}
\bibfield{author}{\bibinfo{person}{Xiangfu Song}, \bibinfo{person}{Dong Yin}, \bibinfo{person}{Jianli Bai}, \bibinfo{person}{Changyu Dong}, {and} \bibinfo{person}{Ee{-}Chien Chang}.} \bibinfo{year}{2024}\natexlab{}.
\newblock \showarticletitle{Secret-Shared Shuffle with Malicious Security}. In \bibinfo{booktitle}{\emph{Proc. of {NDSS}}}.
\newblock


\bibitem[Sun et~al\mbox{.}(2021)]%
        {SunQC21}
\bibfield{author}{\bibinfo{person}{Lichao Sun}, \bibinfo{person}{Jianwei Qian}, {and} \bibinfo{person}{Xun Chen}.} \bibinfo{year}{2021}\natexlab{}.
\newblock \showarticletitle{{LDP-FL:} Practical Private Aggregation in Federated Learning with Local Differential Privacy}. In \bibinfo{booktitle}{\emph{Proc. of IJCAI}}.
\newblock


\bibitem[Tang et~al\mbox{.}(2024)]%
        {tang2024flexible}
\bibfield{author}{\bibinfo{person}{Jinling Tang}, \bibinfo{person}{Haixia Xu}, \bibinfo{person}{Mingsheng Wang}, \bibinfo{person}{Tao Tang}, \bibinfo{person}{Chunying Peng}, {and} \bibinfo{person}{Huimei Liao}.} \bibinfo{year}{2024}\natexlab{}.
\newblock \showarticletitle{A Flexible and Scalable Malicious Secure Aggregation Protocol for Federated Learning}.
\newblock \bibinfo{journal}{\emph{IEEE Trans. Inf. Forensics Secur.}} (\bibinfo{year}{2024}).
\newblock


\bibitem[Truex et~al\mbox{.}(2020)]%
        {Truex0CGW20}
\bibfield{author}{\bibinfo{person}{Stacey Truex}, \bibinfo{person}{Ling Liu}, \bibinfo{person}{Ka~Ho Chow}, \bibinfo{person}{Mehmet~Emre Gursoy}, {and} \bibinfo{person}{Wenqi Wei}.} \bibinfo{year}{2020}\natexlab{}.
\newblock \showarticletitle{LDP-Fed: federated learning with local differential privacy}. In \bibinfo{booktitle}{\emph{Proc. of ACM EdgeSys}}.
\newblock


\bibitem[van~den Hooff et~al\mbox{.}(2015)]%
        {HooffLZZ15}
\bibfield{author}{\bibinfo{person}{Jelle van~den Hooff}, \bibinfo{person}{David Lazar}, \bibinfo{person}{Matei Zaharia}, {and} \bibinfo{person}{Nickolai Zeldovich}.} \bibinfo{year}{2015}\natexlab{}.
\newblock \showarticletitle{Vuvuzela: scalable private messaging resistant to traffic analysis}. In \bibinfo{booktitle}{\emph{Proc. of SOSP}}.
\newblock


\bibitem[Wang et~al\mbox{.}(2019)]%
        {wang2019differentially}
\bibfield{author}{\bibinfo{person}{Di Wang}, \bibinfo{person}{Changyou Chen}, {and} \bibinfo{person}{Jinhui Xu}.} \bibinfo{year}{2019}\natexlab{}.
\newblock \showarticletitle{Differentially private empirical risk minimization with non-convex loss functions}. In \bibinfo{booktitle}{\emph{Proc. of ICML}}.
\newblock


\bibitem[Wegman and Carter(1981)]%
        {WegmanC81}
\bibfield{author}{\bibinfo{person}{Mark~N. Wegman} {and} \bibinfo{person}{Larry Carter}.} \bibinfo{year}{1981}\natexlab{}.
\newblock \showarticletitle{New Hash Functions and Their Use in Authentication and Set Equality}.
\newblock \bibinfo{journal}{\emph{J. Comput. Syst. Sci.}} \bibinfo{volume}{22}, \bibinfo{number}{3} (\bibinfo{year}{1981}), \bibinfo{pages}{265--279}.
\newblock


\bibitem[Xiao et~al\mbox{.}(2017)]%
        {FMNIST}
\bibfield{author}{\bibinfo{person}{Han Xiao}, \bibinfo{person}{Kashif Rasul}, {and} \bibinfo{person}{Roland Vollgraf}.} \bibinfo{year}{2017}\natexlab{}.
\newblock \showarticletitle{Fashion-MNIST: a Novel Image Dataset for Benchmarking Machine Learning Algorithms}.
\newblock \bibinfo{journal}{\emph{CoRR}}  \bibinfo{volume}{abs/1708.07747} (\bibinfo{year}{2017}).
\newblock


\bibitem[Xu et~al\mbox{.}(2024)]%
        {camel-submission}
\bibfield{author}{\bibinfo{person}{Shuangqing Xu}, \bibinfo{person}{Yifeng Zheng}, {and} \bibinfo{person}{Zhongyun Hua}.} \bibinfo{year}{2024}\natexlab{}.
\newblock \showarticletitle{Camel: Communication-Efficient and Maliciously Secure Federated Learning in the Shuffle Model of Differential Privacy}. In \bibinfo{booktitle}{\emph{Proc. of ACM {CCS}}}.
\newblock


\bibitem[Zhu et~al\mbox{.}(2019)]%
        {ZhuLH19}
\bibfield{author}{\bibinfo{person}{Ligeng Zhu}, \bibinfo{person}{Zhijian Liu}, {and} \bibinfo{person}{Song Han}.} \bibinfo{year}{2019}\natexlab{}.
\newblock \showarticletitle{Deep Leakage from Gradients}. In \bibinfo{booktitle}{\emph{Proc. of NeurIPS}}.
\newblock


\end{thebibliography}
	
	\appendix
	
	\section{Proof of Lemma \ref{lem:NoisyGradCmpr}}
	\label{appendix:lem_NoisyGradCmpr_proof}
	\textit{Proof}. 
	Firstly, we prove that our proposed mechanism \textsf{NoisyGradCmpr} compresses gradients without incurring additional utility loss compared to DJW18 (denoted as $\mathsf{M}(\cdot)$). 
	Recall in \textsf{NoisyGradCmpr} (Algorithm \ref{algo:NoisyGradCmpr}), a $d$-dimensional random vector $\boldsymbol{v}$ is generated at the client-side using a random seed $s$ and a PRG. 
	Given that the server shares the same PRG and receives seed from the client, it can reproduce the same vector $\boldsymbol{v}$. 
	Therefore, employing a PRG for gradient compression is lossless, i.e., we have $\mathbb{E}[\textsf{R}(\boldsymbol{x})]=\mathbb{E}[\textsf{M}(\boldsymbol{x})]$  and $\mathbb{E}\|\textsf{R}(\boldsymbol{x}) - \textsf{M}(\boldsymbol{x})\|^2_2 = 0$. 
	Due to the lossless compression property, we only need to prove that the LDP perturbation mechanism applied in Algorithm \ref{algo:NoisyGradCmpr} and \ref{algo:NoisyGradDcmp} is unbiased, guarantees $\varepsilon_0$-LDP, and has bounded variance. 
	Note that it has been shown that $\textsf{M}(\cdot)$ is an $\varepsilon_0$-LDP mechanism \cite{duchi2018minimax}. 
	Since our proposed \textsf{NoisyGradCmpr} and \textsf{NoisyGradDcmp} only use seed and a PRG to replace the randomly generated vector $\boldsymbol{v}$ in $\textsf{M}(\cdot)$, our mechanism maintains the same unbiased property with $\textsf{M}(\cdot)$, where $\mathbb{E}[\textsf{M}(\boldsymbol{x})] = \boldsymbol{x}$. 
	Therefore, the variance of our mechanism is bounded as follows: 
	\begin{equation}
		\notag
		\begin{aligned}
			& \:\:\mathbb{E}\|\textsf{R}(\boldsymbol{x}) - \boldsymbol{x}\|^2_2 \\
			& = \mathbb{E}\|\textsf{R}(\boldsymbol{x}) - \textsf{M}(\boldsymbol{x}) + \textsf{M}(\boldsymbol{x}) - \boldsymbol{x}\|^2_2 \\
			& \stackrel{\makebox[0pt]{\mbox{\normalfont\footnotesize (a)}}}{=} \mathbb{E}\|\textsf{R}(\boldsymbol{x}) - \textsf{M}(\boldsymbol{x})\|^2_2  + \mathbb{E}\|\textsf{M}(\boldsymbol{x}) - \boldsymbol{x}\|^2_2 \\
			& \stackrel{\makebox[0pt]{\mbox{\normalfont\footnotesize (b)}}}{=} \mathbb{E}\|\textsf{M}(\boldsymbol{x}) - \boldsymbol{x}\|^2_2 \\
			& \stackrel{\makebox[0pt]{\mbox{\normalfont\footnotesize (c)}}}{\leq}  \left(L\frac{3 \sqrt{\pi}\sqrt{d}}{4} \frac{e^{\varepsilon_0}+1}{e^{\varepsilon_0}-1}\right)^2,  
		\end{aligned}
	\end{equation}
	
	\noindent where step (a) utilizes the property that $\mathbb{E}[\textsf{R}(\boldsymbol{x})]=\mathbb{E}[\textsf{M}(\boldsymbol{x})]$ and \textsf{M} is unbiased, implying that the cross-multiplication term is zero \cite{AISTATS21}. 
	Step (b) follows because $\mathbb{E}\|\textsf{R}(\boldsymbol{x}) - \textsf{M}(\boldsymbol{x})\|^2_2 = 0$. 
	Step (c) utilizes the property that $\mathbb{E}\|\textsf{M}(\boldsymbol{x}) - \boldsymbol{x}\|^2_2 \leq \mathbb{E}\|\textsf{M}(\boldsymbol{x})\|^2_2 \leq \left(L\frac{3 \sqrt{\pi}\sqrt{d}}{4} \frac{e^{\varepsilon_0}+1}{e^{\varepsilon_0}-1}\right)^2$ \cite{duchi2018minimax}.

	\begin{algorithm*}[!t]
		\caption{The Complete Protocol of Our Proposed Maliciously Secure Secret-Shared Shuffle (\textsf{VeriShuffle})} 
		\label{algo:malicious-shuffle}
		\begin{algorithmic}[1]
			\Require A length-$N$ vector $\llbracket \boldsymbol{x} \rrbracket$ secret-shared among $\mathcal{S}_1$ and $\mathcal{S}_2$.   
			\Ensure A randomly permuted vector $\llbracket \pi_2(\pi_1(\pi_{12}(\boldsymbol{x}))) \rrbracket$ secret-shared among $\mathcal{S}_1$ and $\mathcal{S}_2$. 
			
			\State // \emph{\underline{Offine phase of the secret-shared shuffle:}}
			\State $\mathcal{S}_{\{1,2,3\}}$ exchange and expand seeds such that $\mathcal{S}_1$ holds $\pi_1, \pi_{12},\boldsymbol{a}'_2,\boldsymbol{b}_2$; $\mathcal{S}_2$ holds $\pi_2, \pi_{12},\boldsymbol{a}_1$;  $\mathcal{S}_3$ holds $\pi_1,\pi_2,\boldsymbol{a}'_2,\boldsymbol{b}_2,\boldsymbol{a}_1$. 
			\State $\mathcal{S}_3$ calculates $\boldsymbol{\Delta} = \pi_2(\pi_1(\boldsymbol{a}_1) + \boldsymbol{a}'_2) - \boldsymbol{b}_2$ and sends $\boldsymbol{\Delta}$ to $\mathcal{S}_2$. 
			\State $\mathcal{S}_{\{1,2\}}$ locally apply $\pi_{12}$ on $\llbracket \boldsymbol{x} \rrbracket$ to obtain $\llbracket \hat{\boldsymbol{x}} \rrbracket$, where $\hat{\boldsymbol{x}}=\pi_{12}(\boldsymbol{x})$. 
			
			\State // \emph{\underline{Online phase of the secret-shared shuffle:}}
			
			\State // \emph{\underline{Step (1) for secret-shared shuffle:}}
			\State $\mathcal{S}_2$ masks its input share $\langle \hat{\boldsymbol{x}}\rangle_2$ using $\boldsymbol{a}_1$ and sends $\boldsymbol{z}_2 \leftarrow \langle \hat{\boldsymbol{x}}\rangle_2 - \boldsymbol{a}_1$ to $\mathcal{S}_1$.
			
			\State // \emph{\underline{Integrity check for $\boldsymbol{z}_2$ sent from $\mathcal{S}_2$:}}
			
			\State $\mathcal{S}_1$ locally computes $\hat{\boldsymbol{x}} - \boldsymbol{a}_1$, where $\hat{\boldsymbol{x}} - \boldsymbol{a}_1 =$ $\boldsymbol{z}_2+\langle\hat{\boldsymbol{x}}\rangle_1$. 
			\State $\mathcal{S}_1$ splits $\hat{\boldsymbol{x}} - \boldsymbol{a}_1$ into two shares and discloses one share to $\mathcal{S}_3$. 
			\State $\mathcal{S}_3$ splits $\boldsymbol{a}_1$ into two shares and discloses one share to $\mathcal{S}_{1}$. 
			\State $\mathcal{S}_{\{1,3\}}$ calculate $\llbracket \hat{\boldsymbol{x}}\rrbracket$ by summing $\llbracket \hat{\boldsymbol{x}}- \boldsymbol{a}_1\rrbracket$ and $\llbracket\boldsymbol{a}_1\rrbracket$. 
			\State $\mathcal{S}_{\{1,3\}}$ calculate the verification tag $f$ following Eq. \ref{eq:verification} and outputs abort if $f\neq0$.  
			
			\State // \emph{\underline{Step (2) for secret-shared shuffle:}}
			\State $\mathcal{S}_1$ sets its output to be $\langle\boldsymbol{y}\rangle_1\leftarrow\boldsymbol{b}_2$ and sends $\boldsymbol{z}_1 \leftarrow \pi_1(\boldsymbol{z}_2+\langle\hat{\boldsymbol{x}}\rangle_1) - \boldsymbol{a}'_2$ to $\mathcal{S}_2$. 
			
			\State // \emph{\underline{Integrity check for $\boldsymbol{z}_1$ sent from $\mathcal{S}_1$:}}
			\State $\mathcal{S}_2$ locally computes $\pi_2(\boldsymbol{z}_1)$. 
			\State $\mathcal{S}_2$ splits $\pi_2(\boldsymbol{z}_1)$ into two shares and discloses one share to $\mathcal{S}_3$.
			\State $\mathcal{S}_3$ splits $\pi_2(\pi_1(\boldsymbol{a}_1)+\boldsymbol{a}'_2)$ into two shares and discloses one share to $\mathcal{S}_{2}$. 
			\State $\mathcal{S}_{\{2,3\}}$ calculate $\llbracket \pi_2(\pi_1(\hat{\boldsymbol{x}}))\rrbracket$ by computing  $\llbracket\pi_2(\boldsymbol{z}_1)\rrbracket + \llbracket\pi_2(\pi_1(\boldsymbol{a}_1)+\boldsymbol{a}'_2)\rrbracket$. 
			\State $\mathcal{S}_{\{2,3\}}$ calculate the verification tag $f$ following Eq. \ref{eq:verification} and outputs abort if $f\neq0$.
			
			\State // \emph{\underline{Step (3) for secret-shared shuffle:}}
			\State $\mathcal{S}_2$ sets its output to be $\langle\boldsymbol{y}\rangle_2\leftarrow\pi_2(\boldsymbol{z}_1) + \boldsymbol{\Delta}$. 
			
			\State // \emph{\underline{Post-shuffle integrity check:}} 
			\State $\mathcal{S}_{\{1,2\}}$ calculate the verification tag $f$ using $\llbracket \pi_2(\pi_1(\pi_{12}(\boldsymbol{x}))) \rrbracket$ following Eq. \ref{eq:verification} and outputs abort if $f\neq0$.
		\end{algorithmic}
	\end{algorithm*}

	\section{The Complete Protocols of Our Maliciously Secure Secret-shared Shuffle and {\main}}
	\label{appdix:complete_protocols}
	The complete protocol of our proposed maliciously secure secret-shared shuffle (denoted as \textsf{VeriShuffle}), integrated with our proposed defense mechanism, is presented in Algorithm \ref{algo:malicious-shuffle}. 
	The complete protocol of {\main} is then given in Algorithm \ref{algo:malicious-SGD-workflow}. 

	\begin{algorithm*}[!t]
		\caption{The Complete Protocol for Our Proposed Communication-Efficient and Maliciously Secure FL in the Shuffle Model of DP (\main)} 
		\label{algo:malicious-SGD-workflow}
		\begin{algorithmic}[1]
			\Require Each client $\mathcal{C}_i$ holds local dataset $\mathcal{D}_i$ for $i\in[n]$. 
			\Ensure$ \mathcal{S}_1,\mathcal{S}_2$ and each client $\mathcal{C}_{i}$ obtain a global model $\theta$ that satisfies $(\varepsilon,\delta)$-DP for $i\in[n]$. 
			
			\State $\mathcal{S}_1$, $\mathcal{S}_2$ initialize: ${\theta}_0 \in \mathcal{G}$. 
			\For {$t\in [T]$}\
			\State // \emph{\underline{Clients:}}
			\For{each client $\mathcal{C}_i$}
			\State \multiline{$\mathcal{C}_i$ chooses uniformly at random a set $\mathcal{U}_{i}$ of $s$ data points.}
			\For{data point $j\in\mathcal{U}_{i}$}
			\State $\mathbf{g}^t_{ij}\leftarrow\nabla_{\theta_t}F(\theta_t, \boldsymbol{x}_{ij})$.
			\State $\tilde{\mathbf{g}}^t_{ij} \leftarrow \mathbf{g}^t_{ij} / \max \left\{1, \frac{\left\|\mathbf{g}^t_{ij}\right\|_2}{L}\right\}$.
			\State \multiline{$\boldsymbol{r}^t_{ij} \leftarrow$ \textsf{NoisyGradCmpr}($\tilde{\mathbf{g}}^t_{ij}$). \Comment \emph{LDP with compression.}}
			\EndFor
			\State \multiline{$\mathcal{C}_i$ splits compressed gradients $\{\boldsymbol{r}^t_{ij}\}_{j\in\mathcal{U}_{i}}$ into two shares among $\mathcal{S}_1$ and $\mathcal{S}_2$. }
			\EndFor
			
			\State // \emph{\underline{Server-side computation:}}

			\State \multiline{$\llbracket\pi(\{\boldsymbol{r}_j\}_{j\in[N]})\rrbracket$ $\leftarrow$ \textsf{VeriShuffle}($\llbracket \{\boldsymbol{r}_j\}_{j\in[N]}\rrbracket$).\Comment \emph{Maliciously secure secret-shared shuffle of $N=ns$ compressed gradients.}}
			\State // \emph{\underline{Integrity check of secret shares:}} \label{code:mali-FL:15}
			\State $\mathcal{S}_{1}$ hashes $\langle \{\boldsymbol{r}_i\}_{i\in [B]} \rangle_1$ to get $h_1$ and $\mathcal{S}_{2}$ hashes $\langle \{\boldsymbol{r}_i\}_{i\in [B]} \rangle_2$ to get $h_2$. 
			\State $\mathcal{S}_{\{1,2\}}$ exchange hashes $h_1,h_2$. 
			\State $\mathcal{S}_{1}$ discloses $\langle \{\boldsymbol{r}_i\}_{i\in [B]} \rangle_1$ to $\mathcal{S}_{2}$, and $\mathcal{S}_{2}$ discloses $\langle \{\boldsymbol{r}_i\}_{i\in [B]} \rangle_2$ to $\mathcal{S}_{1}$. 
			\State $\mathcal{S}_{1}$ hashes $\langle \{\boldsymbol{r}_i\}_{i\in [B]} \rangle_2$ to get $h'_2$ and $\mathcal{S}_{2}$ hashes $\langle \{\boldsymbol{r}_i\}_{i\in [B]} \rangle_1$ to get $h'_1$. 
			\State $\mathcal{S}_{\{1,2\}}$ check if $h'_1=h_1$ and $h'_2=h_2$, and output abort if the hashes do not meet. 
			
			\State // \emph{\underline{MAC check of reconstructed compressed noisy gradients:}}
			\State $\mathcal{S}_{\{1,2\}}$ compute and check if $ t=\boldsymbol{r}\cdot \boldsymbol{k}$ for each compressed noisy gradient, and output abort if any verification fails. 
			
			\State // \emph{\underline{Integrity check of decompression and aggregation results:}}
			\State $\mathcal{S}_{\{1,2\}}$ decompress the compressed noisy gradients and get $\{\mathbf{g}_i\}_{i\in [B]}$.
			
			\State $\mathcal{S}_{\{1,2\}}$ aggregate the decompressed noisy gradients and get $\overline{\mathbf{g}}^t\leftarrow \frac{\sum_{i=1}^B\mathbf{g}_i}{B}$. \label{code:malicious-fl:sample}
			
			\State $\mathcal{S}_{\{1,2\}}$ update the global model $\theta_{t+1} \leftarrow \prod_{\mathcal{G}}\left(\theta_t-\eta_t \overline{\mathbf{g}}^t\right)$. \label{code:malicious-fl:aggregation}
			
			\State $\mathcal{S}_{1}$ hashes $\theta_{t+1}$ to get $h_1$ and $\mathcal{S}_{2}$ hashes $\theta_{t+1}$ to get $h_2$. 
			
			\State $\mathcal{S}_{\{1,2\}}$ exchange the hashes $h_1,h_2$ and output abort if the hashes do not meet, else $\mathcal{S}_{\{1,2\}}$ broadcast $\theta_{t+1}$ to the clients. \label{code:mali-FL:28}
			
			\EndFor
		\end{algorithmic}
	\end{algorithm*}

	\section{Security Proof}
	\label{appdix:security_proof}
	In this section, we provide analysis for {\main} under the threat model defined in Section \ref{sec:threat_model}. 
	Recall that {\main} consists of two stages: local process and server process. 
	The local process is finished by each client without interaction with other clients and the servers. 
	Therefore, we only need to provide security analysis for the server process, which comprises the secret-shared shuffle process and post-shuffle server-side training process, including gradient subsampling, decompression, aggregation (line \ref{code:mali-FL:15} - \ref{code:mali-FL:28}). 
	Specifically, we consider a scenario where a probabilistic polynomial time (PPT) adversary $\mathcal{A}$ statically corrupts at most one of the three servers, allowing the corrupted server, under the control of $\mathcal{A}$, to behave maliciously.
	Below we follow the simulation-based paradigm \cite{Lindell17} to prove the security of our (maliciously secure) secret-shared shuffle protocol, and then prove the security of the server-side post-shuffle training process. 
	We first follow \cite{NDSS22} to define the ideal functionality of a maliciously secret-shared shuffle. 
	
	\begin{defn}
		\label{def:shuffle_functionality}
		\textit{\textbf{(Secret-shared shuffle ideal functionality $\mathcal{F}_{\textsf{shuffle}}$).}} 
		\textit{
			%
			The functionality of $\mathcal{F}_{\textsf{shuffle}}$ also interacts with three servers; at most, one server is controlled by the malicious adversary, and the rest servers are honest. 
			In each round: \\
			\textit{\textbf{Input.}}  $\mathcal{F}_{\textsf{shuffle}}$ receives $N$ gradients from clients. \\
			\textit{\textbf{Computation.}} $\mathcal{F}_{\textsf{shuffle}}$ initiates an empty table $\mathcal{T}$ and fills $\mathcal{T}$ with the received $N$ gradients. 
			Then $\mathcal{F}_{\textsf{shuffle}}$ shuffles $\mathcal{T}$ by sampling a uniformly random permutation $\pi:\mathbb{Z}_N\rightarrow\mathbb{Z}_N$ and applies it to $\mathcal{T}$, resulting $\mathcal{T}'\leftarrow \pi(\mathcal{T})$. 
			Next, $\mathcal{F}_{\textsf{shuffle}}$ sends $\mathcal{T}'$ to the adversary, who can respond with \textsf{continue} or \textsf{abort}. 
			Before responding with \textsf{continue}, the adversary may choose to modify any message $\mathcal{T}'_{\pi(i)}$. 
			\\
			\textit{\textbf{Output.}} 
			Let $\mathcal{T}''$ denote the resulting table with any modifications made by the adversary.
			$\mathcal{F}_{\textsf{shuffle}}$ outputs $\mathcal{T}''$ if the adversary sends \textsf{continue}. \\
		}
	\end{defn}

	We now prove the security of our secret-shared shuffle protocol against a malicious adversary. 
	Prior to simulating servers $\mathcal{S}_{\{1,2,3\}}$, we first presume that the secret-shared multiplication achieves an ideal functionality denoted as $\mathcal{F}_{\textsf{mult}}$, whose messages can be simulated by the simulator $\mathcal{S}_{\textsf{mult}}$. 
	This functionality accepts as inputs shares of two values and yields shares of their product. 
	Moreover, in our proof, we will model hash functions as random oracles when servers exchange hashes of their messages before disclosing them to each other.

	\begin{thm}
		\label{thm:mali-shuffle}
		\textit{\textbf{(Maliciously secure secret-shared shuffle)}}. 
		\textit{
			Assuming that PRG is a random oracle and that the secret-shared multiplication achieves $\mathcal{F}_{\textsf{mult}}$, the secret-shared shuffle protocol (Algorithm \ref{algo:malicious-shuffle}) achieves the functionality (Definition. \ref{def:shuffle_functionality}) in the presence of a malicious adversary. 
		}
	\end{thm}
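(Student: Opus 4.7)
The plan is to construct, for each server that the adversary might corrupt, an efficient simulator $\mathsf{Sim}$ that interacts with the ideal functionality $\mathcal{F}_{\textsf{shuffle}}$ and outputs a transcript computationally indistinguishable from the view of the corrupted server in a real execution. I will case-split on which of $\mathcal{S}_1$, $\mathcal{S}_2$, $\mathcal{S}_3$ is corrupted, and in each case lean on (i) the one-time-pad property of the correlated randomness $(\pi_1,\pi_2,\pi_{12},\boldsymbol{a}_1,\boldsymbol{a}'_2,\boldsymbol{b}_2)$ produced during the offline phase, (ii) the random-oracle assumption on the PRG, (iii) the simulator $\mathsf{Sim}_{\textsf{mult}}$ associated with $\mathcal{F}_{\textsf{mult}}$ for the secret-shared multiplications used to compute the verification tag $f$, and (iv) the binding property of the random-oracle hash used in the commit-then-reveal steps before shares of $f$ are opened.

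For a corrupted $\mathcal{S}_3$, the simulation is essentially trivial: $\mathsf{Sim}$ samples $\pi_1,\pi_2,\boldsymbol{a}_1,\boldsymbol{a}'_2,\boldsymbol{b}_2$ exactly as in the offline phase and recomputes $\boldsymbol{\Delta}$ honestly. During the two mid-protocol integrity checks, $\mathcal{S}_3$ exchanges with the honest servers only fresh additive shares of values it already knows or that are additively masked; $\mathsf{Sim}$ can therefore supply uniformly random field elements of the correct dimension. Since $\mathcal{S}_3$ never touches $\hat{\boldsymbol{x}}$ or any share of it, the real and simulated views are identically distributed. For a corrupted $\mathcal{S}_1$ (resp.\ $\mathcal{S}_2$), the incoming message $\boldsymbol{z}_2$ (resp.\ $\boldsymbol{z}_1$) is one-time-padded by $\boldsymbol{a}_1$ (resp.\ $\boldsymbol{a}'_2$), which was expanded from a PRG seed that the corrupted party cannot see; $\mathsf{Sim}$ thus replays these messages with uniform samples. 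Any tampering that the corrupted party performs on its outgoing message, on its share during reconstruction, or on its share of $f$ is extracted by $\mathsf{Sim}$ from the queries made to the hash oracle and to $\mathsf{Sim}_{\textsf{mult}}$, and is then forwarded to $\mathcal{F}_{\textsf{shuffle}}$ through its tamper interface so that the resulting output table $\mathcal{T}''$ in the ideal world matches what the honest servers would produce in the real world whenever the check does not abort.

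The main obstacle is proving that \emph{every} deviation by the corrupted server triggers an abort except with negligible probability $O(1/p)$, which is what rules out the online selective failure attack of \cite{NDSS24} and makes the shuffle truly random in the adversary's view. I will establish this via three claims. First, for the mid-protocol check performed after $\boldsymbol{z}_2$ is sent, note that only $\mathcal{S}_{\{1,3\}}$ participate in recomputing the blind MAC on the reconstructed $\hat{\boldsymbol{x}}$, so a malicious $\mathcal{S}_2$ that has injected an error $\boldsymbol{u}$ into $\boldsymbol{z}_2$ has no later opportunity to cancel it; the fresh random multiplier $\llbracket w\rrbracket$, combined with the Carter--Wegman MAC, then makes the reconstructed tag $f$ non-zero with probability $1-1/p$. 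A symmetric argument covers the check performed after $\boldsymbol{z}_1$, and the post-shuffle blind-MAC check handles tampering that occurs during the shuffle itself. Second, the hash-commitment step before shares of $f$ are revealed, modelled in the random-oracle model, prevents a malicious $\mathcal{S}_1$ or $\mathcal{S}_2$ from adaptively choosing its own share of $f$ to force $f=0$ after learning the other share. Third, the analogous hash-then-reveal pattern applied to the first $B$ reconstructed compressed gradients forces any tampering during reconstruction to manifest as a MAC mismatch. A union bound over all integrity checks yields the desired negligible simulation error, which together with the case analysis above establishes that the real and ideal views are computationally indistinguishable and hence that the protocol realises $\mathcal{F}_{\textsf{shuffle}}$ against a malicious adversary.
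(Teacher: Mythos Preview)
Your proposal is correct and follows essentially the same simulation-based strategy as the paper: case-splitting on the corrupted server, arguing that incoming messages $\boldsymbol{z}_1,\boldsymbol{z}_2$ are one-time-padded by correlated randomness unknown to the corrupted party, invoking the $\mathcal{F}_{\textsf{mult}}$ simulator for the Beaver multiplications in the tag computation, and using the hash-commit step to prevent adaptive share adjustment. The only minor deviations are that the paper's simulator for $\mathcal{S}_2$ explicitly solves $\boldsymbol{z}_1=\pi_2^{-1}(\langle\hat{\boldsymbol{y}}\rangle_2-\boldsymbol{\Delta})$ to maintain output consistency rather than sampling it uniformly, and your third claim about the first $B$ reconstructed gradients belongs to the subsequent theorem on post-shuffle operations rather than to this one.
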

	
	\begin{proof}
		We first describe the simulation for the servers $\mathcal{S}_{\{1,2\}}$ and then move on to the server $\mathcal{S}_3$. 
		
		\noindent\textbf{Simulator for $\mathcal{S}_1$}:
		We divide the adversary's view into different parts and simulate each of them accordingly. 
		Specifically, we first provide simulators for shuffling and then for the checks associated with shuffling, including our proposed integrity checks during shuffling and post-shuffle blind MAC verification. 
		
		\textit{Shuffling.} 
		In the secure shuffling process, the only information $\mathcal{S}_1$ receives is $\boldsymbol{z}_2 \leftarrow \langle \hat{\boldsymbol{x}}\rangle_2 - \boldsymbol{a}_1$. 
		Since $\boldsymbol{z}_2$ is generated at $\mathcal{S}_2$ by masking $\langle \hat{\boldsymbol{x}}\rangle_2$ with the uniformly random $\boldsymbol{a}_1$, $\boldsymbol{z}_2$ is uniformly random in $\mathcal{S}_1$'s view. 
		The rest of the messages are identical to the information sent by $\mathcal{S}_1$ in the real protocol. 
		%
		
		\textit{Integrity check for $\boldsymbol{z}_2$ sent from $\mathcal{S}_2$.} 
		During the shuffling phase, we provide an integrity check for $\boldsymbol{z}_2$ immediately after $\mathcal{S}_2$ sends it to $\mathcal{S}_1$, as illustrated in Fig. \ref{fig:mac_check_z2} and Algorithm \ref{algo:malicious-shuffle}. 
		Step (1) is a local computation. 
		Step (2) is trivial to simulate since $\mathcal{S}_1$ receives nothing in this step. 
		Step (3) is also trivial to simulate because the secret shares are randomly generated at $\mathcal{S}_3$ and thus are uniformly random in $\mathcal{S}_1$'s view. 
		Step (4) is trivial to simulate because it only requires the basic addition operation in the secret sharing domain. 
		In Step (5), the simulator provides the adversary with random strings to simulate Beaver triples and utilizes simulator $\mathcal{S}_{\textsf{mult}}$ for simulating Beaver multiplications on shares. 
		Finally, when hashing and reconstructing the verification tag $f$, the simulator uses the negation of the adversary's share if the adversary executes the MAC operation honestly. 
		Otherwise, the simulation samples a random value from $\mathbb{Z}_p$. 
		If the adversary deviates from the protocol in any case, the simulation sends an \textsf{abort} signal to the ideal functionality, resulting in the output of $\bot$. 
		Moreover, if the adversary is detected for sending a message whose hash does not match the hash it previously sent, the simulation sends an \textsf{abort} signal to the ideal functionality, also leading to the output of $\bot$.
		
		\textit{Integrity check for $\boldsymbol{z}_1$ sent from $\mathcal{S}_1$.} 
		During the shuffling phase, an integrity check for $\boldsymbol{z}_1$ is also provided after $\mathcal{S}_1$ sends it to $\mathcal{S}_2$, as illustrated in Fig. \ref{fig:mac_check_z1} and Algorithm \ref{algo:malicious-shuffle}. 
		Here, Steps (1) - (5) are trivial to simulate because $\mathcal{S}_1$ is excluded from the integrity check and receives nothing in these steps.

		\textit{Post-shuffle blind MAC verification.} 
		After the shuffling phase, the simulator checks the table $\mathcal{T}'$ received from the ideal functionality, identifying the locations where the messages generated by the adversary have been placed. 
		%
		%
		The simulator provides the adversary with random strings to simulate Beaver triples, employs simulator $\mathcal{S}_{\textsf{mult}}$ to simulate Beaver multiplications for honest shares, and exactly adheres to the protocol for shares of adversary-controlled messages.
		When hashing and reconstructing the verification tag $f$, the simulator uses the negation of the adversary's share if the adversary executes the MAC operation honestly. 
		Otherwise, the simulation samples a random value from $\mathbb{Z}_p$. 
		In any case, where the adversary deviates from the protocol, the simulation sends \textsf{abort} to the ideal functionality, leading to the output of $\bot$. 
		Furthermore, if the adversary is detected for sending a message whose hash does not match the hash it previously sent, the simulation sends \textsf{abort} to the ideal functionality, leading to the output of $\bot$.

		\noindent\textbf{Simulator for $\mathcal{S}_2$}:
		\textit{Shuffling.} 
		To simulate $\mathcal{S}_2$, the simulation must simulate the message $\boldsymbol{z}_1$ sent by $\mathcal{S}_1$. 
		The simulation can solve for the value $\boldsymbol{z}_1=\pi_2^{-1}\left(\langle \hat{\boldsymbol{y}}\rangle_2-\boldsymbol{\Delta}\right)$ given $\mathcal{S}_2$'s input $\pi_2, \boldsymbol{\Delta}$ and output $\langle \hat{\boldsymbol{y}}\rangle_2 \leftarrow \pi_2\left(\boldsymbol{z}_1\right)+\boldsymbol{\Delta}$. 
		The rest of the simulation follows the protocol honestly using $\mathcal{S}_2$'s inputs. 
		This simulation is distributed identically to the view of $\mathcal{S}_2$ in the real protocol because all the inputs,
		outputs and messages are exactly equal to the values that would be sent and received in the real protocol. 
		
		\textit{Integrity check for $\boldsymbol{z}_2$ sent from $\mathcal{S}_2$.} 
		The simulation for $\mathcal{S}_2$ regarding this check is similar with the aforementioned simulation for $\mathcal{S}_1$, because $\mathcal{S}_2$ is excluded from the integrity check and receives nothing in these steps.

		\textit{Integrity check for $\boldsymbol{z}_1$ sent from $\mathcal{S}_1$.} 
		The simulation for $\mathcal{S}_2$ regarding this check also follows the aforementioned simulation for $\mathcal{S}_1$.
		In Steps (1) and (2), $\mathcal{S}_2$ receives nothing. 
		In Step (3), the secret shares are randomly generated at $\mathcal{S}_3$ and thus are uniformly random in $\mathcal{S}_2$'s view. 
		In Step (4), $\mathcal{S}_2$ receives nothing because it only requires the basic addition operation in the secret sharing domain. 
		The simulation of Step (5) is trivial because the simulator provides the adversary with random strings and utilizes simulator $\mathcal{S}_{\textsf{mult}}$ for simulating Beaver multiplications on shares. 
		The simulation for hashing and reconstructing the verification tag is similar to the simulation of $\mathcal{S}_1$ in integrity check for $\boldsymbol{z}_2$ sent from $\mathcal{S}_2$. 
		Therefore, the simulator for $\mathcal{S}_2$ is trivial to construct.

		\textit{Post-shuffle blind MAC verification.} 
		The simulation for $\mathcal{S}_2$ trivially follows the aforementioned simulation for $\mathcal{S}_1$ because $\mathcal{S}_{\{1,2\}}$ are interchangeable in this process.

		\noindent\textbf{Simulator for $\mathcal{S}_3$}: Regarding the simulation of shuffling and post-shuffle blind MAC verification process,the view of $\mathcal{S}_3$ only consists of the random seeds it receives from $\mathcal{S}_{\{1,2\}}$, the messages it sends to $\mathcal{S}_{\{1,2\}}$, and the output of the protocol. 
		The random seeds are simulated by random strings. 
		For an adversary adhering strictly to the protocol, the simulator refrains from sending \textsf{abort} to the ideal functionality, maintains the table $\mathcal{T}'$ unchanged when provided with the opportunity, and obtains the output $\mathcal{T}''$ from the ideal functionality. 
		For an adversary incorrectly sending a malformed shuffle correlation or malformed Beaver triple for the blind MAC verification, the simulation responds with \textsf{abort} to the ideal functionality, resulting in the output $\bot$. 
		Regarding the simulation of integrity check for $\boldsymbol{z}_2$ sent from $\mathcal{S}_2$ and the integrity check for $\boldsymbol{z}_1$ sent from $\mathcal{S}_1$, the secret shares are randomly generated at $\mathcal{S}_{\{1,2\}}$ and thus are uniformly random in $\mathcal{S}_3$'s view at Step (3) of both checks. 
		At Step (4), $\mathcal{S}_3$ receives nothing. 
		At Step (5), the simulation for calculating the verification tag and outputting it is similar to the simulation $\mathcal{S}_{\{1,2\}}$ because $\mathcal{S}_3$ is interchangeable with the other server at this step. 
		%
	\end{proof}

	Then we demonstrate that the servers $\mathcal{S}_{\{1,2\}}$ can detect occurrences of deviation behavior following a maliciously secure secret-shared shuffle, particularly during the subsampling, decompression, and aggregation processes. 
	Recall that in our maliciously secure {\main}, $\mathcal{S}_{3}$ stays offline after the secure shuffling process at each round, therefore we do not need to consider the behavior of $\mathcal{S}_{3}$ in the following analysis. 
	
	\begin{thm}
		After the secret-shared shuffle, during the subsequent subsampling, decompression, and aggregation operations (collaboratively conducted by $\mathcal{S}_{\{1,2\}}$), a malicious server can be detected by the other honest server during the verification process. 
	\end{thm}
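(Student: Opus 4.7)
The plan is to walk through each of the three post-shuffle operations in the order they appear in Algorithm \ref{algo:malicious-SGD-workflow} (lines \ref{code:mali-FL:15}--\ref{code:mali-FL:28}) and argue that any deviation by a malicious $\mathcal{S}_i$ ($i\in\{1,2\}$) is caught by the honest $\mathcal{S}_{3-i}$ with overwhelming probability, reducing each check to either the collision-resistance of the hash function or the unforgeability of the Carter--Wegman MAC already used in Section \ref{sec:protocol:verify_shuffle}.

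First I would handle the subsampling/reconstruction step. The adversary has two options for cheating: tamper with its share before sending it, or lie about what it received. The hash-commit-then-reveal procedure (exchanging $h_1,h_2$ before disclosing $\langle\{\boldsymbol{r}_i\}\rangle_1,\langle\{\boldsymbol{r}_i\}\rangle_2$) forces the adversary to commit to its share before seeing the honest share, so any subsequent modification is detected with probability $1-\mathsf{negl}(\kappa)$ by the collision-resistance of $\mathrm{SHA256}$. This leaves only the option of committing to a maliciously chosen share in the first place. But then, by Theorem \ref{thm:mali-shuffle}, the secret-shared shuffle already attaches unforgeable MACs $(t_i,\boldsymbol{k}_i)$ to each gradient, and the post-reconstruction check $t_i \stackrel{?}{=} \boldsymbol{k}_i\cdot\boldsymbol{r}_i$ will fail except with probability at most $l/p$ per gradient by the Carter--Wegman analysis of \cite{WegmanC81}, which is negligible for the 128-bit prime used.

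Next I would treat decompression and aggregation jointly, since Algorithm \ref{algo:malicious-SGD-workflow} verifies them with a single hash comparison on $\theta_{t+1}$. The key observation here is that at this point both $\mathcal{S}_1$ and $\mathcal{S}_2$ hold \textit{identical} inputs $\{\boldsymbol{r}_i\}_{i\in[B]}$ (they have just been jointly reconstructed and MAC-verified), and $\mathsf{NoisyGradDcmp}$, the arithmetic averaging, the projection $\prod_{\mathcal{G}}$, and the update rule are all deterministic functions of those inputs and the common $\theta_t$. Thus the honest server's computation yields a unique value $\theta_{t+1}^{\text{hon}}$, and any $\theta_{t+1}^{\text{adv}}\neq\theta_{t+1}^{\text{hon}}$ broadcast by a malicious server is detected by the hash exchange, again by collision-resistance.

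The main obstacle, which I would address most carefully, is ruling out the subtle case where $\mathcal{S}_i$ honestly runs the MAC check on a tampered gradient share that happens to verify. Here I would formalize the bound: for each of the $B$ reconstructed gradients the probability that the adversary's additive offset $\boldsymbol{\delta}\neq\mathbf{0}$ satisfies $\boldsymbol{k}\cdot\boldsymbol{\delta}=0$ over the choice of the client-generated MAC key $\boldsymbol{k}\in\mathbb{Z}_p^l$ is $1/p$, so a union bound gives total soundness error at most $B/p$, negligible in $\kappa=128$. Combining this with the collision-resistance bounds from the two hash comparison stages yields the overall guarantee that any deviation by the corrupted server is caught, so the protocol either outputs the correct $\theta_{t+1}$ or aborts.
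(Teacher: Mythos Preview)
Your proposal is correct and follows essentially the same three-step decomposition as the paper's own proof: hash-commit-then-reveal for the share reconstruction, Carter--Wegman MAC check on each reconstructed $(\boldsymbol{r}_i,\boldsymbol{k}_i,t_i)$, and a final hash comparison on the deterministically computed $\theta_{t+1}$. The paper's argument is in fact more informal than yours---it does not state the explicit $1/p$ (or $B/p$) forgery bounds or the collision-resistance reductions---so your write-up would be a strict strengthening in rigor while remaining faithful to the intended approach.
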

	
	\begin{proof}
		Recalling from Section \ref{sec:threat_model}, we operate within a three-server honest-majority setting, indicating that, at most, one server may be malicious in our scenario. 
		Before reconstructing the first $B$ shuffled gradients, we let $\mathcal{S}_{\{1,2\}}$ exchange hashes of their shares of the first $B$ shuffled gradients. 
		Therefore, if a server detects that the share it receives does not match the hash, the misbehavior of another server will be detected, and the protocol will abort. 
		Next, to detect if a malicious server tamper with secret shares, $\mathcal{S}_{\{1,2\}}$ both check each (compressed) gradient $\boldsymbol{r}_i$'s MAC by verifying if $\sum_{j=1}^{\ell} \boldsymbol{k}_i[j] \cdot \boldsymbol{r}_i[j] - t_i$ equals zero. 
		If the verification succeeds, it ensures the security of the subsampling process. 
		
		The integrity check for gradient decompression and aggregation is more facilitated because we only need to check the consistency of the output model at $\mathcal{S}_{\{1,2\}}$. 
		The reason we can check in this way lies in that $\mathcal{S}_{\{1,2\}}$, if both honest, will get the same decompressed gradients $\{\mathbf{g}_{i}\}_{i\in[B]}$ due to the security of PRG, and resulting in the same output model $\theta_{t+1}$. 
		Therefore, if the hashes do not match, the protocol outputs abort. 
		The hashes possess high entropy, ensuring that a malicious server incorrectly performing gradient decompression and aggregation gains no information from the hash of the other server.

	\end{proof}

        \section{Proof of Theorem \ref{thm:analysis_overview}}
	\label{appendix:analysis_overview}
	\subsubsection{Proof of Privacy}
	Since {\main} consists of a sequence of $T$ adaptive mechanisms $\mathcal{M}_1,\cdots,\mathcal{M}_T$, to analyze the total privacy guarantee, we first focus on the RDP of each $\mathcal{M}_t$, where $t\in[T]$. 
	
	\begin{lem}
		\label{lem:privacy_amplification}
		For subsampling rate $\gamma = \frac{B}{M}$ and \textsf{NoisyGradCmpr} that satisfies $\varepsilon_0$-LDP, where $\varepsilon_0 \leq \frac{1}{2}\log (\gamma M / \log (1 / \tilde{\delta}))$ and $\tilde{\delta}\in(0,1)$. 
		For a single iteration $t\in[T]$, the mechanism that composes shuffling and subsampling is ($\lambda,\frac{\lambda \log^2(1+\gamma(e^{\tilde{\varepsilon}} - 1)) }{2}$)-RDP, where $\tilde{\varepsilon}=\mathcal{O}\left(\min \left\{\varepsilon_0, 1\right\} e^{\varepsilon_0} \sqrt{{\log (1 / \tilde{\delta})}/{\gamma M}}\right)$.
	\end{lem}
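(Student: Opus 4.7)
The plan is to decompose the single-iteration mechanism $\mathcal{M}_t$ into three conceptual stages and chain their privacy guarantees together. The stages correspond to (i) each client locally applying \textsf{NoisyGradCmpr}, which by Lemma~\ref{lem:NoisyGradCmpr} produces an $\varepsilon_0$-LDP message per sampled gradient; (ii) the servers performing a uniformly random shuffle of all $N=ns$ such messages; and (iii) only $B$ out of the $N$ shuffled messages being exposed, which, combined with the clients' uniform sub-sampling of $s$ out of $r$ local points, makes each of the $M=nr$ underlying data points appear in the released output with the same probability $\gamma = B/M$.

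The first technical step is privacy amplification by shuffling. I would invoke the shuffling theorem of Balle, Bell, Gasc\'on, and Nissim \cite{BalleBGN19} applied to the bundle of $\varepsilon_0$-LDP outputs, using the hypothesis $\varepsilon_0 \le \tfrac{1}{2}\log(\gamma M/\log(1/\tilde\delta))$ to remain within the regime in which that theorem gives its sharpest bound. This yields an $(\tilde\varepsilon,\tilde\delta)$-DP guarantee with $\tilde\varepsilon = \mathcal{O}(\min\{\varepsilon_0,1\}e^{\varepsilon_0}\sqrt{\log(1/\tilde\delta)/(\gamma M)})$, matching the form stated in the lemma. The second technical step is privacy amplification by subsampling: using the exchangeability observation from stage (iii) that the composed client- and server-side selection is equivalent to a single uniform sub-sample at rate $\gamma$, I would apply the subsampling theorem of Balle, Barthe, and Gaboardi \cite{BalleBG18} on top of the above bound, sharpening the DP parameter to $\varepsilon' = \log(1+\gamma(e^{\tilde\varepsilon}-1))$.

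The third step is the conversion from approximate DP to RDP. Invoking the standard result that an $\varepsilon'$-DP mechanism is $(\lambda, \lambda (\varepsilon')^2 / 2)$-RDP and substituting $\varepsilon' = \log(1+\gamma(e^{\tilde\varepsilon}-1))$ produces exactly the claimed bound $\lambda \log^2(1+\gamma(e^{\tilde\varepsilon}-1)) / 2$, closing the argument.

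The main obstacle is stage (iii) of the decomposition: I must justify that the client-side sampling of $s$ out of $r$ points composed with the server-side selection of the first $B$ out of $N$ positions after a uniformly random shuffle is \emph{jointly} distributed as a uniform sub-sample of $B$ out of $M$ data points. This demands a careful exchangeability argument that leverages the uniformity of the shuffle permutation together with independence of the per-client sub-samples; only after establishing this equivalence can the subsampling amplification theorem be applied cleanly at rate $\gamma = B/M$ on top of the shuffle amplification bound. A secondary subtlety is verifying that the hypothesis $\varepsilon_0 \le \tfrac{1}{2}\log(\gamma M/\log(1/\tilde\delta))$ is precisely the regime condition required by the shuffling theorem, so that the two amplification steps can be composed within their common domains of validity without losing constants hidden in the $\mathcal{O}(\cdot)$.
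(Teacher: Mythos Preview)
Your proposal is correct and follows essentially the same route as the paper: apply the shuffling amplification bound of \cite{BalleBGN19} to obtain an $(\tilde\varepsilon,\tilde\delta)$-DP guarantee, compose it with the subsampling amplification of \cite{BalleBG18} at rate $\gamma$ to get $\varepsilon'=\log(1+\gamma(e^{\tilde\varepsilon}-1))$, and then convert to RDP via the $(\lambda,\lambda(\varepsilon')^2/2)$ bound (the paper attributes this step to \cite{BunS16} via \cite{ErlingssonFMRTT19}). Your explicit attention to the exchangeability argument justifying the effective rate $\gamma=B/M$ is in fact more careful than the paper's proof, which simply defers that point to \cite{AISTATS21}.
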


	\begin{proof}
		To analyze the RDP of a single iteration, we first analyze the DP of an iteration and then follow \cite{ErlingssonFMRTT19} to get RDP using the fact from \cite{BunS16}. 
		From the privacy amplification by shuffling result \cite{BalleBGN19}, the shuffling mechanism enables ($\tilde{\varepsilon},\tilde{\delta}$)-DP for $\tilde{\delta}\in(0,1)$ and $\varepsilon_0 \leq \frac{1}{2}\log (\gamma M / \log (1 / \tilde{\delta}))$, we have 
		\begin{equation}
			\label{eq:tilde_eps}
			\tilde{\varepsilon}=\mathcal{O}\left(\min \left\{\varepsilon_0, 1\right\} e^{\varepsilon_0} \sqrt{{\log (1 / \tilde{\delta})}/{\gamma M}}\right).
		\end{equation}

		From the privacy amplification by subsampling result \cite{BalleBG18}, for a mechanism $\mathcal{M}$ satisfying (${\varepsilon},{\delta}$)-DP, composing the subsampled mechanism with $\mathcal{M}$ enables (${\varepsilon}',{\delta}'$)-DP with ${\varepsilon}' = \log(1+\gamma(e^{\varepsilon}-1))$ and ${\delta}'=\gamma\delta$. 
		Following the results from \cite{AISTATS21}, we utilize privacy amplification by shuffling, and then compose the amplified results with privacy amplification by subsampling. 
		In this way, the mechanism that composes shuffling and subsampling is $(\log(1+\gamma(e^{\tilde{\varepsilon}} - 1)),\gamma\tilde{\delta})$-DP, where $\tilde{\varepsilon}$ follows Eq. \ref{eq:tilde_eps}, subsampling rate $\gamma = \frac{ks}{nr}$, $\varepsilon_0 \leq \frac{1}{2}\log (\gamma M / \log (1 / \tilde{\delta}))$ and $\tilde{\delta}\in(0,1)$.
		
		Next, we follow \cite{ErlingssonFMRTT19} to get RDP using the fact from \cite{BunS16} and get the RDP of each iteration, i.e., ($\lambda,\frac{\lambda \log^2(1+\gamma(e^{\tilde{\varepsilon}} - 1)) }{2}$)-RDP. 
	\end{proof}
	
	Based on the result of Lemma \ref{lem:privacy_amplification}, we can obtain the RDP of $T$ iterations by using Lemma \ref{lem:sequantial}. 
	Next, using Lemma \ref{lem:rdp_to_dp}, we can convert RDP to DP and finally arrive at Theorem \ref{thm:analysis_overview}.

	\subsubsection{Proof of Communication}
	The naive LDP mechanism DJW18 \cite{duchi2018minimax} requires $de$ bits to represent a $d$-dimensional vector $\boldsymbol{v}$, where $e$ denotes the number of bits to represent a dimension. 
	This mechanism, if directly adopted for the secret-shared shuffle of $N$ noisy gradients, incurs a communication complexity of $\mathcal{O}(Nd)$ of both client-server communication and inter-server communication. 
	The client-server communication cost is $\mathcal{O}(Nd)$ because $N$ gradients are secret-shared among the servers. 
	The inter-server communication cost is also $\mathcal{O}(Nd)$. 
	This is because the communication cost of the secret-shared shuffle protocol we adopt from \cite{NDSS22} is $\mathcal{O}(Nl)$,where the message length $l$ is associated with $de$ in the naive DJW18 mechanism. 
	
	In our proposed \textsf{NoisyGradCmpr} mechanism, we reduce this communication cost by using PRGs to losslessly compress the perturbed gradient vectors. 
	The idea is as follows. 
	Since the output perturbed vector in the naive DJW18 is generated using a sign bit $sgn$ and a $d$-dimensional random vector $\boldsymbol{v}$, we can use PRG to compress $\boldsymbol{v}$ by inputting a random seed $s$ to PRG. 
	Therefore, we only need to transmit \textit{fixed-length} compressed vector, comprising a sign bit $sgn$ and a random seed $s$. 
	With the same PRG to recover $\boldsymbol{v}$, servers could obtain the same output perturbed vector as DJW18. 
	In this way, our compression method significantly reduces the client-server and inter-server communication complexity from $\mathcal{O}(Nd)$ to $\mathcal{O}(N)$. 
	
	\subsubsection{Proof of Convergence}
	At iteration $t\in[T]$ of {\main}, the $ks$ received compressed and perturbed gradients are averaged as $\overline{\mathbf{g}}^t\leftarrow \frac{\sum_{i=1}^B\mathbf{g}_i}{B}$, and the global model is updated as $\theta_{t+1} \leftarrow \prod_{\mathcal{G}}\left(\theta_t-\eta_t \overline{\mathbf{g}}^t\right)$. 
	Since our proposed noisy gradient compression mechanism \textsf{NoisyGradCmpr}, used in couple with the decompression mechanism \textsf{NoisyGradDcmp}, is unbiased, the average gradient $\overline{\mathbf{g}}^t$ is also unbiased, i.e., we have $\mathbb{E}\left[\overline{\mathbf{g}}^t\right]=\nabla_{\theta_t} F\left(\theta_t\right)$. 
	Now we demonstrate that the second moment of the $\overline{\mathbf{g}}^t$ is bounded: 
	\begin{equation}
		\label{eq:second_moment}
		\begin{aligned}
			\mathbb{E}\left\|\overline{\mathbf{g}}^t\right\|_2^2 & =\left\|\mathbb{E}\left[\overline{\mathbf{g}}^t\right]\right\|_2^2+\mathbb{E}\left\|\overline{\mathbf{g}}^t-\mathbb{E}\left[\overline{\mathbf{g}}^t\right]\right\|_2^2 \\
			& \stackrel{(a)}{\leq}  L^2 + \mathbb{E}\left\|\overline{\mathbf{g}}^t-\mathbb{E}\left[\overline{\mathbf{g}}^t\right]\right\|_2^2 \\
			& \stackrel{(b)}{\leq}  L^2+\frac{14 L^2  d}{k s}\left(\frac{e^{\varepsilon_0}+1}{e^{\varepsilon_0}-1}\right)^2 \\
			& \stackrel{(c)}{=}  L^2+\frac{14 L^2  d}{\gamma M}\left(\frac{e^{\varepsilon_0}+1}{e^{\varepsilon_0}-1}\right)^2,
		\end{aligned}
	\end{equation}

	\noindent Step (a) follows from the fact that $\ell(\theta,\boldsymbol{x})\leq L$ \cite{Shwartz12}, which implies $\left\|\nabla_{\theta_t} F\left(\theta_t\right)\right\| \leq L$. 
	Step (b) follows from \cite[Corollary 1]{AISTATS21}. 
	Step (c) uses $\gamma = \frac{ks}{M}$. 
	For simplicity, by letting $G=L\sqrt{1+\frac{14d}{\gamma M}(\frac{e^{\varepsilon_0}+1}{e^{\varepsilon_0}-1})^2}$, we have $\mathbb{E}\left\|\overline{\mathbf{g}}^t\right\|_2^2 \leq G^2$.
	
	Then the standard bound on the convergence of SGD for convex functions from \cite{Shamir013} could be used, which is given as follows: 
	\begin{lem}
		\label{lem:sgd}
		\textit{Let $F(\theta)$ be a convex function, and the set $\mathcal{G}$ has diameter $D$. Consider a SGD algorithm $\theta_{t+1} \leftarrow \prod_{\mathcal{G}}\left(\theta_t-\eta_t {\mathbf{g}}^t\right)$, where ${\mathbf{g}}^t$ satisfies $\mathbb{E}\left[\mathbf{g}^t\right]=\nabla_{\theta_t} F\left(\theta_t\right)$ and $\mathbb{E}\left\|\mathbf{g}^t\right\|_2^2 \leq G^2$. 
			By setting $\theta^*=\mathop{\arg\min}_{\theta \in \mathcal{G}} F(\theta)$ and $\eta_t = \frac{D}{G\sqrt{t}}$, we get: }
		$$
		\begin{aligned}
			\mathbb{E}\left[F\left(\theta_t\right)\right]-F(\theta^*) & \leq 2 D G \frac{2+\log (T)}{\sqrt{T}} \\
			& =\mathcal{O}\left(D G \frac{\log (T)}{\sqrt{T}}\right). 
		\end{aligned}
		$$
	\end{lem}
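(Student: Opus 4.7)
The plan is to follow the classical projected SGD analysis and then push it through to the last iterate via a suffix-averaging argument in the style of Shamir and Zhang. The proof proceeds in three stages: (i) establish the standard one-step descent inequality, (ii) convert it into a bound for a suffix average of the iterates, and (iii) use a geometric telescoping over shrinking windows to transfer the suffix-average bound to the last iterate $\theta_T$.

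First, by non-expansiveness of the Euclidean projection $\prod_{\mathcal{G}}$ onto the convex set $\mathcal{G}$ and expanding $\|\theta_{t+1}-\theta^*\|_2^2$, I would obtain
\[
\|\theta_{t+1}-\theta^*\|_2^2 \leq \|\theta_t-\theta^*\|_2^2 - 2\eta_t\langle \mathbf{g}^t,\,\theta_t-\theta^*\rangle + \eta_t^2\|\mathbf{g}^t\|_2^2.
\]
Taking conditional expectation, using unbiasedness $\mathbb{E}[\mathbf{g}^t \mid \theta_t]=\nabla F(\theta_t)$, invoking convexity via $\langle \nabla F(\theta_t),\theta_t-\theta^*\rangle \geq F(\theta_t)-F(\theta^*)$, and inserting the second-moment bound $\mathbb{E}\|\mathbf{g}^t\|_2^2\leq G^2$ yields the one-step inequality
\[
2\eta_t\,\mathbb{E}[F(\theta_t)-F(\theta^*)] \leq \mathbb{E}\|\theta_t-\theta^*\|_2^2 - \mathbb{E}\|\theta_{t+1}-\theta^*\|_2^2 + \eta_t^2 G^2.
\]

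Second, I would sum this inequality from $t=k$ to $T$, apply $\|\theta_k-\theta^*\|_2\leq D$ (since $\mathcal{G}$ has diameter $D$), plug in $\eta_t=D/(G\sqrt{t})$, and use the elementary estimates $\sum_{t=k}^T 1/\sqrt{t}\leq 2(\sqrt{T}-\sqrt{k-1})$ and $\sum_{t=k}^T 1/t\leq 1+\log(T/k)$. Dividing by $\sum_{t=k}^T 2\eta_t$ produces a bound of order $DG/\sqrt{T-k}$ on the suffix average $\tfrac{1}{T-k+1}\sum_{t=k}^T \mathbb{E}[F(\theta_t)-F(\theta^*)]$, which is the classical averaged-iterate rate restricted to the tail of the trajectory.

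The main obstacle is the third stage: passing from a suffix-average bound to the last iterate $\theta_T$ for a merely convex (not strongly convex) $F$. A straightforward averaging argument only bounds $F$ evaluated at $\bar\theta=\tfrac{1}{T}\sum_t\theta_t$, while $F(\theta_t)$ need not be monotone along the trajectory, so the last-iterate bound cannot be read off directly. I would resolve this by the Shamir–Zhang telescoping trick: define $S_k:=\tfrac{1}{k}\sum_{t=T-k+1}^T \mathbb{E}[F(\theta_t)-F(\theta^*)]$ so that $S_1=\mathbb{E}[F(\theta_T)-F(\theta^*)]$, apply the one-step inequality on the window $[T-k,T]$ together with convexity to bound $S_k-S_{k+1}$ by $\mathcal{O}(\eta_{T-k}G^2)$, and then telescope $S_1 = S_T + \sum_{k=1}^{T-1}(S_k-S_{k+1})$. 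The summation $G^2\sum_{k=1}^{T}\eta_{T-k}/k$ contributes exactly the extra $\log T$ factor that distinguishes the last-iterate rate from the averaged one. Combining with the $\mathcal{O}(DG/\sqrt{T})$ bound on $S_T$ from stage two and consolidating constants yields the stated $2DG(2+\log T)/\sqrt{T}=\mathcal{O}(DG\log(T)/\sqrt{T})$ rate.
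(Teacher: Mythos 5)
Your proposal is correct, but note that the paper does not prove this lemma at all: it is imported verbatim as the known last-iterate bound from Shamir and Zhang (the reference cited as \cite{Shamir013}), and your three-stage sketch (projection non-expansiveness and the one-step inequality, suffix averages $S_k$, and the telescoping over shrinking windows that produces the extra $\log T$) is essentially a reconstruction of that reference's own proof, including the constant $2DG(2+\log T)/\sqrt{T}$ obtained with $\eta_t = D/(G\sqrt{t})$.
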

	
	Since {\main} satisfies the assumptions of Lemma \ref{lem:sgd}, the result of $\mathbb{E}\left\|\mathbf{g}^t\right\|_2^2 \leq G^2$ in Eq. \ref{eq:second_moment} could be used to guarantee that the output $\theta_T$ of {\main} satisfies: 
	$$
	\begin{aligned}
		&\mathbb{E}\left[F\left(\theta_t\right)\right]-F(\theta^*) \leq
		\\
		&\mathcal{O}\left(D L \left(1+\sqrt{\frac{14 d}{\gamma M}}\left(\frac{e^{\varepsilon_0}+1}{e^{\varepsilon_0}-1}\right)\right) \frac{ \log (T) }{\sqrt{T}}\right),
	\end{aligned}
	$$
	
	\noindent where we follow \cite{AISTATS21} to use the inequality $\sqrt{1+\frac{14 d}{\gamma M}(\frac{e^{\varepsilon_0}+1}{e^{\varepsilon_0}-1})^2} \leq(1+\sqrt{\frac{14 d}{\gamma M}}(\frac{e^{\varepsilon_0}+1}{e^{\varepsilon_0}-1}))$. 
	
	This completes the proof of the third part of Theorem \ref{thm:analysis_overview}.

\end{document}